\documentclass[11pt]{article}
\usepackage[active]{srcltx}
\usepackage[T1]{fontenc}
\usepackage[utf8]{inputenc}
\usepackage{lmodern}
\usepackage[a4paper]{geometry}
\usepackage{amsmath,amssymb,amsthm,mathabx}
\usepackage{pgf,tikz}
\usepackage{mathrsfs}
\usepackage{graphicx}
\usepackage{subcaption}
\usepackage{mwe}
\usepackage[colorlinks=true,breaklinks=true,linkcolor=blue,urlcolor=green,citecolor=red]{hyperref} 
\usepackage{yfonts}
\usetikzlibrary{arrows}

\DeclareMathOperator{\ad}{ad}

\newcommand{\te}{\theta}

\newcommand{\R}{\mathbb{R}}
\renewcommand{\epsilon}{\varepsilon}

\newcommand{\p}{\partial}
\newtheorem{theorem}{Theorem}

\newtheorem{proposition}{Proposition}
\newtheorem{corollary}{Corollary}
\theoremstyle{remark}

\newtheorem{remark}{Remark}
\newtheorem{definition}{Definition}

\newcommand{\s}{\sigma}
\renewcommand{\l}{\lambda}

\renewcommand{\R}{\mathbb{R}}

\usepackage{soul}

\title{Optimal control of oscillatory neuronal models with applications to communication through coherence }
\author{Michael Orieux$^{1}$, Antoni Guillamon$^{1,2,3}$, Gemma Huguet$^{1,2,3}$ \\
\parbox{12.5cm}{
  \small
  \begin{itemize}
  \item[$^1$]
    Departament de Matem\`atiques, Universitat Polit\`ecnica de Catalunya, Barcelona, Spain 
  \item[$^2$]
    Institut de Matem\`atiques de la UPC - Barcelona Tech (IMTech), Barcelona, Spain 
  \item[$^ 3$]
  Centre de Recerca Matem\`atica, Barcelona, Spain
   \end{itemize}
 }}

\begin{document}

\maketitle

\textbf{Keywords:} Optimal control theory, communication through coherence, synchronization, phase dynamics, phase-amplitude variables.

\textbf{MSC Codes:} 92B25, 37N25, 49M99

\abstract{
Macroscopic oscillations in the brain are involved in various cognitive and physiological processes, yet their precise function is not not completely understood. Communication Through Coherence (CTC) theory proposes that these rhythmic electrical patterns might serve to regulate the information flow between neural populations. Thus, to communicate effectively, neural populations must synchronize their oscillatory activity, ensuring that input volleys from the presynaptic population reach the postsynaptic one at its maximum phase of excitability. We consider an Excitatory-Inhibitory (E-I) network whose macroscopic activity is described by an exact mean-field model. The E-I network receives periodic inputs from either one or two external sources, for which effective communication will not be achieved in the absence of control. We explore strategies based on optimal control theory for phase-amplitude dynamics to design a control that sets the target population in the optimal phase to synchronize its activity with a specific presynaptic input signal and establish communication. The control mechanism resembles the role of a higher cortical area in the context of selective attention. To design the control, we use the phase-amplitude reduction of a limit cycle and leverage recent developments in this field in order to find the most effective control strategy regarding a defined cost function. Furthermore, we present results that guarantee the local controllability of the system close to the limit cycle.
}
\section{Introduction}

Macroscopic oscillations, spanning a frequency range from a few to a hundred hertz, are frequently observed in the brain \cite{Buzsaki06}, but their role is not completely understood. Communication Through Coherence (CTC) theory \cite{Fries05, Fries15}, suggests a functional role for oscillations in the context of information transmission. This theory postulates that synchronization plays a pivotal role in enhancing communication between neuronal groups.

According to CTC theory, communication between two neuronal groups is established when the oscillatory input from the presynaptic group (the sender) reaches the postsynaptic group (the receiver) at its maximum phase of excitability. This mechanism can effectively implement selective attention  \cite{Fries01, Fries08, Schoffelen11, Bosman12}. The primary goal of selective attention is to transmit the information related to the stimulus that an individual is consciously attending to. To achieve this, the oscillatory activity of the postsynaptic group needs to be synchronized with the input from the presynaptic group that codes for the attended stimulus. Simultaneously, selective attention involves the suppression of irrelevant or distracting inputs. In the context of CTC theory, this means that the coordination between pre and postsynaptic groups should be such that oscillatory inputs from other, non-attended, presynaptic groups are effectively suppressed.

Communication between populations of neurons involved in selective attention is believed to be regulated by a top-down mechanism \cite{Engel01, Hoptopdown}, that is, a feedback signal from higher-level brain regions that modulates the processing of signals in lower-level areas. These feedback signals carry information related to an individual's attentional focus. In the prospect of the article, the top-down mechanism will be represented by a control term, which corresponds to a signal that regulates the information flow when several presynaptic inputs converge to a common postsynaptic neural group. In particular, the control ensures that the oscillatory behavior of the postsynaptic group aligns appropriately with the input signal from the presynaptic group which encodes the attended stimulus.

In this paper, we present a theoretical study based on optimal-control theory and phase dynamics to explore whether an top-down input signal can set the receiving population in the optimal phase for communication with the sender.

We consider a spiking network of excitatory and inhibitory cells (E-I network), whose macroscopic activity,  characterized by average firing rates and membrane potentials, can be exactly captured using the low-dimensional mean-field models introduced in \cite{Montbrio1, Dumont19}. The E-I network, modeling the postsynaptic group, shows macroscopic oscillations in the gamma range. We perturb it with periodic inputs from different presynaptic neuronal groups encoding different stimuli. In a previous study \cite{ReynerHuguet22}, we observed that presynpatic inputs with a higher frequency than the intrinsic network gamma cycle have an advantage to entrain the target network and communicate effectively. In this new study, we develop an optimal-control strategy to set the oscillatory activity of the postsynaptic E-I network in the proper phase for communication with a particular presynaptic group. The interesting result is that our strategy applies to those cases in which the target neuronal group is oscillating in a regime that is not suitable for communication \cite{ReynerHuguet22}.

In order to establish the communication paradigm we use the phase-amplitude reduction \cite{CGH13} and apply optimal-control techniques to this framework \cite{Moehlis}. To this end, we first present novel results that guarantee the controllability of systems close to a limit cycle (Proposition~\ref{LocalCtrl}), which provides a solid basis for addressing the control problem. We adopt a Hamiltonian formulation for the optimal-control problem based on Pontryagin's Maximum Principle \cite{Agrachev04}, in contrast to the Lagrangian formulation \cite{MoehlisT} or the Hamilton-Jacobi-Bellman approach \cite{Moehlis2013}. When incorporating the phase-amplitude reduction to the optimal-control problem \cite{Moehlis}, we discuss different strategies and formulate an accurate description of the dynamics along the dominant contracting direction by taking advantage of the application of the parameterization method \cite{CFL05, GH09, Perez_Cervera20}. Finally, we apply the latter one to solve the control problem for the CTC setting.

The paper is organized as follows. In Section \ref{sec:control} we set the mathematical formalism for the control problem. In Section \ref{sec:control_LC} we discuss general results on the controllability of systems close to a limit cycle. In 
Section \ref{sec:PAreduc} we discuss the mathematical formalism for the phase-amplitude reduction in the context of control theory and in Section \ref{sec:Ocontrol} we discuss the numerical implementation details. Finally, in Section \ref{sec:CTC} we present the main results in which we apply the control techniques discussed in the previous sections to the CTC problem. We end with a discussion in Section~\ref{sec:discussion}. The Appendix illustrates the application of our results on local controllability to other models in neuroscience beyond the mean-field models used in the main text.

\section{Control theory for control-affine systems}\label{sec:control}

In this section we set the background on control theory that will be used along this manuscript.

Let $f:\mathbb{R}^n\times U\rightarrow \mathbb{R}^n$ be a smooth function where $U\subset\mathbb{R}^m$ is the \emph{control set}. 
The control system writes as 
\begin{equation}
    \begin{cases}
    \dot{x}=f(x,u),\\
    x(0)=x_0, \\
    x(t_f)=x_f.
    \end{cases}
    \label{CS}
\end{equation}
where $u\in L^\infty([0,t_f],U)$ is the control. We denote by $x_u(t,x_0)$ the flow of $f(\cdot,u)$ at time $t$ from $x_0$.

\begin{definition}
The \emph{reachable set for \eqref{CS} from $x_0$ at time $t$} is defined by
\begin{equation}
    \mathcal{A}(x_0,t)=\{x_1 \in \mathbb{R}^n \; | \, \; \exists u\in L^\infty([0,t_f],U)\text{ with }x_u(t;x_0)=x_1\};
\end{equation}
the \emph{reachable set for \eqref{CS} from $x_0$} is
\begin{equation}
    \mathcal{A}(x_0)=\bigcup_{t\geq0}\mathcal{A}(x_0,t).
\end{equation}
\end{definition}
We say that \eqref{CS} is \emph{controllable from $x_0$} if $\mathcal{A}(x_0)=\mathbb {R}^n$, and \emph{controllable} if the latter is true for every $x_0\in\R^n.$
On the other hand, \emph{local} controllability around $x_0$ means that $x_0$ belongs to the interior of  $\mathcal{A}(x_0)$.

When dealing with local controllability of nonlinear systems, we will use the following classical result (see, for instance, \cite{Coron07}):

 \begin{theorem}
             Let $(\bar{x},\bar{u})$ be a solution of the control system \eqref{CS}. If the linearized system along $(\bar{x},\bar{u})$ is controllable, then the nonlinear system is locally controllable along the trajectory $x$ with any control $u$ close to $\bar{u}$. That is, for all $\epsilon>0$ there exists $\eta>0$ such that for all $a,b\in\R^n$, with $|\bar{x}(0)-a|+|\bar{x}(t_f)-b|<\eta$, there is a solution $(x,u)$ with $\|u-\bar{u}\|_\infty<\epsilon$ satisfying $x(0)=a$, $x(t_f)=b$.
         \label{localcont}
\end{theorem}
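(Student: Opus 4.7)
The plan is to derive the conclusion from an infinite-dimensional surjective mapping theorem applied to the endpoint map of \eqref{CS}. Concretely, I would define
\begin{equation*}
\Phi\colon\R^n\times L^\infty([0,t_f],\R^m)\longrightarrow \R^n\times\R^n,\qquad \Phi(a,u)=\bigl(a,\,x_u(t_f;a)\bigr),
\end{equation*}
on an open neighborhood of $(\bar{x}(0),\bar{u})$ where the flow is well defined (the neighborhood is non-empty since $x_u(\cdot;a)$ depends continuously on $(a,u)$ and $\bar{u}(t)$ is admissible). The target point is $\Phi(\bar{x}(0),\bar{u})=(\bar{x}(0),\bar{x}(t_f))$, and the goal is to show that $\Phi$ is open at this point with a linear modulus controlling $\|u-\bar{u}\|_\infty$ in terms of $|a-\bar{x}(0)|+|b-\bar{x}(t_f)|$.

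The key step is the computation of the Fréchet derivative of $\Phi$ at $(\bar{x}(0),\bar{u})$. A classical variational argument, based on Grönwall estimates for the integral form of \eqref{CS}, gives
\begin{equation*}
D\Phi(\bar{x}(0),\bar{u})(\alpha,v)=(\alpha,y(t_f)),\qquad \dot{y}=A(t)\,y+B(t)\,v,\quad y(0)=\alpha,
\end{equation*}
with $A(t)=\p_x f(\bar{x}(t),\bar{u}(t))$ and $B(t)=\p_u f(\bar{x}(t),\bar{u}(t))$. By hypothesis the linearized system along $(\bar{x},\bar{u})$ is controllable, which is exactly the assertion that this derivative surjects onto $\R^n\times\R^n$: the first factor is trivially onto, and for any prescribed value of the second factor the controllability of the linearized pair $(A,B)$ produces a suitable $v$ (taking, for instance, $\alpha=0$).

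With surjectivity of a continuous linear map from a Banach space onto a finite-dimensional target in hand, I would invoke Graves' surjective mapping theorem (equivalently, the Lyusternik–Graves open mapping theorem, or a Banach-space inverse function theorem applied after choosing a bounded right inverse). This yields constants $\eta_0,C>0$ such that for every $(a,b)\in\R^n\times\R^n$ with $|a-\bar{x}(0)|+|b-\bar{x}(t_f)|<\eta_0$ there exists $u\in L^\infty$ satisfying $\Phi(a,u)=(a,b)$ and $\|u-\bar{u}\|_\infty\leq C\bigl(|a-\bar{x}(0)|+|b-\bar{x}(t_f)|\bigr)$. Given $\epsilon>0$, choosing $\eta=\min(\eta_0,\epsilon/C)$ delivers the quantitative statement of the theorem.

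The main obstacle is to justify the differentiability of $\Phi$ between the chosen Banach spaces and the applicability of Graves' theorem in the presence of the control constraint $u\in U$. When $U$ is open and $\bar{u}(t)$ stays in its interior for almost every $t$, this is standard; otherwise one must work with the convex tangent cone to $U$ at $\bar{u}$, verify that the cone-restricted linearization remains surjective, and apply the corresponding open mapping theorem for convex processes. Beyond this technical point, the argument is routine: smoothness of $f$ propagates to Fréchet differentiability of $\Phi$ through uniform bounds on the variational equation, and the continuous dependence of $(A(t),B(t))$ on $\bar{u}(\cdot)$ in $L^\infty$ ensures that the surjectivity remains valid for nearby reference controls.
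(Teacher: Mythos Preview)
The paper does not supply its own proof of this theorem: it is quoted as a classical result with a reference to \cite{Coron07}, so there is no in-paper argument to compare against. Your outline is the standard proof one finds in that reference---define the endpoint map, compute its Fr\'echet derivative via the variational equation, identify surjectivity of the derivative with controllability of the linearized system, and conclude by an open-mapping/Graves-type theorem. The quantitative choice $\eta=\min(\eta_0,\epsilon/C)$ at the end is exactly what is needed to match the statement.

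Your remark about the constraint set $U$ is well placed. The theorem as used later in the paper (Proposition~\ref{LocalCtrl}) is applied with $\bar u\equiv 0$ and the standing hypothesis that $0$ lies in the interior of the convex hull of $U$, so in that application one is in the ``$\bar u$ interior'' regime you describe as standard. For the theorem in full generality one indeed has to argue through the tangent cone to $U$, as you note; this is the only point where your sketch would need to be fleshed out, and it is handled in the cited reference.
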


We will also use the following theorem by Chang (see \cite{Chang}) that establishes a criterion of controllability for time-dependent linear systems. 
  \begin{theorem}
  Let $A:[0,t_f]\mapsto M_{n}(\R)$ and $B:[0,t_f]\mapsto M_{n,m}(\R)$ be smooth matrices. Let us define  $B_i(t)$ recursively by $B_0=B$ and $B_i=\dot{B}_{i-1}-AB_{i-1}$, where $\dot{B}$ indicates the derivative with respect to time.  Then, if 
\begin{equation}
    \mathrm{span}\{B_i(t) \, \tilde{u},\; \tilde{u}\in\R^m \}_{i\geq 0}=\R^n
    \label{lincrit}
\end{equation} 
holds for any $t \in[0,t_f]$,
the linear system $\dot{x}=Ax+Bu$ is controllable on $[0,t_f]$.
\label{lincont}
  \end{theorem}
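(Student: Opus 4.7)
The plan is to reduce the theorem to the classical Gramian criterion for controllability of time-varying linear systems and then exploit the recursive definition of $B_i$ as successive derivatives along the adjoint flow. Let $\Phi(t,s)$ denote the state transition matrix of $\dot{x}=Ax$, characterized by $\partial_t\Phi(t,s)=A(t)\Phi(t,s)$ and $\partial_s\Phi(t,s)=-\Phi(t,s)A(s)$, with $\Phi(s,s)=I$. A standard fact is that the system $\dot{x}=Ax+Bu$ is controllable on $[0,t_f]$ if and only if the Gramian
\begin{equation*}
W(0,t_f)=\int_0^{t_f}\Phi(t_f,s)B(s)B(s)^T\Phi(t_f,s)^T\,\d s
\end{equation*}
is invertible, which in turn is equivalent to the following non-degeneracy property: no non-zero vector $v\in\R^n$ satisfies $v^T\Phi(t_f,s)B(s)=0$ for every $s\in[0,t_f]$. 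I would begin the proof by recalling these facts (or quoting them from a standard reference), so that the problem is reduced to ruling out such a $v$.

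The central computation is then to relate the matrices $B_i$ to iterated time-derivatives of $s\mapsto \Phi(t_f,s)B(s)$. Setting $\psi(s):=v^T\Phi(t_f,s)B(s)$, a short induction on $i$, using $\partial_s\Phi(t_f,s)=-\Phi(t_f,s)A(s)$ and the Leibniz rule, yields
\begin{equation*}
\psi^{(i)}(s)=v^T\Phi(t_f,s)\bigl(\dot{B}_{i-1}(s)-A(s)B_{i-1}(s)\bigr)=v^T\Phi(t_f,s)B_i(s),
\end{equation*}
which is precisely the recursion in the statement. This is the step that needs to be executed carefully, keeping track of the sign coming from differentiating $\Phi$ in its second argument; beyond that, smoothness of $A$ and $B$ ensures all derivatives exist.

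To finish, I would argue by contradiction. Assume the system is not controllable; then there is a non-zero $v$ making $\psi\equiv 0$ on $[0,t_f]$, hence $\psi^{(i)}(t)=0$ for every $i\geq 0$ and every $t\in[0,t_f]$. By the identity above, this means $v^T\Phi(t_f,t)B_i(t)=0$ for all $i\geq 0$. Fix any $t\in[0,t_f]$ and set $w:=\Phi(t_f,t)^Tv$, which is non-zero since $\Phi(t_f,t)$ is invertible. Then $w^T B_i(t)\,\tilde{u}=0$ for every $\tilde{u}\in\R^m$ and every $i\geq 0$, contradicting \eqref{lincrit} at that point. Therefore the system is controllable on $[0,t_f]$.

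The only real obstacle is the derivative identity $\psi^{(i)}(s)=v^T\Phi(t_f,s)B_i(s)$; the rest is the standard dictionary between the Gramian, the adjoint flow, and the iterated brackets. Note that the argument only needs \eqref{lincrit} to hold at some single $t\in[0,t_f]$, so the hypothesis as stated is amply sufficient.
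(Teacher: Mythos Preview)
Your proof is correct and follows the standard Gramian-based argument for Chang's criterion. Note, however, that the paper does not actually prove this theorem: it is stated as a classical result and attributed to Chang via a citation, so there is no ``paper's own proof'' to compare against. Your observation that the spanning condition \eqref{lincrit} need only hold at a single point $t\in[0,t_f]$ is also correct and is in fact how the result is usually stated; the paper's phrasing ``for any $t$'' is slightly imprecise in this respect.
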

  Most examples in nature, as well as the ones encountered in this paper, are \emph{control-affine} systems, that is, $f(x,u)=F_0(x)+\sum_{i=1}^m u_iF_i(x)$ in \eqref{CS}, with $(F_i)_{i=0,\dots,m}$ being a family of vector fields on $\R^n$ and $u=(u_1,\dots,u_m)$, with $u_i\in L^\infty([0,t_f],U)$ for $i=1 \ldots m$,
\begin{equation}
    \begin{cases}
    \dot{x}=F_0(x)+\sum_{i=1}^{m}u_iF_i(x), \\
    x(0)=x_0,\\
    x(t_f) = x_f,
    \end{cases}
    \label{AC}
\end{equation}
with $x_0, x_f \in \mathbb{R}^n$.

Let $\varphi: \mathbb{R}^n \times U\rightarrow\mathbb{R}$ be a smooth function and consider system (\ref{AC}), with the extra condition of minimizing a cost function $C(x,u)=\int_{0}^{t_f}\varphi(x(t),u(t))dt$ along the trajectory. Therefore, the control system writes as
 \begin{equation}
    \begin{cases}
    \dot{x}=F_0(x)+\sum_{i=1}^{m}u_iF_i(x), \\
    x(0)=x_0,\\
    x(t_f) = x_f,\\
    \int_{0}^{t_f}\varphi(x(t),u(t))dt\rightarrow \min.
    \end{cases}
    \label{OC}
\end{equation}
If $x(t)$ is a solution of \eqref{OC}, and $u(t)$ is the associated control, we call $(x,u)$ an \emph{optimal pair}. We recall that $x$ is uniquely defined thanks to Carathéodory's theorem (see for instance, \cite{Carat}).
To solve such problem we recall the classical necessary condition for the optimality of a couple $(x,u)$ given by the Pontrjagin's Maximum Principle (PMP) \cite{Agrachev04}. Define the Hamiltonian of (\ref{OC}) as
$$H(x,\l,\l^0,u)=H_0(x,\l)+\sum_{i=1}^{m}u_iH_i(x,\l)+\l^0\varphi(x,u),$$ with
 $H_i(x,\l)=\langle \l,F_i(x)\rangle$, where $\langle\; ,\;\rangle$ denotes the dot product in $\R^n$, $i=0,\dots,m$, and $(\l,\l^0)\in\mathbb{R}^n\times\mathbb{R}$.
 The problem reduces to finding solutions of a Hamiltonian system in $\mathbb{R}^n\times\mathbb{R}^n$, a consequence of the Pontrjagin Maximum Principle (PMP): 
 	\begin{theorem}[PMP]
		If $(x,u)$ is an optimal pair, then there exists a Lipschitz curve $\lambda:[0,t_f]\rightarrow \mathbb{R}^n$ and a constant $\lambda^0 \leq 0$, such that $(\lambda,\lambda^0)\neq(0,0)$ and
		\begin{itemize}
		    \item[(i)]	$(x,\l)$ is a solution of  
		\begin{equation}
		\begin{cases}
		\dot{x}=\frac{\partial H}{\p \l}(x,\l,\l^0,u),\\
		\dot{\lambda}=-\frac{\p H}{\p x}(x,\l,\l^0,u).
		\end{cases}
		\label{ext}
		\end{equation}

		\item[(ii)] $H(x(t),\l(t),u(t))=\max_{\tilde{u}\in U}H(x(t),\l(t),\tilde{u})$ (without loss of generality, we stop writing the dependence on $\lambda^0$). 
		
		\item[(iii)] $H(x(t_f),\l(t_f),u(t_f))= 0$.
		\end{itemize}
		\label{pont}
	\end{theorem}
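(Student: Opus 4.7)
The plan is to establish the PMP via the classical geometric approach using needle (Pontryagin) variations, as developed, e.g., in~\cite{Agrachev04}. First I would augment the state space by introducing $x^0$ with $\dot x^0=\varphi(x,u)$, $x^0(0)=0$, so that the extended system $(x^0,x)\in\R^{n+1}$ is control-affine and minimizing the cost $C(x,u)$ becomes minimizing $x^0(t_f)$ subject to $x(t_f)=x_f$. In this picture, optimality says that the terminal point $(x^0(t_f),x_f)$ lies on the ``lower boundary'' (in the $x^0$-direction) of the augmented reachable set $\widetilde{\mathcal{A}}(x_0,t_f)$.

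Next I would build the Pontryagin tangent cone $\mathcal{K}\subset\R^{n+1}$ to $\widetilde{\mathcal{A}}(x_0,t_f)$ at this endpoint. For a Lebesgue point $\tau\in(0,t_f)$ of $u$ and $v\in U$, a needle variation replaces $u$ by $v$ on $[\tau-\epsilon,\tau]$; a standard expansion shows the resulting endpoint displacement is $\epsilon\,\Phi(t_f,\tau)\bigl(\widetilde{f}(x(\tau),v)-\widetilde{f}(x(\tau),u(\tau))\bigr)+o(\epsilon)$, where $\widetilde f$ denotes the augmented dynamics and $\Phi(\cdot,\cdot)$ is the state-transition matrix of the linearization along the optimal trajectory. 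Convex combinations and concatenations of such needles generate $\mathcal{K}$.

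The crux is a separation argument. Optimality forbids $\mathcal{K}$ from containing a direction that strictly decreases $x^0$ while preserving the terminal constraint $x(t_f)=x_f$ to first order; otherwise a Brouwer fixed-point argument applied to a finite family of independent needles would produce an admissible control reaching a terminal point with the correct $x$-coordinate but strictly smaller cost, contradicting optimality. Hence a hyperplane separates $\mathcal{K}$ from the downward $x^0$-ray, and its normal provides a covector $(\l^0,\l(t_f))\neq(0,0)$ with $\l^0\leq 0$. Transporting this covector backwards via the adjoint of the linearization gives a Lipschitz $\l:[0,t_f]\to\R^n$ satisfying $\dot\l=-\p H/\p x$, while the supporting property applied pointwise in $\tau$ is exactly the maximum condition $H(x,\l,u)=\max_{\tilde u\in U}H(x,\l,\tilde u)$. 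Finally, a time-shift variation $t_f\mapsto t_f+\delta$ adjoins the direction $\dot x(t_f)$ (in the augmented sense) to $\mathcal{K}$, and the supporting inequality collapses to $H(x(t_f),\l(t_f),u(t_f))=0$.

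The main obstacle is the separation step: rigorously controlling the $o(\epsilon)$ remainders across superpositions of needles, verifying the convex-geometric properties of $\mathcal{K}$ (in particular non-empty relative interior along the unconstrained directions), and deploying the fixed-point theorem without losing the terminal constraint are the delicate ingredients that make a full proof of the PMP genuinely long, even though the resulting statement is compact. I would therefore follow~\cite{Agrachev04} closely at this point rather than attempt a streamlined derivation.
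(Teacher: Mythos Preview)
The paper does not prove this theorem: it is stated as the classical Pontryagin Maximum Principle with a citation to~\cite{Agrachev04}, and no argument is given beyond recalling it as background. Your outline via augmentation, needle variations, the Pontryagin tangent cone, and a separation argument is the standard route (and indeed the one in the cited reference), so there is nothing to compare against in the paper itself; your sketch is appropriate, but in the context of this paper no proof is expected or supplied.
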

	
Thus, provided that
		\begin{equation}
		    H^{\max}(x,\l) = \max_{\tilde{u}\in U} H(x,\l,\tilde{u})
		    \label{Hmax}
		\end{equation} is $\mathcal{C}^2$-smooth, optimal solutions are just projections on $\R^n$ of the solutions of the Hamiltonian system defined by $H^{\max}$ given by the canonical projection $\pi:(x,\l)\in\R^n\times\R^n\mapsto x\in\R^n.$ Such pair $(x,\l)$ is called an \emph{extremal}, and its projection on $\R^n$ is an \emph{extremal trajectory.} 
  
In order to solve the two-boundary problem (\ref{OC}), we need to find the initial condition $\l_0=\l(0)$. To that end, we use a shooting method on the Hamiltonian $H^{\max}$. 
 
 Let us denote $z(t,z_0)\in\R^n\times\R^n$ the flow of the Hamiltonian system associated with $H^{\max}$. We will need the following definition.
\begin{definition}
The map $$\exp_{x_0}^t:\l_0\in \R^n\mapsto \pi(z(t,z_0))\in \R^n,$$ 
where $z_0=(x_0,\lambda_0) \in \mathbb{R}^n \times \mathbb{R}^n$ and $\pi$ is the projection of the variable $x$, is called the \emph{exponential map}.
\label{expo}
\end{definition}
We also define the \emph{shooting function} as 
\begin{equation}\label{shoot}
    Shoot(\l_0)=\exp_{x_0}^{t_f}(\l_0)-x_f.
\end{equation}
We will use nonlinear Newton-type methods to find zeros of $Shoot$ (see Section \ref{sec:Ocontrol}).

Finally, we remark that when $\l^0=0$ (\emph{abnormal} extremal) the Hamiltonian of the optimal solutions does not depend on the cost of the problem. For the \emph{normal} case, $\l^0\neq0$, the pair $(\l,\l^0)$ can be normalized as desired without loss of generality due to linearity in $\l$ (see, for instance \cite{Agrachev04}), so, in what follows, we will set $\l^0=-1/2$.

\section{Controllability of systems with a limit cycle}
\label{sec:control_LC}

Models in neuroscience, either of single cells or neural populations, exhibit, in general, oscillatory behavior, at least for some values of the parameters. It is often a challenging task to achieve global controllability for high-dimensional systems (dimension higher than 3) with nonlinear dynamics, particularly when the control is scalar. However, attaining local controllability around the periodic orbit turns out to be a more feasible objective. 
Next, we present a novel result that provides sufficient conditions for local controllability around a limit cycle. 
In the statement, we use the classical notation $ad_F(G)=[F,G]$, where $F$ and $G$ are vector fields and $[F,G]$ is the Lie bracket.
 \begin{proposition}
 \label{LocalCtrl}
 Let $\Gamma$ be a periodic orbit of period $T$ of the system defined by a vector field $F_0$ on $\mathbb{R}^n$. Assume that
 \begin{itemize}
    \item[(i)] $0$ is in the interior of the convex hull of $U$, where U is the control set;
    \item[(ii)] $\exists \, x\in\Gamma$ such that $\mathrm{rank}(\{\ad^{k}_{F_0}F_i(x)\}_{1\leq i\leq m, k\geq 0})=n$.    
\end{itemize}
Then, system (\ref{AC}) is controllable in a neighborhood of $\Gamma$, in time $t\geq T$. In particular,
    $\Gamma\subset int\mathcal{A}(x)$ for every $x\in\Gamma.$
 \end{proposition}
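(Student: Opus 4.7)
The plan is to reduce the proposition to the two general results already recalled in the excerpt (Theorems~\ref{localcont} and~\ref{lincont}) by linearising~\eqref{AC} along the periodic orbit $\Gamma$ itself. Hypothesis~(i) guarantees that the reference control $\bar u\equiv 0$ is admissible (possibly after a standard relaxation/chattering approximation to realise it by $U$-valued controls), so the reference trajectory is exactly $\bar x(t)=\Phi_{F_0}^{\,t}(x^*)$ starting from the point $x^*\in\Gamma$ at which hypothesis~(ii) holds; by periodicity $\bar x(0)=\bar x(T)=x^*$. Linearising~\eqref{AC} along $(\bar x,0)$ yields $\dot y=A(t)y+B(t)v$ with $A(t)=DF_0(\bar x(t))$ and columns of $B(t)$ given by $F_i(\bar x(t))$.

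The heart of the argument is to identify Chang's recursive matrices $B_k$ with iterated Lie brackets along $\bar x$. Differentiating the $i$-th column of $B$ gives $\dot B(t)e_i=DF_i(\bar x(t))F_0(\bar x(t))$, while $A(t)B(t)e_i=DF_0(\bar x(t))F_i(\bar x(t))$, so that
\begin{equation*}
B_1(t)e_i=\dot B(t)e_i-A(t)B(t)e_i=-[F_0,F_i](\bar x(t))=-\ad_{F_0}F_i(\bar x(t)).
\end{equation*}
An induction on $k$, using the same ``moving-frame'' identity applied to the smooth vector field $(-1)^{k-1}\ad^{k-1}_{F_0}F_i$, yields $B_k(t)e_i=(-1)^k\ad^{k}_{F_0}F_i(\bar x(t))$. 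By hypothesis~(ii), at the time $t^*$ with $\bar x(t^*)=x^*$ the span of $\{B_k(t^*)\tilde u:\tilde u\in\R^m,\,k\geq 0\}$ equals $\R^n$; hence Theorem~\ref{lincont} applies and the linearised system is controllable on $[0,T]$. Theorem~\ref{localcont} then gives local controllability of~\eqref{AC} along the closed loop $\bar x|_{[0,T]}$: there are $\eta,\epsilon>0$ such that any $a,b$ with $|a-x^*|+|b-x^*|<\eta$ are joined by a trajectory in time $T$ using a control within $\epsilon$ of $0$.

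To upgrade this \emph{local-at-$x^*$} statement to a full tubular neighbourhood of $\Gamma$, I concatenate two free $F_0$-arcs with the controlled arc above. Given $a,b$ in a thin enough tube around $\Gamma$, write $a$ close to $\bar x(s_a)$ and $b$ close to $\bar x(s_b)$. First, flow freely from $a$ for time $T-s_a$, landing at a point $a'$ within $\eta$ of $x^*$ by continuity of $\Phi_{F_0}$. Next, apply the local controllability at $x^*$ in time $T$ to steer from $a'$ to $b'':=\Phi_{F_0}^{-s_b}(b)$, which also lies within $\eta$ of $x^*$ if the tube was chosen thin enough. Finally, flow freely from $b''$ for time $s_b$, ending exactly at $b$. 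The total transfer time lies in $[T,3T]$, matching the ``time $t\geq T$'' in the statement; specialising to $a=b=x\in\Gamma$ gives $x\in\mathrm{int}\,\mathcal{A}(x)$, hence $\Gamma\subset\mathrm{int}\,\mathcal{A}(x)$ for every $x\in\Gamma$.

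The two obstacles I anticipate are the following. The sign-sensitive induction identifying $B_k$ with $(-1)^k\ad^{k}_{F_0}F_i$ must be done carefully so that the spanning condition required by Chang lines up exactly with the Lie-algebra rank condition in hypothesis~(ii); this is routine but book-keeping-heavy. More substantively, justifying the admissibility of $\bar u\equiv 0$ under the weaker assumption $0\in\mathrm{int}\,\mathrm{conv}(U)$ requires either a Filippov-type relaxation theorem or an explicit fast-switching approximation of convexified controls by $U$-valued ones, a classical but nontrivial ingredient that must be invoked before Theorem~\ref{localcont} can be applied. The tubular-neighbourhood extension in the previous paragraph is then a uniform-continuity exercise on the compact set $\Gamma$, straightforward once the local step at $x^*$ is in place.
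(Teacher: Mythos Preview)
Your proof is correct and follows essentially the same route as the paper's: linearise along the periodic orbit, identify Chang's recursive matrices $B_k$ with iterated Lie brackets $\ad^k_{F_0}F_i$ evaluated along $\gamma$, invoke Theorem~\ref{lincont} and then Theorem~\ref{localcont}. The paper's version is terser---it absorbs your $(-1)^k$ factors into its Lie-bracket convention, does not discuss the relaxation issue for $0\in\mathrm{int}\,\mathrm{conv}(U)$, and replaces your explicit free-arc concatenation by the one-line observation $\mathcal{A}(x)=\mathcal{A}(y)$ for all $x,y\in\Gamma$---but the skeleton is identical.
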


\begin{proof}
It comes as a consequence of Theorems \ref{lincont} and \ref{localcont}. We need to check that $(\ref{lincrit})$ holds for the linearized system of (\ref{AC}) along $\Gamma$.
 Let $\gamma(t)$ be a trajectory of $F_0$ corresponding to the periodic orbit $\Gamma$. The linearized system around $\gamma(t)$ writes as 
\begin{equation}
    \dot{x} = \underbrace{DF_0(\gamma(t))}_{A(t)}x+\underbrace{(F_1(\gamma(t),\dots,F_m(\gamma(t)))}_{B(t)}u,
\end{equation}
where $DF$ denotes the differential of the vector field $F$.
Then, 
\begin{eqnarray*}
B_1(t)&=&\dot{B}(t)-A(t)B(t) \\
&=&(DF_1(\gamma(t))\dot{\gamma}(t),\dots,DF_m(\gamma(t))\dot{\gamma}(t))
-DF_0(\gamma(t))(F_1(\gamma(t)),\dots,F_m(\gamma(t)))\\
&=&([F_0,F_1],\dots,[F_0,F_m])(\gamma(t)).
\end{eqnarray*}

By induction, we get $B_i(t)=(\textrm{ad}^i_{F_0}F_1,\dots,\textrm{ad}^i_{F_0}F_m))(\gamma(t)).$ The condition of Theorem \ref{lincont} is checked and the linearized system is controllable. By Theorem \ref{localcont}, this implies the local controllability around $\Gamma$ in one period. Furthermore, one can note that $\mathcal{A}(x)=\mathcal{A}(y)=\bigcup_{z\in\Gamma}\mathcal{A}(z)$ for all $x,y \in\Gamma$. Therefore, the proposition is proven.
\end{proof}

We investigate local controllability for two mean-field models that exhibit oscillatory behavior, to which we will later apply the control in the context of a problem of neuronal communication. Despite not being a property much explored in neuroscience models, local controllability holds in several classical models in Neuroscience; we provide some examples of it in the Appendix.

\paragraph{Exact mean-field models for neuronal populations.}
We consider an exact mean-field model \cite{Montbrio1, Montbrio2} describing the macroscopic dynamics of a population of inhibitory neurons in terms of the mean membrane potential $V$, the firing rate $r$, and the mean synaptic activation $S$, to which we add a control term to the mean voltage $V$ equation. Namely,
\begin{equation}
\begin{cases}
\tau_m\dot{r} = \dfrac{\Delta}{\pi\tau_m} + 2Vr,\\
\tau_m\dot{V} = V^2-(\tau_m\pi r)^2-\tau_m J S+I(t)+\tau_m\,u(t),\\
\tau_d\dot{S} = -S + r,
\end{cases}
\label{Inhib}
\end{equation}
where $\tau_m$ and $\tau_d$ are time constants modeling neural interactions, $J$ is the synaptic strength and $\Delta$ is a parameter controlling the heterogeneity of the cells in the network, associated to the width of a Lorentzian distribution (see \cite{Montbrio1} for more details). The term $I(t)$ refers to the external current; for the computations of this section we considered it to be constant $I(t) \equiv \bar{I}$. We will use the following set of parameter values for this system:
\begin{equation}\label{eq:Iparam}
\mathcal{P}_I=\{\Delta = 0.3,\; \tau_m = 10,\; \tau_d = 10,\; J= 21,\; \bar{I} = 4\}.
\end{equation}

Following the notation of system \eqref{AC}, we define
$$F_0(r,V,S)=\begin{pmatrix}
\left(\Delta/(\pi\tau_m) + 2Vr \right)/\tau_m\\
\left(V^2-(\tau_m\pi r)^2-\tau_m J S+I \right)/\tau_m\\
\left(-S+r \right)/\tau_d
\end{pmatrix} 
\quad \textrm{and}  \quad F_1=\begin{pmatrix} 0\\
1\\
0
\end{pmatrix}.$$

\begin{corollary}
If system \eqref{Inhib} has a periodic orbit $\Gamma$ for a specific set of parameters and $u\equiv 0$, then it is controllable around $\Gamma$.
\end{corollary}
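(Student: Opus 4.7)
The plan is to apply Proposition~\ref{LocalCtrl} directly. Here $n = 3$, $m = 1$, and $F_1 = (0,1,0)^{T}$. Hypothesis (i) is handled by choosing the control set $U$ to be any neighborhood of $0$ in $\R$, e.g.\ $U = [-u_{\max}, u_{\max}]$, so that $0$ lies in the interior of $\mathrm{conv}(U)$. All of the work is in checking the rank condition (ii) at some point of $\Gamma$.

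Since $F_1$ is a constant vector field, $\ad_{F_0} F_1 = -DF_0 \cdot F_1$ is simply minus the second column of the Jacobian of $F_0$. A direct computation gives
\[
\ad_{F_0} F_1 = -\frac{2}{\tau_m}\,(r,\, V,\, 0)^{T},
\]
which is linearly independent from $F_1$ in the $(r,V)$-plane (the relevant $2\times 2$ minor equals $2r/\tau_m$), but whose third component still vanishes. The third component can only be produced by the coupling of the $V$ and $S$ equations, which enters $DF_0$ through the $\partial_r \dot S = 1/\tau_d$ entry. Carrying out one more bracket, I expect
\[
\ad^{2}_{F_0} F_1 \;=\; (\ast,\, \ast,\, 2r/(\tau_m\tau_d))^{T},
\]
the stars being expressions in $(r,V,S)$ that are irrelevant for what follows. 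Stacking $F_1$, $\ad_{F_0} F_1$, $\ad^{2}_{F_0} F_1$ as columns of a $3\times 3$ matrix and expanding along the third row, only the $(3,3)$ cofactor contributes, and I obtain a determinant proportional to $r^{2}/(\tau_m^{2}\tau_d)$. Hence the rank is $3$ precisely when $r \neq 0$.

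It remains to rule out $r = 0$ on $\Gamma$. This is a dynamical rather than algebraic observation: the first equation of \eqref{Inhib} yields $\dot r|_{r=0} = \Delta/(\pi \tau_m^{2}) > 0$ for $\Delta > 0$, so the half-space $\{r > 0\}$ is forward-invariant and any periodic orbit of $F_0$ lies entirely inside it. Consequently hypothesis (ii) of Proposition~\ref{LocalCtrl} is satisfied at every point of $\Gamma$, and the proposition delivers local controllability in a neighborhood of $\Gamma$ in time $t \geq T$.

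The main (minor) obstacle is the Lie-bracket bookkeeping for $\ad^{2}_{F_0} F_1$: one must verify that the $2r/(\tau_m \tau_d)$ contribution in the third component, arising from $-DF_0 \cdot \ad_{F_0} F_1$ via the $(3,1)$ entry $1/\tau_d$ of $DF_0$, is not cancelled by the contribution from $D(\ad_{F_0} F_1)\cdot F_0$ (whose third component is $0$, since $\ad_{F_0} F_1$ does not depend on $S$). Beyond this short calculation, positivity of $r$ on $\Gamma$ and the choice of $U$ are essentially immediate.
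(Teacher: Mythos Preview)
Your proposal is correct and follows essentially the same route as the paper: compute $F_1$, $\ad_{F_0}F_1$, $\ad^2_{F_0}F_1$, observe that the resulting $3\times 3$ matrix has determinant proportional to $r^2$, and invoke Proposition~\ref{LocalCtrl} once $r\neq 0$ on $\Gamma$. The only cosmetic difference is in the last step: the paper argues that $r\equiv 0$ on $\Gamma$ would force $\Delta=0$ (hence no oscillations), whereas you use forward invariance of $\{r>0\}$ via $\dot r|_{r=0}=\Delta/(\pi\tau_m^2)>0$; both arguments are valid and rest on the same first-equation observation.
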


\begin{proof}
We have $$ F_1 = \begin{pmatrix}0\\
1\\
0
\end{pmatrix}, \qquad
[F_0,F_1]=\frac{-1}{\tau_m}\begin{pmatrix} 2r\\
2V\\
0
\end{pmatrix},$$
and \[[F_0,[F_0,F_1]] = \frac{2}{\tau_m ^2}
\begin{pmatrix} 
2Vr - \Delta/(\pi\tau_m)\\
V^2-(\tau_m\pi r)^2+J\tau_m S-I\\
r\tau_m/(2\tau_d)
\end{pmatrix}.\]
Thus, along a non-trivial periodic orbit, the above vector fields ($F_1$, $ad_{F_0}F_1$ and $ad^2_{F_0}F_1$) generate the whole tangent space unless $r\equiv 0$ on the whole orbit. This would imply $\Delta=0$, which excludes the possibility of having oscillations, and so Proposition \ref{LocalCtrl} applies.
\end{proof}

We also study an exact mean-field model describing the macroscopic dynamics of two populations of neurons, one excitatory (E) and one inhibitory (I) \cite{Dumont19}, which follows the formalism developed in \cite{Montbrio1}. The controlled model consists of a set of differential equations for the E-population,
\begin{equation}
    \begin{cases}
     \tau_e\dot{r}_e = \Delta_e/(\pi \tau_e) + 2 r_e V_e,\\
     \tau_e\dot{V}_e = V_e^2 + \eta_e - (\tau_e\pi r_e)^2 + \tau_e S_{ee} - \tau_e S_{ei} + I_e(t) + \tau_e u(t),\\
     \tau_{si}\dot{S}_{ei} =  -S_{ei} + J_{ei}r_i,\\
     \tau_{se}\dot{S}_{ee} =  -S_{ee} + J_{ee}r_e,
     \end{cases}
     \label{Epop}
     \end{equation}
     and another identical set for the I-population,
    \begin{equation}
    \begin{cases}
     \tau_i\dot{r}_i = \Delta_i/(\pi\tau_i) + 2r_i V_i,\\    
    \tau_i\dot{V}_i = V_i^2 + \eta_i - (\tau_i
    \pi r_i)^2 + \tau_i S_{ie} - \tau_i S_{ii} + I_i(t) + \tau_i u(t),\\
   \tau_{se}\dot{S}_{ie} = -S_{ie} + J_{ie} r_e,\\
   \tau_{si}\dot{S}_{ii} = -S_{ii} + J_{ii}r_i.
    \end{cases}
    \label{Ipop}
\end{equation}
Similar to system \eqref{Inhib}, $r_k$ and $V_k$ ($k \in \{e,i\}$) represent the firing rate and the mean voltage for each population $k$. The variable $S_{ab}$ models the synaptic interaction from population $b$ to population $a$. The terms $I_k(t)$ ($k \in \{e,i\}$) refer to the external current applied to population $k$.
Here, we consider
\[I_{k}(t) \equiv \bar{I}_k,\]
where $\bar I_k$ is a tonic current. In Section \ref{sec:CTC} the external current will be periodic.

In what follows we will set the parameters $J_{ee}=J_{ii}=0$. Thus, the dynamics reduces to a 6 dimensional system where $S_{ee} = S_{ii} = 0$. Along the paper, the values of the other parameters will be 
\begin{equation}\label{eq:EIparam}
\begin{array}{rc}
\mathcal{P}_{EI} = & \{ \Delta_e = 1,\; \Delta_i = \Delta_e,\; \eta_e = -5,\; \eta_i = \eta_e,\;
\tau_e = 10,\; \tau_i = \tau_e,\; \tau_{si} = 1,\; \tau_{se} = 1,\; \\
& J_{ei} = 15, \; J_{ie} = J_{ei},\; \bar{I}_e = 10,\; \bar{I}_i = 0 \}.
\end{array}
\end{equation}

As for system \eqref{Inhib}, we apply Proposition \ref{LocalCtrl} to show that the dynamics of the E-I population system \eqref{Epop}-\eqref{Ipop} is controllable around its periodic orbit. In this case, we need to rely on numerical computations to validate the hypothesis of Proposition \ref{LocalCtrl}. Thus, let us define $D_{EI}(t)$ as the determinant of the matrix
\begin{equation}
\label{matrixcont}
A(t) = \mbox{col}(A_i)(\gamma(t)),\quad \mbox{with } A_1=F_1 \mbox{ and } A_i=ad^{i-1}_{F_0}F_1, i=2,\dots, 6,
\end{equation}
where $F_0$ is the vector field defining system \eqref{Epop}-\eqref{Ipop} with parameter values $\mathcal{P}_{EI}$ and $u\equiv 0$, $\gamma (t)$ is the trajectory corresponding to the periodic orbit of the system and $F_1=(0,1,0,0,1,0)^T$.
 In Figure \ref{fig:EIcontrollability}(a) we show the determinant $D_{EI}$ (indeed the logarithm of the determinant) and we can observe that it is clearly non-zero for the points of the limit cycle close to the peak of $S_{ei}$ (see Figure  \ref{fig:EIcontrollability}(b)). 

\begin{figure}
\begin{tabular}{ll}
(a) & (b) \\
 \includegraphics[scale=0.3]{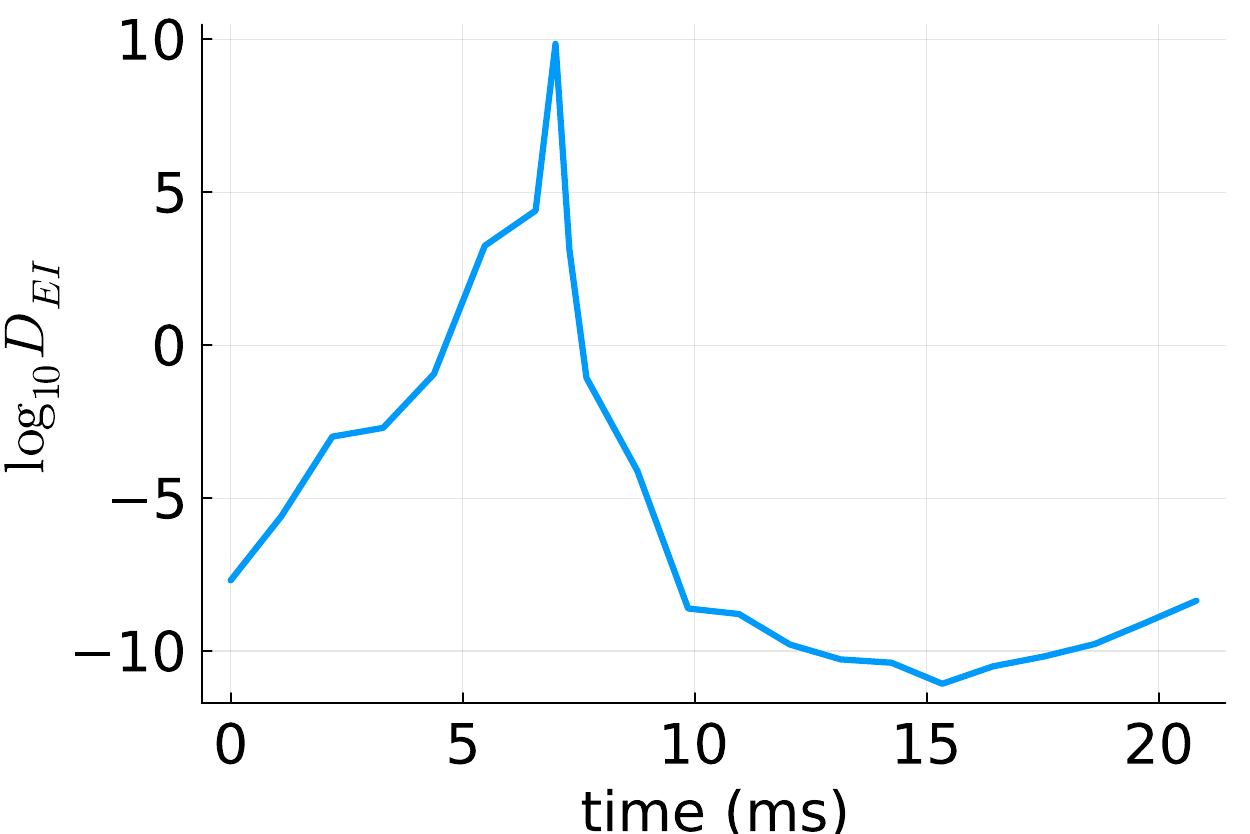}
  &
  \includegraphics[scale=.3]{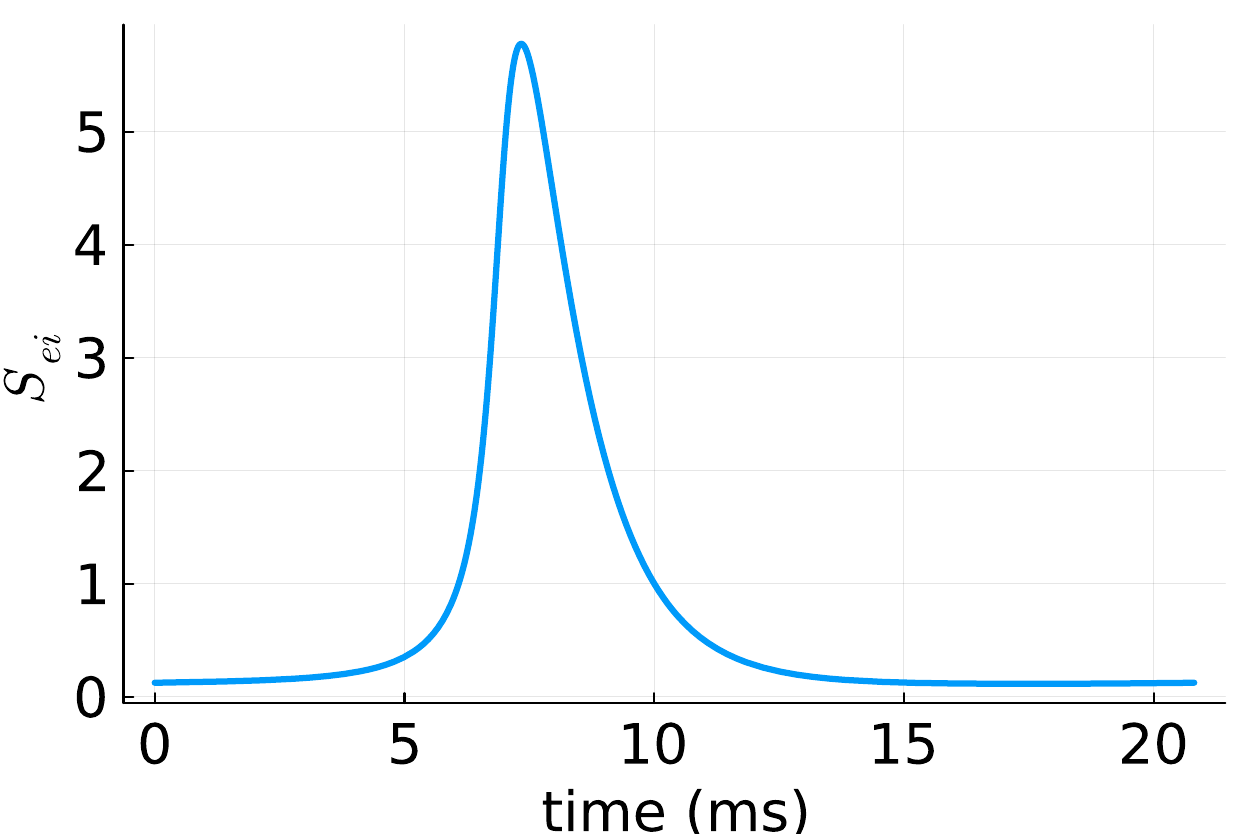}
\end{tabular}  
  \caption{(a) Logarithm of the determinant $D_{EI}$ of the matrix \eqref{matrixcont} and (b) synaptic coordinate $S_{ei}$ along the periodic orbit $\Gamma$ of the E-I network model \eqref{Epop}-\eqref{Ipop}  with $u\equiv 0$ and the set of parameters defined in \eqref{eq:EIparam}.
  }
\label{fig:EIcontrollability}
\end{figure}

\section{The phase-amplitude reduction}\label{sec:PAreduc}

Our control strategy, inspired by \cite{Moehlis}, uses extensively the phase-amplitude reduction of a dynamical system around a stable limit cycle \cite{GH09, CGH13,Perez_Cervera20}. We recall the principle of this reduction in this section.

Let $F:\mathbb{R}^n\rightarrow\mathbb{R}^n$ be an analytic vector field and
\begin{equation}
    \dot{x} = F(x),
    \label{dyn}
\end{equation}
 a dynamical system with a stable hyperbolic limit cycle $\Gamma$ parameterized by the phase
 \begin{equation}\label{eq:lcgamma}
 \gamma:\theta\in\mathbb{T}:= \mathbb{R} \backslash \mathbb{Z} \mapsto \gamma(\te)\in\R^n.  
 \end{equation}
Let us denote by $\mathcal{B}$ the basin of attraction of the limit cycle $\Gamma$. By the stable manifold theorem \cite{HPS}, we can extend the phase definition to the whole basin of attraction of the limit cycle. Indeed,  $\forall \, y_0 \in \mathcal{B}, \exists \, \te \in \mathbb{T}$ such that $|x(t, y_0)-\gamma(t/T+\te)| \underset{t\rightarrow\infty}{\longrightarrow}0$. Here, $x(t,y_0)$ is the flow of the vector field \eqref{dyn}. Therefore, we can define a function $\Theta$ on $\mathcal{B}$ such that $\Theta(y_0)=\te$ (see \cite{Guc}). 
The set of points with the same phase $\theta$, $\mathcal{I}(\theta)=\{x\in \mathcal{B} \, | \, \Theta(x) = \theta \}$ is called the $\theta$-\emph{isochron}. The isochrons are the leaves of the stable manifold of the limit cycle and the flow at time $t$ sends $\mathcal{I}(\theta)$ to $\mathcal{I}(\theta+t/T)$.

Assuming certain conditions on the Floquet exponents of the limit cycle $\Gamma$, one can prove (see \cite{CFL05, CLMJ15, Perez_Cervera20}) that there exists an analytic diffeomorphism 
\begin{equation}\label{Kparam}
K:(\te,\s)\in {\mathbb T}\times {\mathbb R}^{n-1}\mapsto x\in\mathbb{R}^n, 
\end{equation}
such that system \eqref{dyn} writes as
$$\begin{cases}
\dot{\theta} = \dfrac{1}{T},\\
\dot{\sigma} = \Lambda\sigma,
\end{cases}$$
in the $(\te,\s)$ coordinates, with $\Lambda=\textrm{diag}(\mu_i)_{1\leq i\leq n-1},$ being the diagonal matrix of the Floquet exponents of the periodic orbit $\mu_i \in \mathbb{R}$. Here we assume that the Floquet exponents are real and distinct, more precisely, $\mu_{n-1}<\cdots< \mu_{1}<0 .$ Thus, we have that
\[x(t,K(\theta_0,\sigma_0))=K(\theta_0+t/T, \sigma_0 e^{\Lambda t}),\]
where $x(t,x_0)$ is the flow of the vector field \eqref{dyn}, with $x_0=K(\theta_0,\sigma_0)$. 

\begin{remark}\label{rem:floquet}
The assumption for distinct eigenvalues is to ensure that there are no resonances and the system can be transformed into a linear system in $\sigma$. However, it is not necessary to have real eigenvalues. Indeed, the case of complex eigenvalues is similar (see for instance the discussion in \cite{CLMJ15}). For the purposes of this paper, we only need that the Floquet exponent with smallest modulus is real, as we will see in the example considered later on.  
\end{remark}

The variables $\sigma \in \mathbb{R}^{n-1}$ are typically referred to as the \emph{amplitude coordinates} \cite{GH09,CGH13,Perez_Cervera20} and provide a measure to quantify the proximity to the limit cycle. Analogously to the $\Theta$ function, we can define a vector-valued function $\Sigma$ on $\mathcal{B}$ such that $\Sigma(y_0)=\sigma \in \mathbb{R}^{n-1}$. The set of points with the same amplitude, $\mathcal{J}(\sigma)=\{x\in \mathcal{B} \, | \, \Sigma(x) = \sigma \}$ is called the $\sigma$-\emph{isostable}.
Notice that the vector-valued function $(\Theta,\Sigma)$ is the inverse of $K$, that is, $K\circ(\Theta,\Sigma)(x)=x.$ 

In this framework, when perturbing a system, two functions are of importance: the \emph{Phase Response Function} (PRF) and the \emph{Amplitude Response Function} (ARF), which measure changes in the phase and amplitude of an oscillation, respectively, induced by a perturbation $\Delta x$ as a function of the point at which it is received. Mathematically,
\[PRF(x)=\Theta(x+\Delta x) - \Theta(x), \]
and
\[ARF(x)=\Sigma(x + \Delta x) - \Sigma(x).\]

When applying a perturbation $p(t)$ (not necessarily small) to system \eqref{dyn}, that is, $\dot{x} = F(x)+ p(t)$, the evolution of the $(\theta,\sigma)$ variables is given by the following perturbed system:
\begin{equation}\label{eq:PA}
\begin{cases}
\dot{\theta} = \dfrac{1}{T}+\nabla\Theta(K(\theta,\sigma)) \cdot p(t),\\
\dot{\sigma} = \Lambda\sigma+\nabla\Sigma(K(\theta,\sigma)) \cdot p(t).
\end{cases}
\end{equation}
The functions $\nabla\Theta(K(\theta,\sigma))$ and $\nabla\Sigma(K(\theta,\sigma))$ correspond to the first order approximation of the PRF and the ARFs and are called the \emph{infinitesimal phase and amplitude response functions}, respectively, i.e. $iPRF(\theta,\sigma)=\nabla\Theta(K(\theta,\sigma))$ and $iARF(\theta,\sigma)=\nabla\Sigma(K(\theta,\sigma))$. Computing the values of the iPRF and the iARF (as well as the parameterization $K$ in \eqref{Kparam}) globally requires efficient numerical algorithms (see \cite{Perez_Cervera20} for efficient numerical methods). To circumvent the expense of such numerical computations, many studies rely on the \emph{weak coupling approximation}: if the perturbation is small, the resulting trajectory stays close to the limit cycle, and thus  $iPRF(\te,\s) \approx iPRF(\te,0) =: Z_0(\te)$,  $iARF(\te,\s) \approx iARF(\te,0) =: I_0(\te)$ (these functions are called \emph{infinitesimal phase and amplitude response curves}, respectively). Though analytic computations of these curves are out of reach, except for very simple cases, one can easily compute them numerically using the fact that the functions $Z_0$ and $I_0=(I_{0,1},\ldots, I_{0,n-1})$ are periodic solutions of the following linear differential equations (see \cite{CGH13,GH09,EK91}):
\begin{equation}\label{eq:eqPRC}
\dfrac{1}{T} \dfrac{d }{d \theta} Z_0(\te) = -DF^T (\gamma(\te))\, Z_0(\te),
\end{equation}
and,
\begin{equation}\label{eq:eqARC}
\dfrac{1}{T} \dfrac{d }{d \theta} I_{0,i} (\te) = (\mu_i \textrm{Id} - DF(\gamma(\te)))\, I_{0,i}(\te),
\end{equation}
with a normalisation condition. Recall that $\gamma$ is the parameterization of the limit cycle given in \eqref{eq:lcgamma}. In Figure \ref{fig:PRCARC}, we show the iPRC and the iARC for the least contractive normal direction (associated to the largest Floquet exponent) for systems (\ref{Inhib}) and \eqref{Epop}-\eqref{Ipop}.
The PRC is a very useful tool for the study of oscillators and of primary importance for biologists as it can be measured experimentally, see \cite{ErmentroutTerman10,SPB11} for more details and a complete study of the PRC.

In this paper, we will apply a scalar control in a given direction ${\bf v} \in \mathbb{R}^n$, that is,
\begin{equation}\label{pertAP}
\dot{x} = F(x)+ u(t)\, {\bf v}, 
\end{equation}
and we will study the control system in terms of the phase-amplitude variables 
\begin{equation}\label{pertAPred}
\begin{cases}
   \dot{\te}=\dfrac{1}{T}+ u(t)\, \nabla \Theta (K(\theta,\sigma)) \cdot {\bf v},\\
   \dot{\s}=\Lambda\s + u(t)\, \nabla \Sigma (K(\theta,\sigma)) \cdot {\bf v}.
\end{cases}
\end{equation}

By \emph{controlled trajectory}, we will refer to a solution of (\ref{pertAP}). An \emph{original trajectory} will be a solution of (\ref{dyn}). 

We will both work within and beyond the weak coupling approximation. That is, we will consider approximations of the  functions $\nabla \Theta$ and $\nabla \Sigma$ in \eqref{pertAPred} by the iPRC $Z_0(\theta)$ and iARC $I_0(\theta)$, respectively (weak coupling hypothesis), and we will also include its first order terms in $\sigma$, to be able to treat the case of a larger control while keeping precision in the numerical resolution of our dynamics. We denote respectively $Z_1$ and $I_1$ the first order terms of 
$\nabla\Theta$ and $\nabla\Sigma$ in $\s$: 
\begin{equation}\label{eq:nablatheta}
\nabla\Theta(K(\theta,\sigma)) = Z_0(\te) +  Z_1(\te)\s + O(\sigma^2),
\end{equation}
and  
\begin{equation}\label{eq:nablasigma}
\nabla\Sigma(K(\theta,\sigma)) = I_0(\te) + I_1(\te)\s + O(\sigma^2).
\end{equation}
These terms can be computed from the change of coordinates $K$.

Moreover, we will consider only the largest (the smallest in absolute value) Floquet exponent $\mu := \mu_1$, and (abusing the notation) the associated $\sigma = \s_1$ coordinate (we will see in the next section that for the examples considered there are several orders of magnitude between the first and the second Floquet multiplier). 

\begin{figure}
\begin{tabular}{ll}
    (a) & (b) \\
    \includegraphics[scale=0.3]{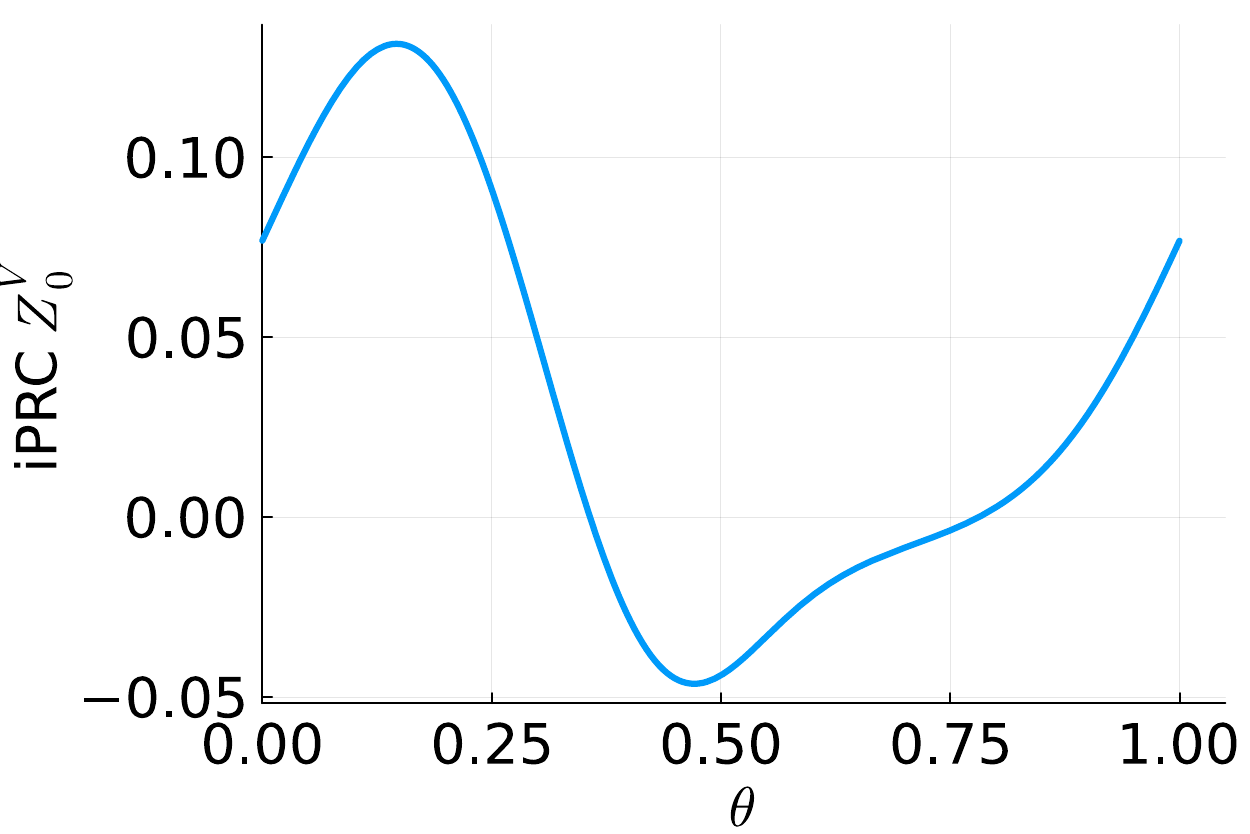} &
    \includegraphics[scale=0.3]{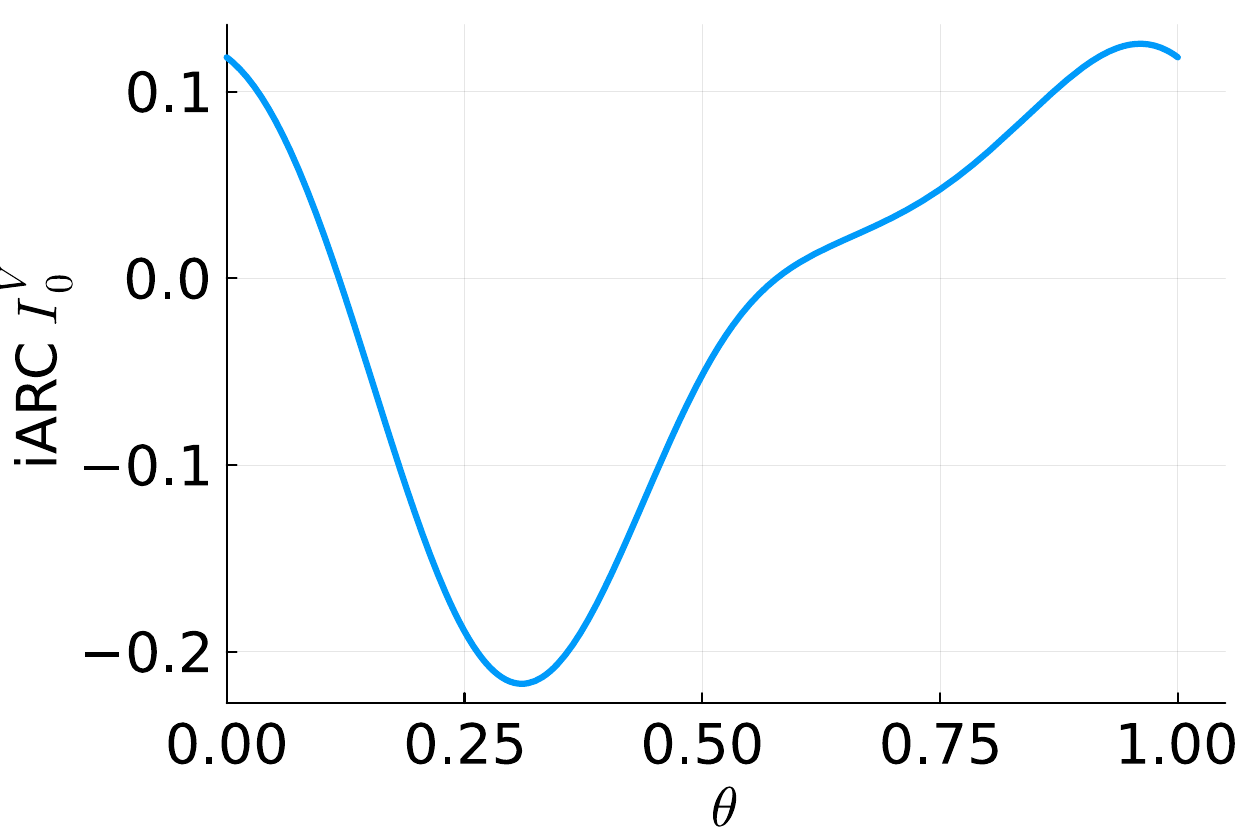} \\ 
    (c) & (d) \\
    \includegraphics[scale = 0.3]{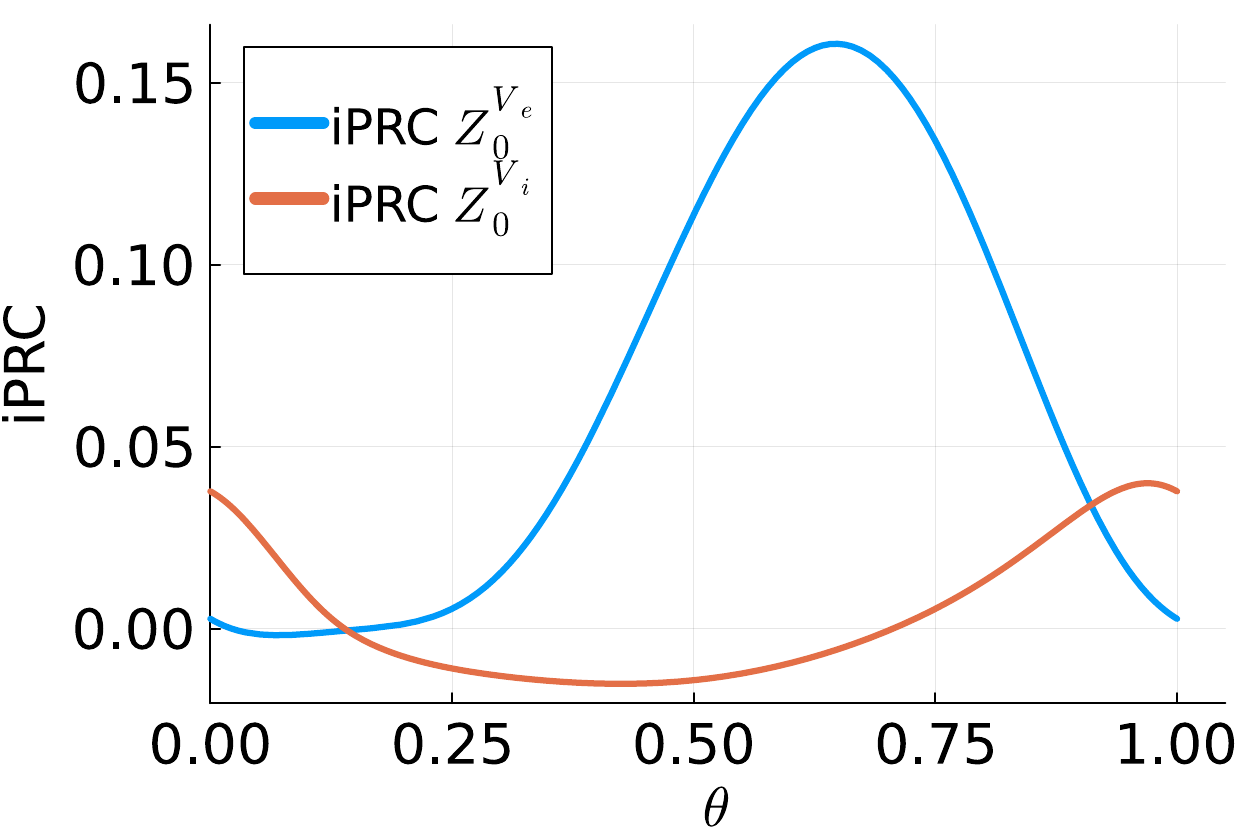} &
    \includegraphics[scale = 0.3 ]{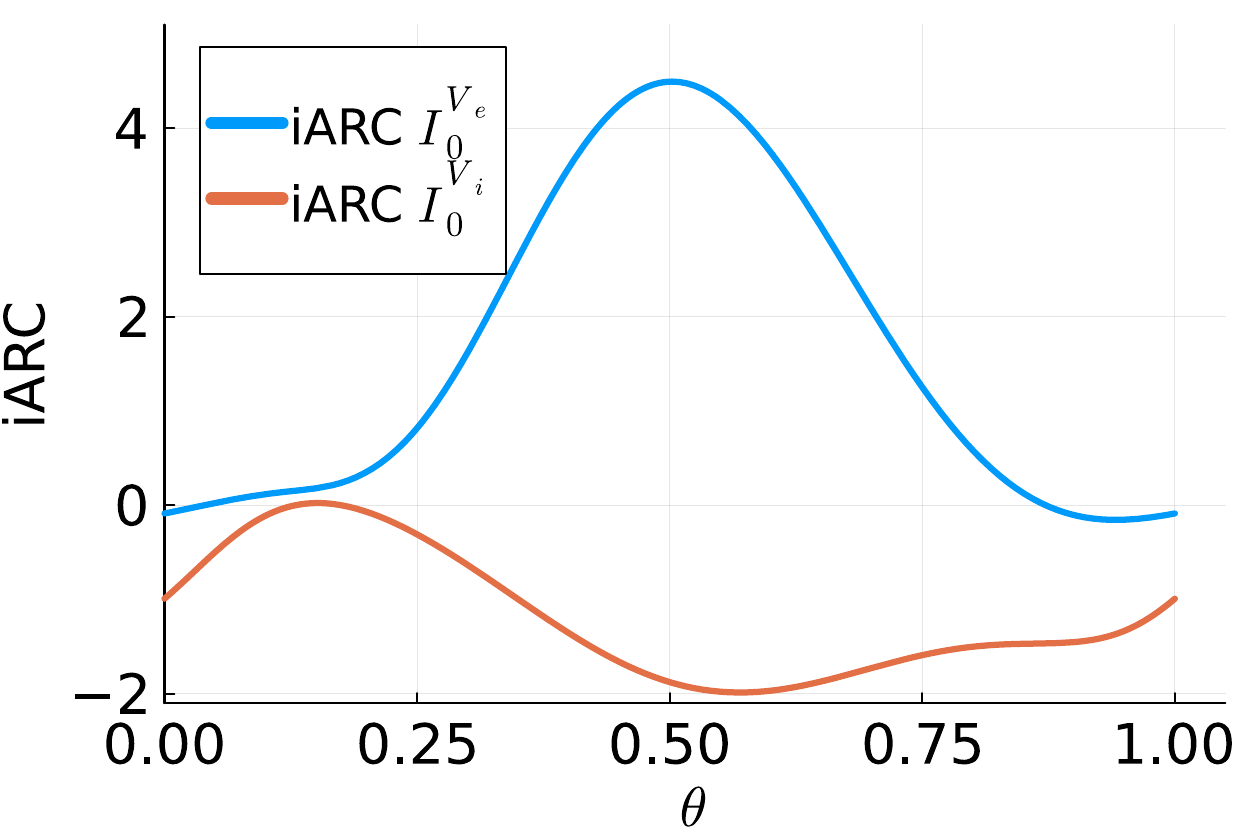} \\
\end{tabular}    
\caption{(a) V component of the iPRC ($Z_0^V$) and (b) iARC ($I_0^V$) of the least contractive normal direction for the limit cycle of system \eqref{Inhib} with $u\equiv 0$ corresponding to a self-inhibitory population with parameter values given in \eqref{eq:Iparam}; (c)  $V_e$ component (blue curve) and $V_i$ component (red curve) of the iPRC ($Z_0^{V_e}$ and $Z_0^{V_i}$) and (d) iARC ($I_0^{V_e}$ and $I_0^{V_i}$) of the least contractive normal direction for the limit cycle of system \eqref{Epop}-\eqref{Ipop} with $u\equiv 0$ corresponding to an E-I network with parameter values given in \eqref{eq:EIparam}.
}
\label{fig:PRCARC}
\end{figure}

\section{Optimal control using phase-amplitude variables}\label{sec:Ocontrol}

Our goal is to control the phase of an oscillating neural population (either the self-inhibitory network \eqref{Inhib} or the E-I network \eqref{Epop}-\eqref{Ipop}) by means of an external input to the population from a different neural source. We will assume that the brain is working at an optimal-energy regime, meaning that this external input, represented by the control, satisfies a minimum-energy hypothesis \cite{Friston2006, Friston2010, Gu2017}. To achieve this goal, phase-amplitude variables are more suitable for determining the optimal control, since they enable a more direct and targeted control strategy.

In this section, we provide an overview of control problems involving phase and amplitude variables and the numerical methods used to solve them. These control problems are designed to close the cycle within a specified time $t_f$ by applying an external input, the control, to the mean voltage equations. More precisely, the first problem uses the phase reduction (section \ref{sec:phase_control}) and the second one also includes the dynamics of the amplitude coordinate parameterizing the slow manifold (section \ref{sec:PAcontrol}).

In Section \ref{sec:CTC}, we will explore the applications to neural communication, particularly when the control is periodically applied over time.

\subsection{Phase-only minimum energy control}\label{sec:phase_control}

We present the problem of controlling the phase of a limit cycle using the phase reduction approach (assuming that the trajectory remains close to the limit cycle). The control problem writes as
\begin{equation}
\begin{cases}
\tag{$\textrm{OC}_1$}
\dot{\te}=\dfrac{1}{T}+ Z_0^{\bf v} (\te)u(t),\\
\te(0)=\te_0,\\
\te(t_f)=1,\\ 
\int_0^{t_f}u^2\rightarrow \min,
\end{cases}
\label{OCphase}
\end{equation}
where $Z_0^{\bf v}(\te):=Z_0(\te) \cdot {\bf v}$ and $Z_0$ is the iPRC given in \eqref{eq:nablatheta}. For the examples of this paper the direction $\bf v$ will be the voltage direction, namely, $Z_0^{\bf v}= Z_0^V$ for system \eqref{Inhib} and  $Z_0^{ \bf v}=Z_0^{V_e}+Z_0^{V_i}$ for the E-I system \eqref{Epop}-\eqref{Ipop}.

According to the PMP, a solution $\te(t)$ of such problem is the projection on the phase space of the solutions $(\te(t),\l_\te(t))$ of the Hamiltonian 
\begin{equation}
   H(\te,\l_\te,u) = \l_\te/T+\l_\te Z_0^{\bf v}(\te)u-u^2/2,
\end{equation}
together with the maximization condition $H(\te(t),\l_\te(t),u(t))=\max_{\tilde{u}\in \mathbb{R}}H(\te(t),\l_\te(t),\tilde{u})$ for all $t\in[0,t_f]$, which is achieved for 
\begin{equation} \label{eq:uoptimal1}
u = \l_\te Z_0^{\bf v}(\te).
\end{equation}
 
Thus, we get 
 $$H^{\max}(\te, \l_\te) =\max_{\tilde u \in \mathbb{R}} H (\te, \l_\te,\tilde u) = \frac{\l_\te}{T}+ \frac{(\l_\te Z_0^{\bf v}(\te))^2}{2}.$$
 The equations of motion for the Hamiltonian system are 
 \begin{equation}
    \begin{cases}
    \dot{\te}=1/T+\l_\te \,\left(Z_0^{\bf v}(\te)\right)^2, \\
    \dot{\l}_\te=-\l_\te^2\, Z_0^{\bf v}(\te)\,{Z_0^{\bf v}}'(\te).\\
    \end{cases}
    \label{HamP}
\end{equation}
To solve this problem we use the shooting algorithm presented in Section \ref{sec:method} in order to find the desired $\l_\te(0)$.

\begin{figure}
\begin{tabular}{ll}
(a) & (b) \\
\includegraphics[scale=0.3]{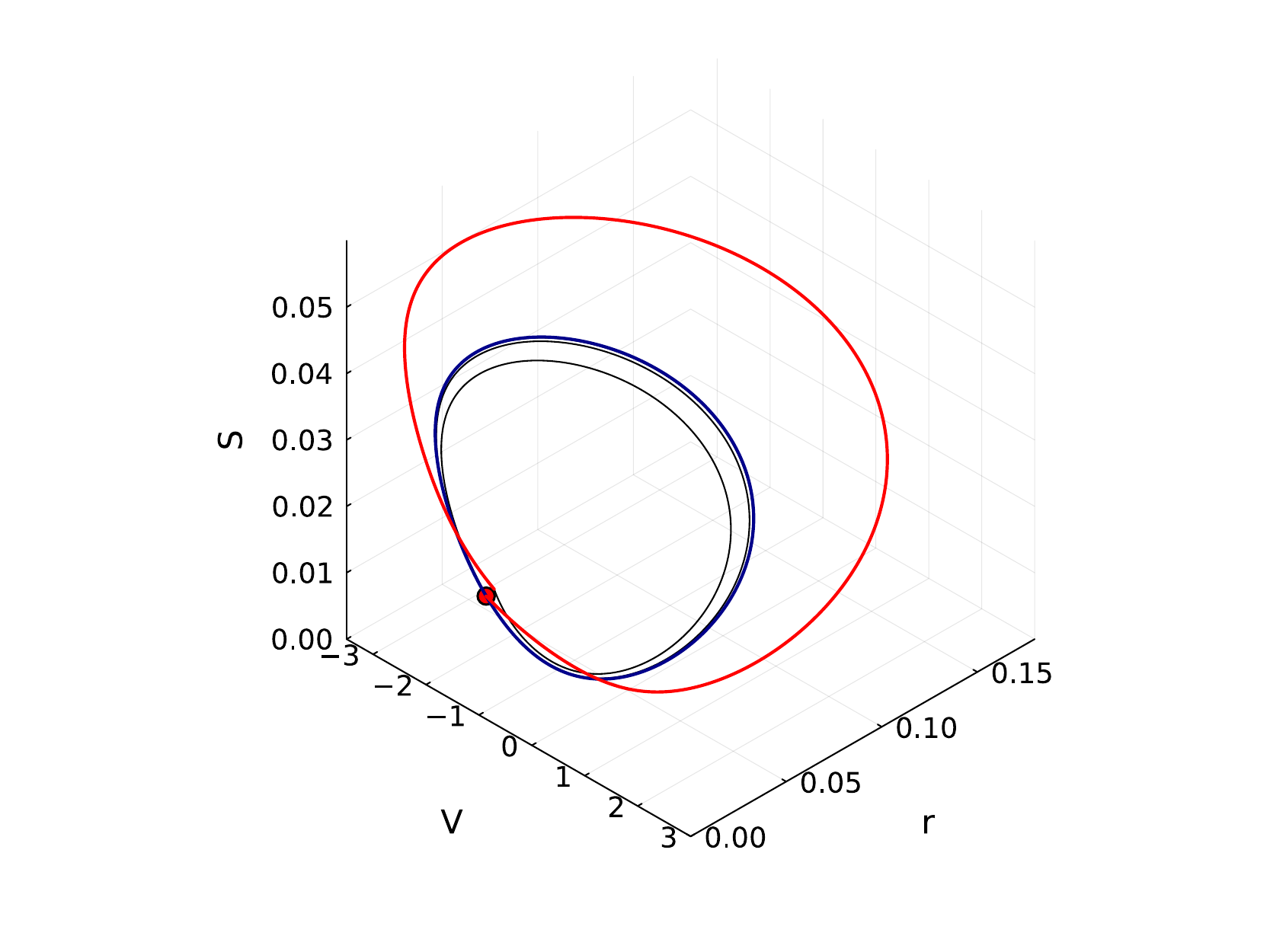} &
\includegraphics[scale=0.3]{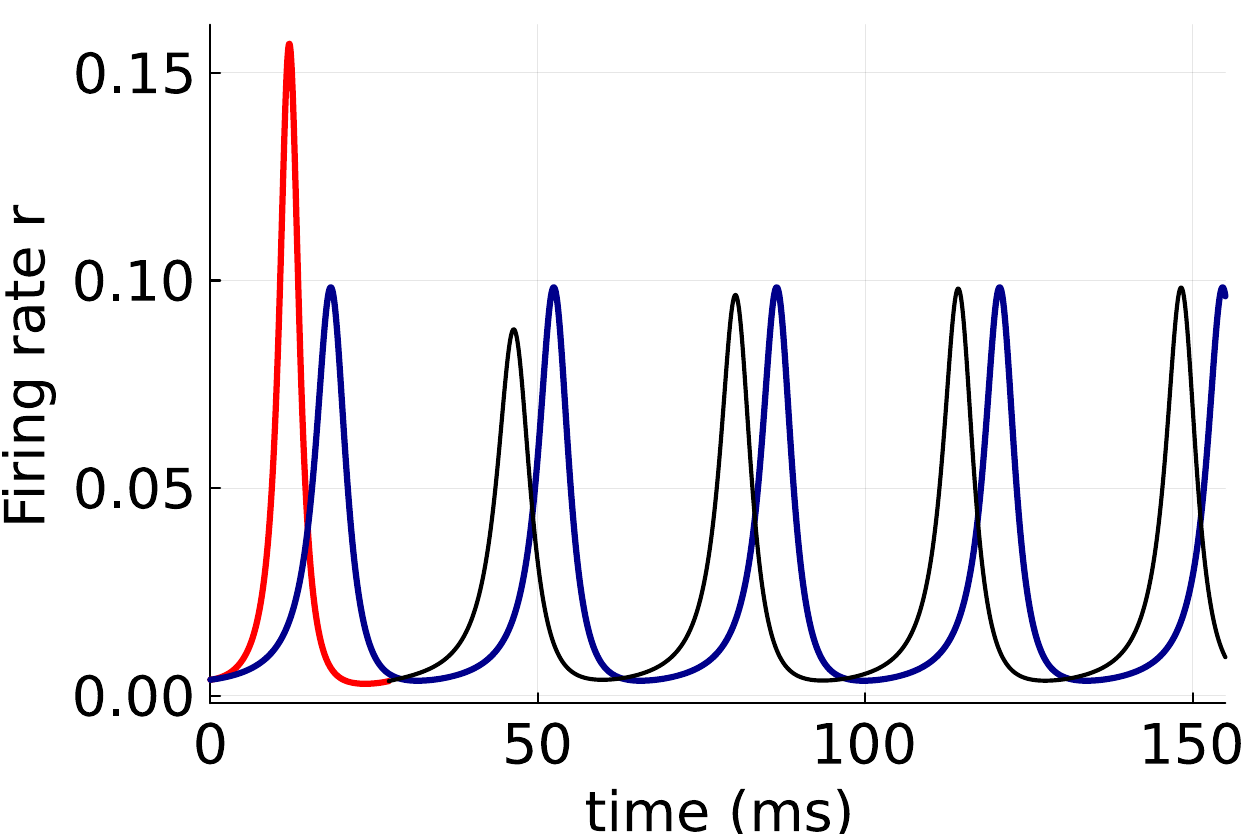} \\
(c) & (d) \\
\includegraphics[scale = 0.3]{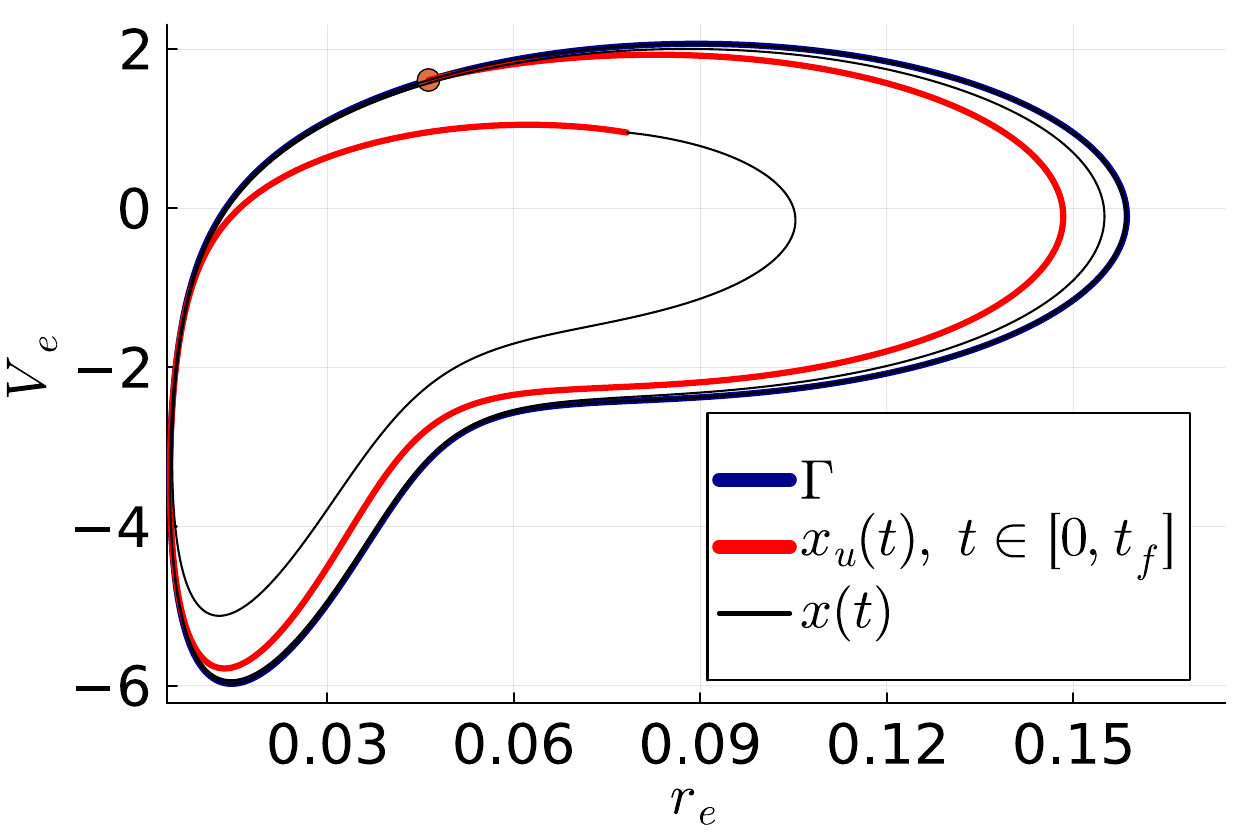} &
\includegraphics[scale = 0.3]{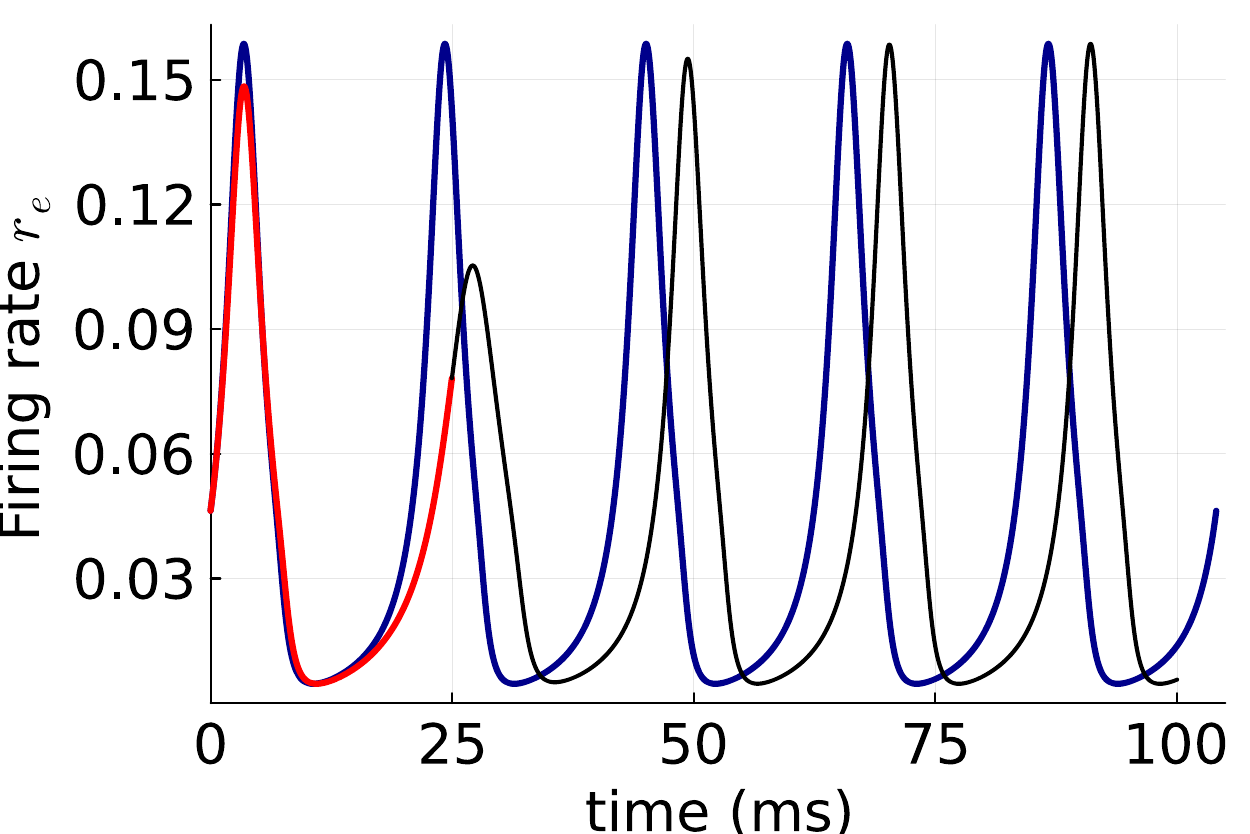}
\\

\end{tabular}
        \caption{ (a,b) Inhibitory population model \eqref{Inhib}. (a) Original ($u\equiv 0$) limit cycle of period $T\approx34.047$ (blue curve), controlled trajectory with the optimal control $u$, starting at the dot symbol, obtained by means of solving \eqref{OCphase} with $t_f=0.8T$ (red curve) and its continuation (i.e., for $t>t_f$) without control (black curve). (b) Time evolution of the firing rate variable $r$ for the trajectories in panel (a). (c,d) E-I population model \eqref{Epop}-\eqref{Ipop}. (c) Projection on the $(V_e,r_e)$ plane of the original ($u\equiv 0$) limit cycle of period $T\approx20.811$ (blue curve), the controlled trajectory (red curve) with the optimal control $u$  starting at the dot symbol, obtained by means of solving \eqref{OCphase} with $t_f = 1.2T$, and its continuation for $t>t_f$ without control (black curve).  (d) Time evolution of the firing rate variable $r_e$ for the trajectories in panel (c). 
        }
        \label{fig:conttraj}
\end{figure}

In Figure \ref{fig:conttraj}, we show the trajectory of system \eqref{Inhib} with $t_f=0.8T$ (panels (a) and (b)) and system \eqref{Epop}-\eqref{Ipop} with $t_f=1.2T$ (panels (c) and (d)) when applying a control $u$ obtained by solving the control problem \eqref{OCphase} using the PMP. Notice that, in both examples, the controlled system presents an orbit (red curve) that is displaced away from the original ($u\equiv 0$) limit cycle (blue) (see Figure \ref{fig:conttraj}(a) and (c)).  
This situation can be problematic for several reasons: leaving the basin of attraction, leaving the neighborhood of controllability, but, most importantly, breaking the weak coupling approximation and leading to an imprecise result. To overcome this problem, in Section \ref{sec:PAcontrol} we include the amplitude variable in the control problem. 

Notice that, once the control is turned off, the trajectory returns to the original limit cycle with a phase shift close to $0.2T$ (resp. $-0.2T$) when $t_f=0.8T$ (resp. $t_f=1.2T$), showing that the control is functioning as intended (see Figures \ref{fig:conttraj}c and d). 

\subsection{Phase-Amplitude control} \label{sec:PAcontrol}

An option to gain more accuracy and to avoid trajectories leaving the basin of attraction or the controllability region, especially when setting more drastic conditions (for instance, when $t_f$ is far away from $T$), is to penalize the distance to the limit cycle in the cost function. As in \cite{Moehlis}, this can be achieved by modifying the cost functional as 
\begin{equation}\label{eq:cost2}
C_{2}(x,u) = \int_0^{t_f} u(t)^2dt +\alpha\int_0^{t_f}\sigma(t)^2dt.
\end{equation}

Abusing of notation, we denote by $\sigma\in \R$ the amplitude coordinate in the direction of the largest Floquet exponent (recall that originally $\s\in\R^{n-1}$), and denote by $\mu$ the corresponding Floquet exponent. The parameter $\alpha$ quantifies the importance of the average squared distance, measured as the $L^2$ norm of $\sigma$, in the cost functional.

As a result of including the amplitude penalization, the optimal-control problem incorporates now the dynamics on the normal direction to the limit cycle, $\sigma$, and has the form: 
\begin{equation}
\begin{cases}
    \dot{\te}=\dfrac{1}{T}+  Z^{\bf v}(\te,\s)u(t),\\
     \dot{\s}=\mu\s + I^{\bf v}(\te,\s)u(t),\\
     \te(0)=0,\; \s(0)=0,\\
     \te(t_f)=1,\; \s(t_f)\textrm{ free},\\
   \int_0^{t_f} ( u^2+\alpha\sigma^2)dt\rightarrow \min,
    \end{cases}
    \label{OCP2}
    \tag{$OC_2$}
\end{equation}
where $Z^{\bf v}(\te,\s)= \nabla \Theta (K(\te,\s)) \cdot {\bf v}$ and $I^{\bf v}(\te,\s)= \nabla \Sigma (K(\te,\s)) \cdot {\bf v}$.  For the examples of this paper, the direction $\bf v$ will be the voltage direction, namely, $Z^{\bf v}= Z^V$, $I^{\bf v}= I^V$ for system \eqref{Inhib} and  $Z^{\bf v}=Z^{V_e}+Z^{V_i}$ and $I^{\bf v}=I^{V_e}+I^{V_i}$ for the E-I system \eqref{Epop}-\eqref{Ipop}.

The functions $\nabla \Theta (K (\te,\s))$ and $\nabla \Sigma (K(\te,\s))$ will be approximated using Taylor expansions in $\s$ given in \eqref{eq:nablatheta} and \eqref{eq:nablasigma}, respectively. Thus, $Z^{\bf v}(\te,\sigma)=Z_0^{\bf v}(\te)+\sigma Z_1^{\bf v}(\te) + \mathcal{O}(\sigma^2)$ and $I^{\bf v}(\te,\sigma)=I_0^{\bf v}(\te)+\sigma I_1^{\bf v}(\te) + \mathcal{O}(\sigma^2)$, where $Z^{\bf v}_i=Z_i \cdot {\bf v}$ and $I^{\bf v}_i=I_i \cdot {\bf v}$, for $i \geq 0$. In practical implementations, we will consider only the first dominant terms in $\sigma$.

Therefore, according to the PMP, the maximized Hamiltonian \eqref{Hmax} to solve the control problem \eqref{OCP2} is given by
$$H^{\max}(\te,\s,\l_\te,\l_\s)= \underbrace{\l_\te/T+\mu \l_\s\s}_{=H_0} + \frac{(\l_\te Z^{\bf v}(\te,\s)+\l_\sigma I^{\bf v}(\te,\s))^2}{2} - \frac{\alpha}{2}\s^2,$$
which is achieved for 
\begin{equation} \label{eq:uoptimal2}
u=\l_\te Z^{\bf v}(\te,\s)+\l_\sigma I^{\bf v}(\te,\s). 
\end{equation} 
Thus, the equations of motion for the  Hamiltonian system are
\begin{equation}\label{HamPA}
    \begin{cases}
    \dot{\te}=1/T+Z^{\bf v}(\te,\s)(\l_\te Z^{\bf v}(\te,\s)+\l_\sigma I^{\bf v}(\te,\s)),\\
    \dot{\s}=\mu\s+I^{\bf v}(\te,\s)(\l_\te Z^{\bf v}(\te,\s)+\l_\sigma I^{\bf v}(\te,\s)),\\
    \dot{\l}_\te=-(\l_\te Z^{\bf v}(\te,\s)+\l_\sigma I^{\bf v}(\te,\s))(\l_\te \partial_{\theta}Z^{\bf v}(\te,\s)+\l_\sigma \partial_{\te}{I^{\bf v}}(\te,\s)),\\
    \dot{\l}_\s=-\mu \l_\s+\alpha \s -(\l_\te Z^{\bf v}(\te,\s)+\l_\sigma I^{\bf v}(\te,\s))(\l_\te \partial_{\s}Z^{\bf v}(\te,\s)+\l_\sigma \partial_{\s}I^{\bf v}(\te,\s)).
    \end{cases}
\end{equation}

    Notice that when we consider only $0$-th order terms in $Z^{\bf v}$ and $I^{\bf v}$, the last term in equation for $\dot{\l}_{\s}$ is zero.
    
    To solve (\ref{HamPA}), we need to first apply the shooting algorithm described in Section~\ref{sec:method} in order to find the initial conditions $(\l_\te(0),\l_\s(0))$.

    For the inhibitory population model \eqref{Inhib} with the parameters in \eqref{eq:Iparam}, the system has a periodic orbit with Floquet multipliers $m_i=e^{\mu_i T}$, that are real and distinct $m_1=0.15$ and $m_2=5 \cdot 10^{-5}$. Thus, the phase-amplitude reduction considers the slowest contracting direction $\sigma$ associated to the eigenvalue $m_1$, with Floquet exponent $\mu_1=-5.44 \cdot 10^{-2}$.
    In Figure \ref{fig:alpha}(a) we show the optimal control $u$ obtained by solving the control problem \eqref{OCP2} for this model with 0-th order approximation for functions $Z^{\bf v}(\te,\s) \approx Z_0^{\bf v}(\te)$ and $I^{\bf v} (\te,\s) \approx I_0^{\bf v} (\te)$, $t_f=0.8T$ and two different values of the weight $\alpha$ in the cost function. One can observe that the controlled trajectory for sufficiently large values of $\alpha$ stays closer to the original limit cycle. This can be visualized in the evolution of the $\sigma$ variable (assessing the distance to the limit cycle), which in the case $\alpha=12$ takes values closer to 0 compared with $\alpha=0$ (see Figure \ref{fig:alpha}(b)).
    
    For the E-I system \eqref{Epop}-\eqref{Ipop} with parameters in \eqref{eq:EIparam} the system has a periodic orbit with Floquet multipliers $m_i=e^{\mu_i T}$, that are given by
    $m_1=5.4 \cdot 10^{-2}$, $m_2=\bar{m}_3=(2.3 +3.1\, i)\cdot 10^{-4}$, $m_4=-1.57 \cdot 10^{-10}$ and
    $m_5=-3.99 \cdot 10^{-10}$. Thus, the phase-amplitude reduction considers the slowest contracting direction $\sigma$ associated to the eigenvalue $m_1$, with Floquet exponent $\mu_1=-0.14$. Notice that although the other Floquet exponents are complex, the smallest one in modulus is real.
    In Figure \ref{fig:alpha}(c) we show the optimal control obtained by solving problem \eqref{OCP2}  with first order approximation for functions $Z^{\bf v}(\te,\s) \approx Z_0^{\bf v}(\te) + \sigma   Z_1^{\bf v}(\te)$ and $I^{\bf v} (\te,\s) \approx I_0^{\bf v} (\te) + \sigma   I_1^{\bf v}(\te)$, and $\alpha=0$ (without amplitude penalization) and $\alpha=0.05$. As one can observe, by including the amplitude penalization the $\sigma$ variable remains closer to zero (see Figure \ref{fig:alpha}(d)). By adding higher order terms in $\sigma$, we obtain a better precision in the description of the phase dynamics. 
    
    There is some arbitrariness in the choice of scale for the sigma variable, and this affects the choice of the parameter $\alpha$ in the cost function. Notice the difference in scales for the $\alpha$ of the inhibitory population model and the E-I network model.

\begin{figure}
\begin{tabular}{ll}
(a) & (b) \\
\includegraphics[scale=0.35]{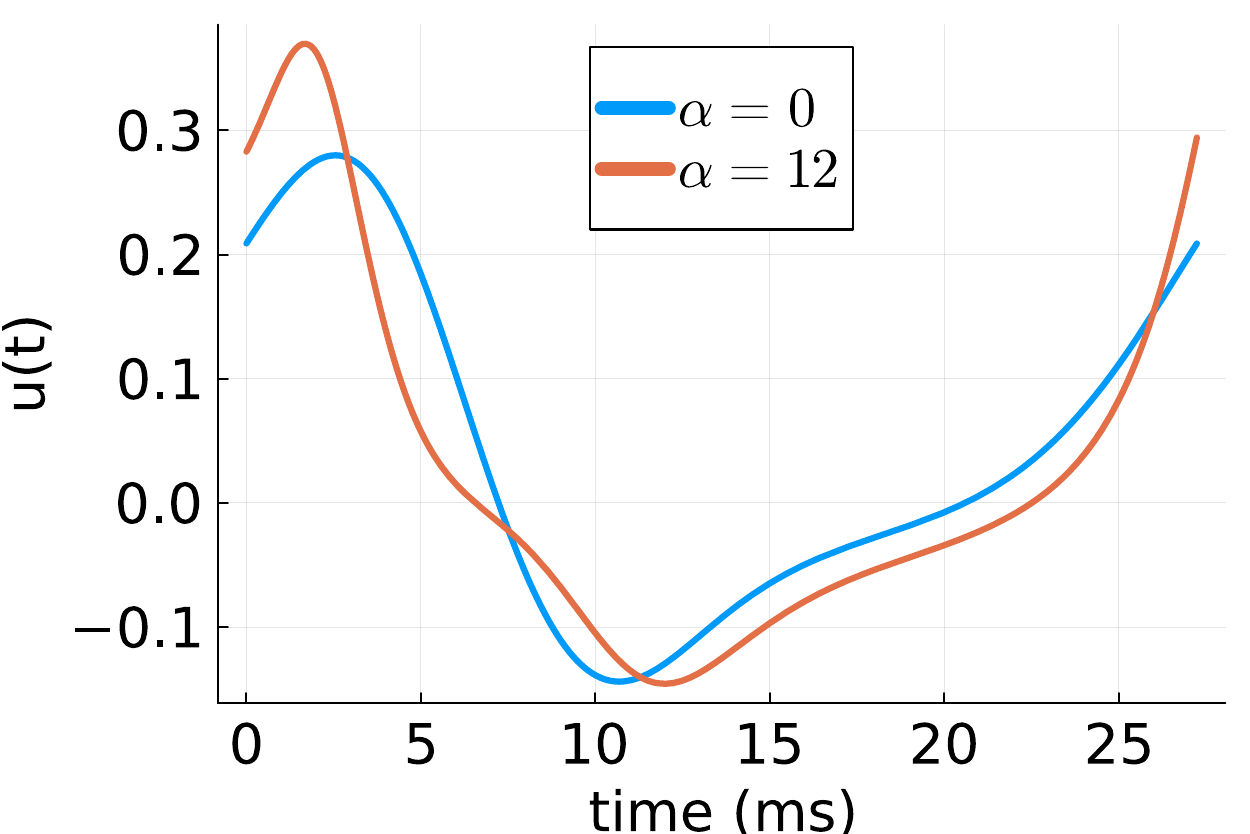} &
\includegraphics[scale=0.32]{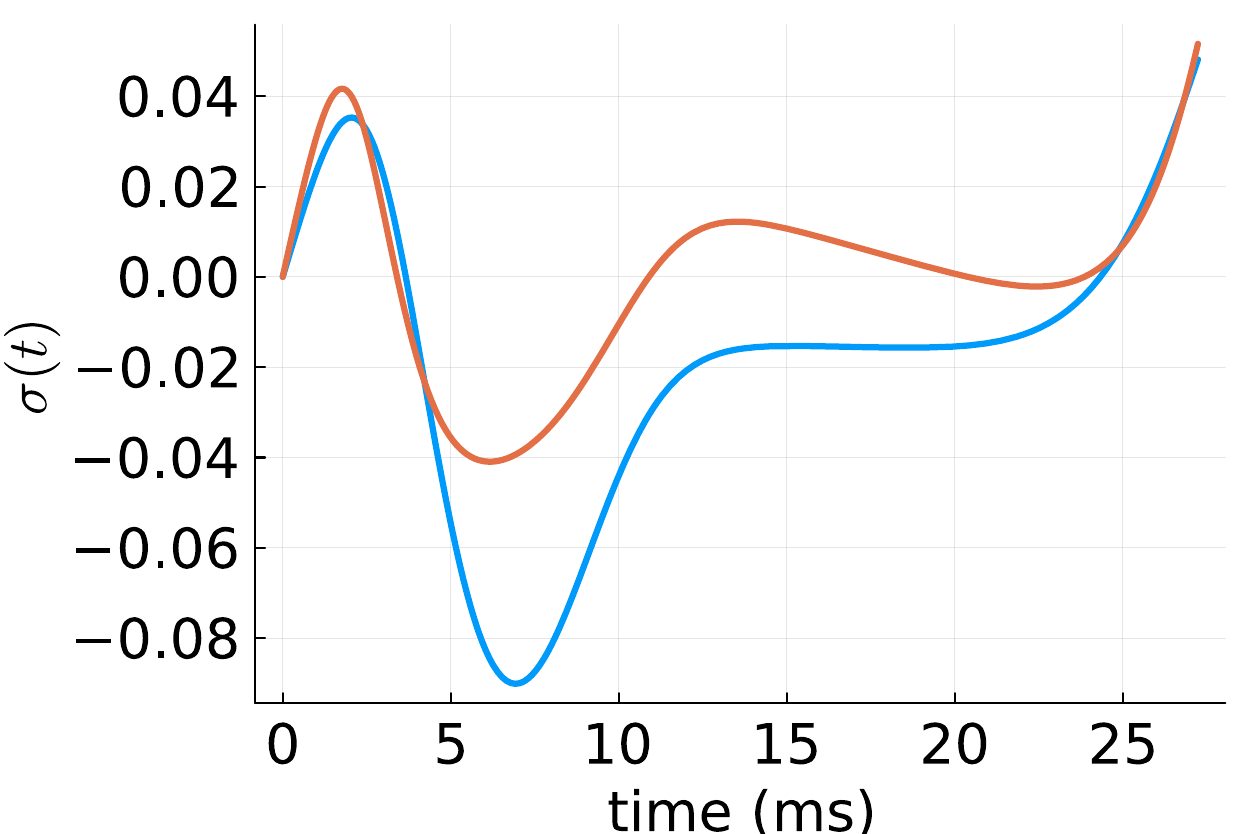}\\
(c) & (d) \\
\includegraphics[scale=0.35]{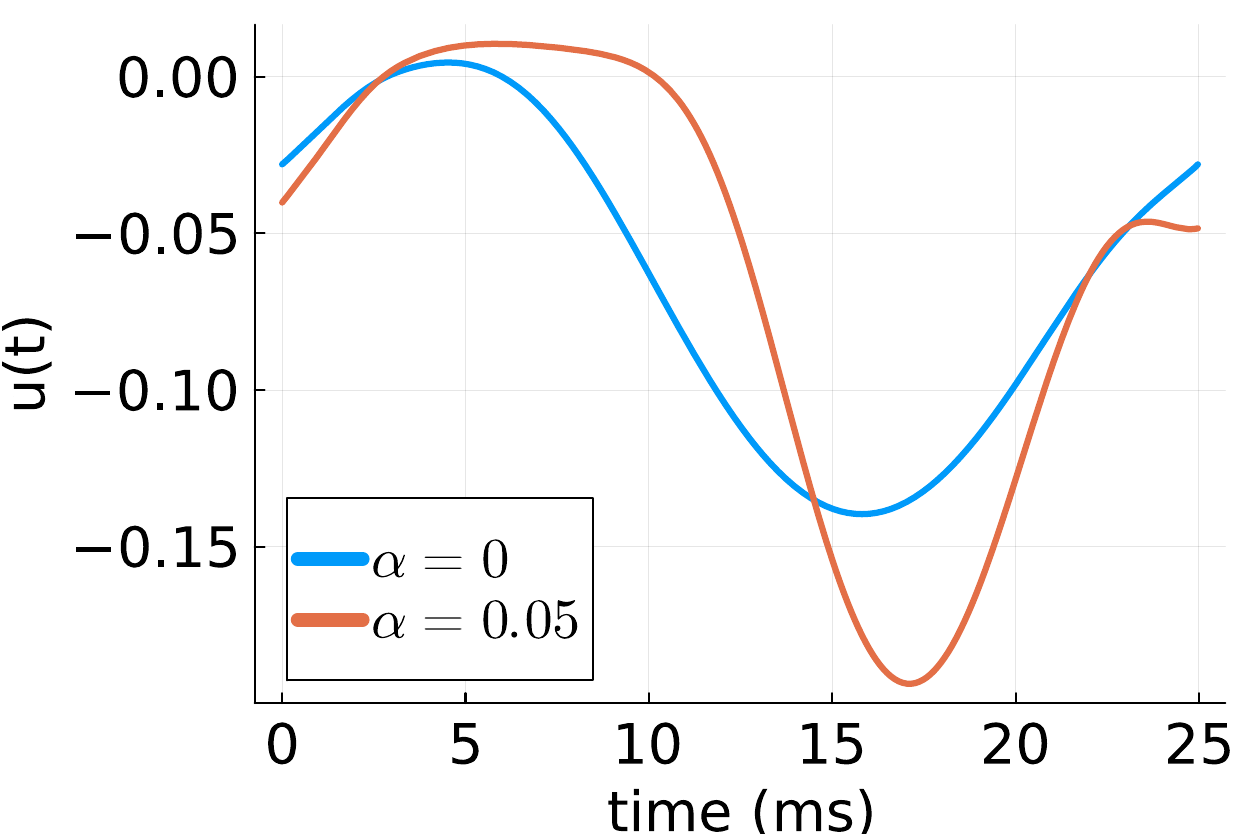} &
\includegraphics[scale=0.32]{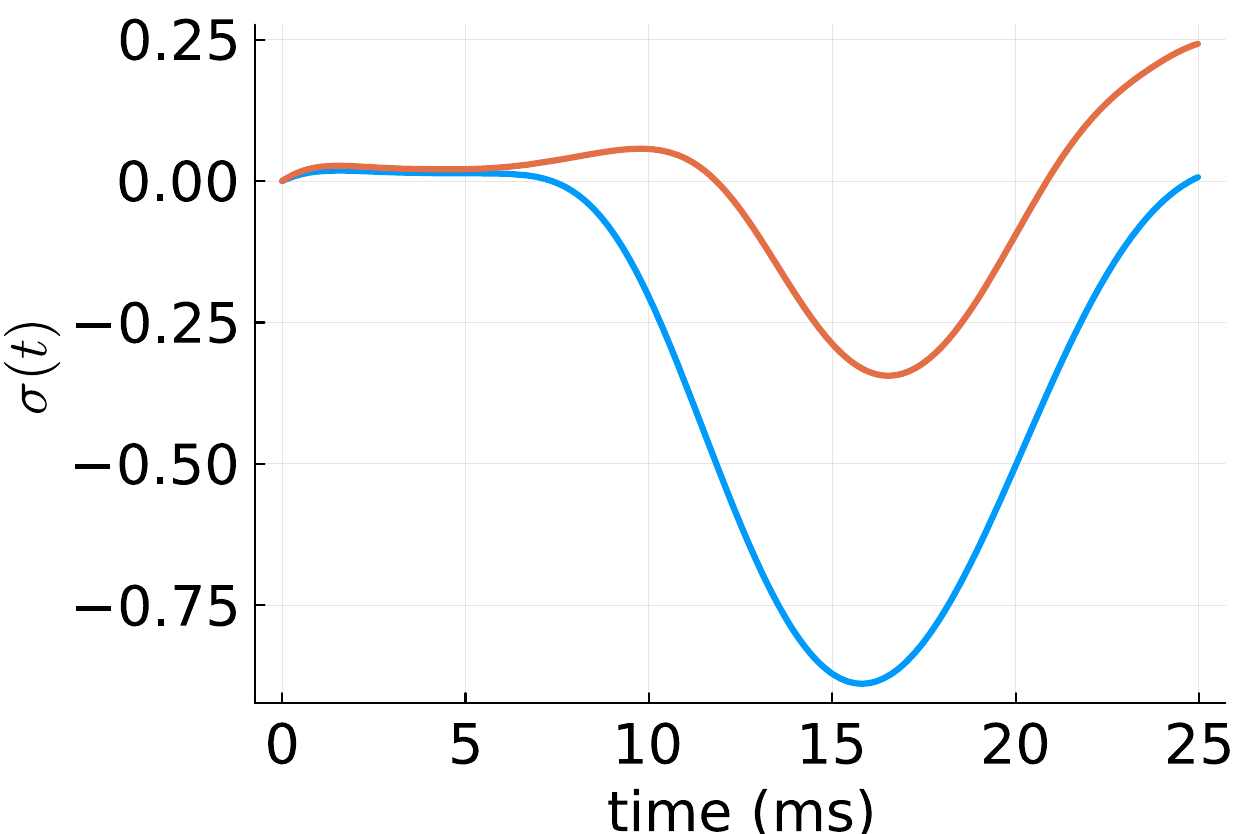}\\
\end{tabular}
\caption{(a,b) For the inhibitory population model \eqref{Inhib} and the set of parameters given in \eqref{eq:Iparam} (a) optimal control $u$ obtained by means of solving \eqref{OCP2} with 0-th order approximation with $t_f=0.8T$ and two different values of the parameter $\alpha$ in the cost function: $\alpha = 0$ (blue curve) and $\alpha=12$ (red curve); (b) Time series of the $\sigma$ coordinate corresponding to the slowest contraction direction along the controlled trajectory $x_u(t)$ (c,d) For the E-I population model\eqref{Epop}-\eqref{Ipop} and the set of parameters given in \eqref{eq:EIparam}, (c) optimal control $u$ obtained by means of solving \eqref{OCP2} with first order approximation in $\sigma$ for functions $Z^{\bf v}$ and $I^{\bf v}$, $t_f=1.2T$ and two different values of the parameter $\alpha$ in the cost function: $\alpha = 0$ (blue curve) and $\alpha=0.051$ (red curve); (d) time series of the $\sigma$ coordinate corresponding to the slowest contraction direction along the controlled trajectory $x_u(t)$ 
}
\label{fig:alpha}
\end{figure}

\subsection{Numerical implementation of the method}
\label{sec:method}

 In this section we discuss the numerical implementation of the method described in the previous sections for systems \eqref{Inhib} and \eqref{Epop}-\eqref{Ipop} that we use along this paper, but the method naturally applies to any system with a limit cycle.

 \begin{itemize}
 \item We compute the limit cycle, the period $T$ and the largest Floquet exponent $\mu$ by solving systems \eqref{Inhib} and \eqref{Epop}-\eqref{Ipop} with $u \equiv 0$ and parameters as given by \eqref{eq:Iparam} and \eqref{eq:EIparam}, respectively, with a Runge-Kutta method of order 4-5 and using a Newton's method on a suitable Poincaré map. The Floquet exponent is obtained from the monodromy matrix. 

 \item We compute the iPRC $Z_0$ and iARC $I_0$ by computing the resolventof the differential equations and taking the eigenvector of the eigenvalue 1. This is the initial condition that provides the periodic orbit that is obtained by solving the differential equations in \eqref{eq:eqPRC} and \eqref{eq:eqARC}, respectively, with the same Runge-Kutta method. The functions $Z_0$ and $I_0$ are obtained on an adaptive grid, and we interpolate them using Hermite's polynomials. 
 
 \item We compute the first order terms $Z_1$ and $I_1$ by using approximations of the parameterization $K$ in \eqref{Kparam} computed using the algorithms in \cite{Perez_Cervera20} (the formula for the terms $Z_1$ and $I_1$ in terms of the coefficients of $K$ is described in detail in Appendix C in \cite{Perez_Cervera20}). The functions are obtained in a discretized grid and we interpolate them with cubic splines. 
 
 \item We compute numerically the exponential map of Definition \ref{expo} by integrating the Hamiltonian system \eqref{HamP} (resp. \eqref{HamPA}) for problem \eqref{OCphase} (resp. \eqref{OCP2}) using a Runge-Kutta method of order 4-5 \cite{DiffEqjl}.
 
  \item Use Newton's method to find the zeros of the shooting function. The derivatives of the shooting function can be obtained either using automatic differentiation or finite differences, and both methods work well. At each iteration of the method, the exponential map is computed as in the previous step. The zeros of the shooting function provide the initial condition for $\lambda_{\theta}$ to solve the Hamiltonian system \eqref{HamP} (resp. $\lambda_{\theta}$ and $\l_{\s}$ to solve the Hamiltonian system \eqref{HamPA}). The solution is stored in an adaptive grid provided by the ODE numerical solver. This process gives us the optimal control in the same adaptive grid (or, to be more precise, the extremal control) using equations \eqref{eq:uoptimal1} (resp. \eqref{eq:uoptimal2}).

  \item Finally, we plug the computed control $u$ into the system (\ref{Inhib}) for the self-inhibitory population and (\ref{Epop})-(\ref{Ipop}) for the E-I population in order to obtain the controlled extremal solution in the original variables. Since the function $u$ is discretized,  we interpolate using Hermite's polynomials to integrate the system using a Runge-Kutta method of order 4-5.
\end{itemize}

\section{Applications to Communication Through Coherence}
\label{sec:CTC}

In this section, we test our methodology to study how the control can help to establish communication between two oscillating neuronal groups in the context of communication through coherence (CTC) theory. As explained in the introduction, the CTC theory suggests that two oscillating neuronal groups communicate effectively when they are properly \emph{phase locked} so that the presynaptic periodic input volleys arrive at the peaks of excitability of the postsynaptic group (receiving population) or, equivalently, at the phase of  minimum inhibition. In this context, we use the control as a top-down mechanism capable of delivering a specific input to align the receiving population with the optimal phase to establish communication with a given presynaptic population. To model the input from the presynaptic neural group, we introduce a periodic input in the form of successive bursts of excitatory current to the target network, that we will refer as the \emph{primary input}. The target network is modelled by means of the E-I network model introduced in the previous section. In this context, we say that communication between two oscillating neuronal groups is established if there is an amplification of the firing rate of the postsynaptic population due to the external input, while the magnitude of the amplified response is modulated by the input strength.  

We work with two different settings. First, we perturb the E-I population with one input in an adverse scenario for communication, namely, when the period of the input is equal or larger than the natural period of the E-I population. We design a control so that the input can establish communication with the target network. In the second scenario, we add a distractor as a new input and we probe whether the control is capable to maintain the communication with the primary while ignoring the distractor. 

Each input signal $p_j(t)$ to the target population will be a periodic function modeled by a von Mises probability density function in order to mimic realistic inputs in the cortex, where the input volleys are concentrated around some phases of the cycle. Mathematically, we define the $T_j$-periodic input $p_j(t)$ as
\begin{equation}\label{eq:vm_input}
p_j(t)= A_j \frac{\exp\left( \kappa \cos (\frac{2 \pi}{T_j} (t - \nu))\right)}{ I_{0}(\kappa)}, \text{ for } t \in [0,T_j),
\end{equation}
where $I_0=\frac{1}{T_j}\int_{0}^{T_j}\exp(\kappa \cos (2\pi\,(t-\nu)/T_j)dt$. That is, the temporal average over one period is $A_j$.
The parameters $\kappa$ and $\nu$ control the width and position of the peak of the input volley, respectively. In this paper, we have chosen $\kappa = 12$, to get a highly coherent signal (small width), closer to what can be found experimentally. 
This input $p_j(t)$ enters into system \eqref{Epop}-\eqref{Ipop} through the terms $I_e(t)$ and $I_i(t)$. More specifically, we take 
\begin{equation}\label{eq:Ik}I_{k}(t)=\bar{I}_{k}+ \tau_k\sum\limits_{j} p_j(t), \quad k \in \{e,i\}
\end{equation} 
where $\bar{I}_{k}$ are tonic currents. Notice the time constant $\tau_{e,i}$ multiplying the periodic inputs $p_j(t)$. 

Let us define some notation first. The solution of the original system \eqref{Epop}-\eqref{Ipop} without external perturbation $p$ or control $u$ will be denoted by ${\bf x}$, the solution of the controlled system (when no other perturbation is applied except the control $u$) will be denoted by ${\bf x}^u$, the solution of the perturbed system (in the absence of control) will be denoted by ${\bf x}^p$ and the solution of the controlled system with the perturbation will be denoted by ${\bf x}^{up}$. The subscripts will be applied to all the variables of the vector $\bf{x}$, i.e. ${\bf x}^k=(r^k_e, V^k_e, S^k_{ei}, S^k_{ee}, r^k_i, V^k_i, S^k_{ie}, S^k_{ii})$, for $k\in\{\emptyset,u,p,up\}$. 

In order to establish communication between two populations, the input from the presynaptic population phase-locks with the adequate phase with the oscillatory activity of the target population, so that the presynaptic input produces an effect on the firing rate of the postsynaptic population. Moreover, in \cite{ReynerHuguet22} (as also suggested in \cite{SarafY19}) it is found that changes in the input strength $A$ are transmitted and better reproduced at the output by the spike synchronization properties of the E-population (reflected through both the maximum firing rate and half-width of the E-volley) rather than by the average firing rate $r_e$. For this reason we measure changes in the firing rate of the postsynaptic group in the spike synchronization.

Following \cite{PCSH_cnsns20,ReynerHuguet22} we chose two factors to quantify communication: the \emph{synchronization index} (SI) $\rho$, which measures the synchronization or coherence of the presynaptic and postsynaptic groups and the \emph{amplification factor} $\Delta\alpha$, which measures the amplification of the firing rate of the postsynaptic group due to the external input. Next, we provide a precise definition of these two concepts.

To measure synchronization of the E-I network with a $T_j$-periodic input $p_j(t)$, we consider the values of the phase variable $\theta$ at integer multiples of the period $T_j$, i.e $\theta_k=\theta(t_0+k T_j) \in \mathbb{T}$, with $k \in \mathbb{N}$. Recall that the time evolution of the phase variable $\theta$ is provided by system \eqref{eq:PA} (where we have considered $\nabla \Theta$ and $\nabla \Sigma$ approximated up to first order in $\sigma$). Thus, the synchronization index $\rho$,  also known as {\em vector strength} or {\em Kuramoto order parameter}, \cite{Pikovsky01}, is a measure of how clustered are the events over a cycle and is computed according to the following formula,
\begin{equation} \label{eq:SI}
z=\rho e^{i \phi}=\frac{1}{N}\sum_{j=1}^N e^{i \theta_j}, \qquad \rho=|z|.
\end{equation}
Notice that perfect clustering is obtained when $\rho=1$, whereas if phases are scattered around the circle, then $\rho \approx 0$.

We also define the factor $\Delta\alpha$ as the rate change of the maximum of the firing rate E-volley due to the external perturbation. Mathematically, 
\begin{equation} \label{eq:Dalpha}
    \Delta\alpha = \frac{\frac{1}{N} \sum_{i=1}^{N} r_e^{up}(t_i)}{ \bar{r}^u_e}, 
\end{equation}
where $\{t_i\}_{i=1}^N$ correspond to the times where local maxima of $r_e^{up}(t)$ are attained for $N$ cycles, and $\bar{r}^u_e$ is the maximum of the excitatory firing rate when only the control is applied. We recall that in the latter case we obtain a periodic orbit.

\subsection{Control-induced communication for a single input}\label{sec:SingleInput}

\begin{figure}
\begin{tabular}{ll}
(a) & (b) \\
\includegraphics[scale=0.33]{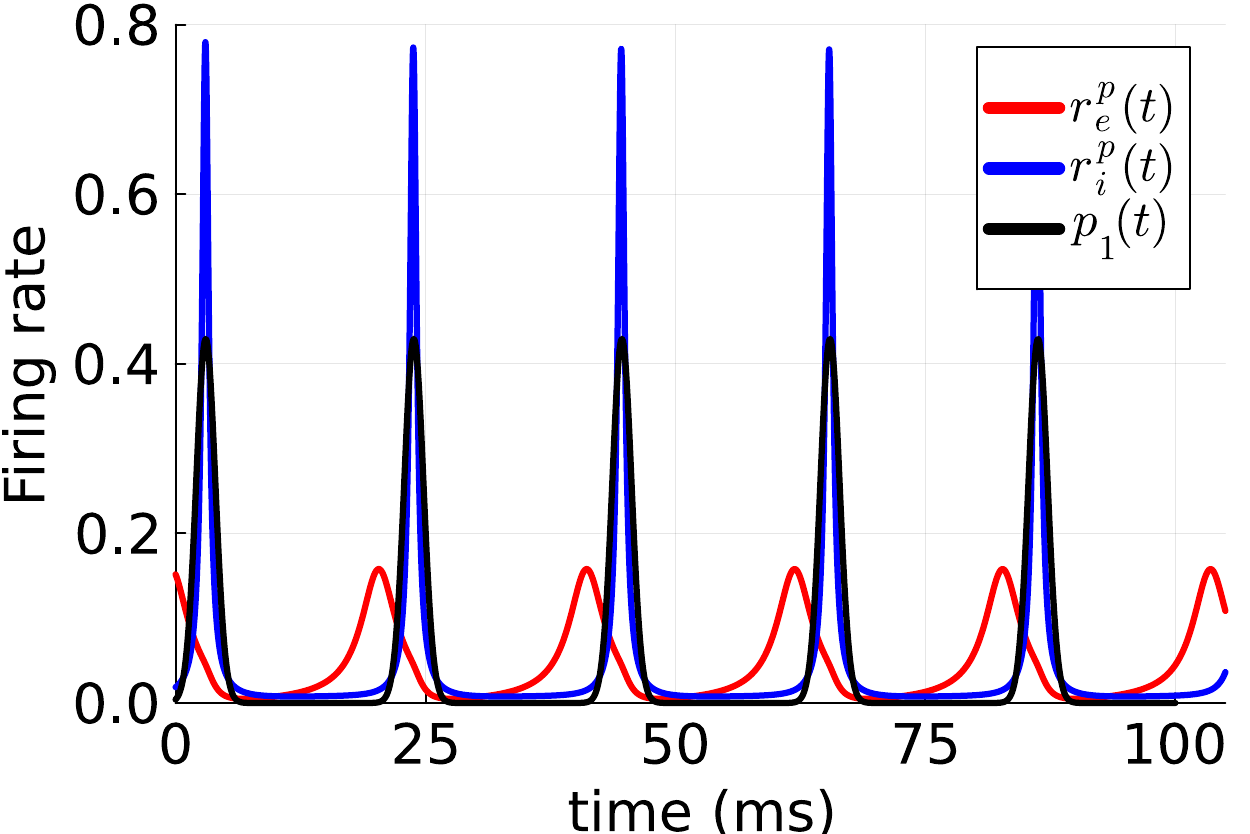} &
\includegraphics[scale=0.33]{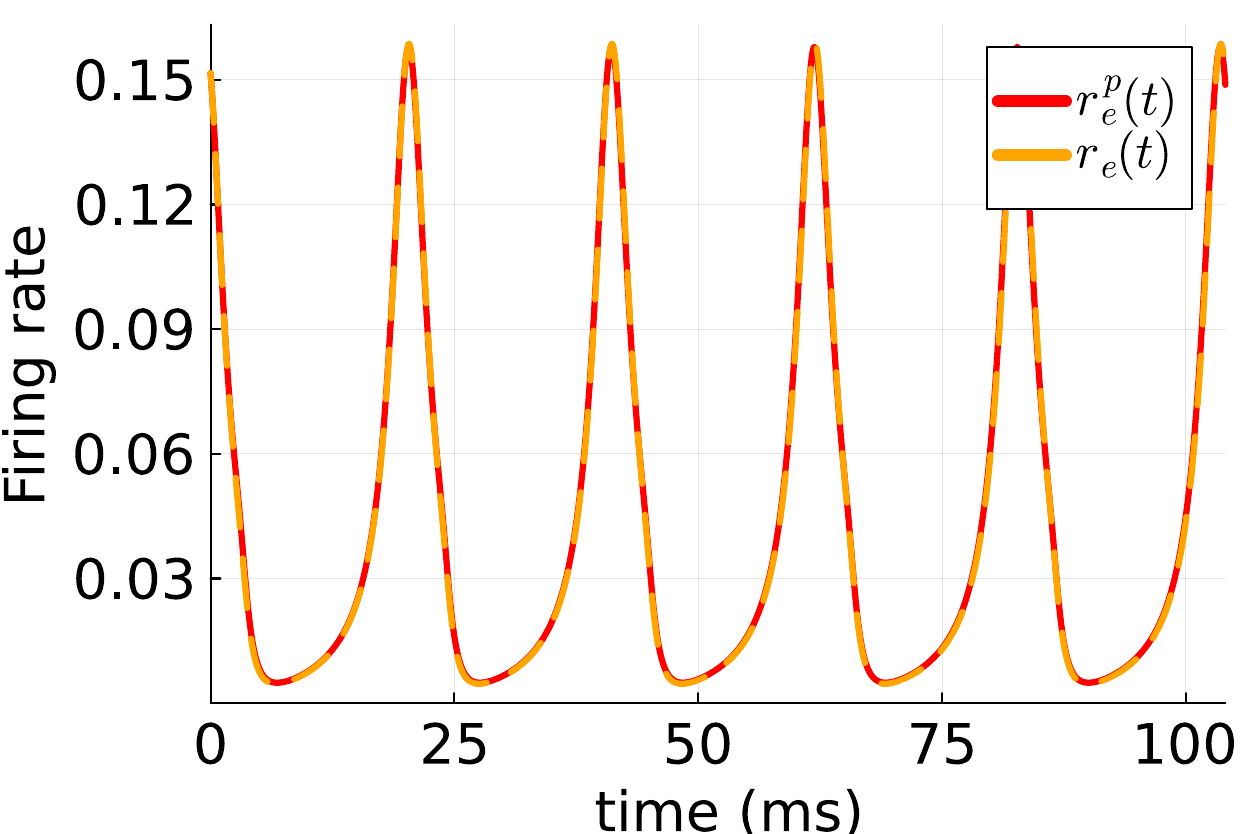} \\
(c) & (d) \\
\includegraphics[scale=0.33]{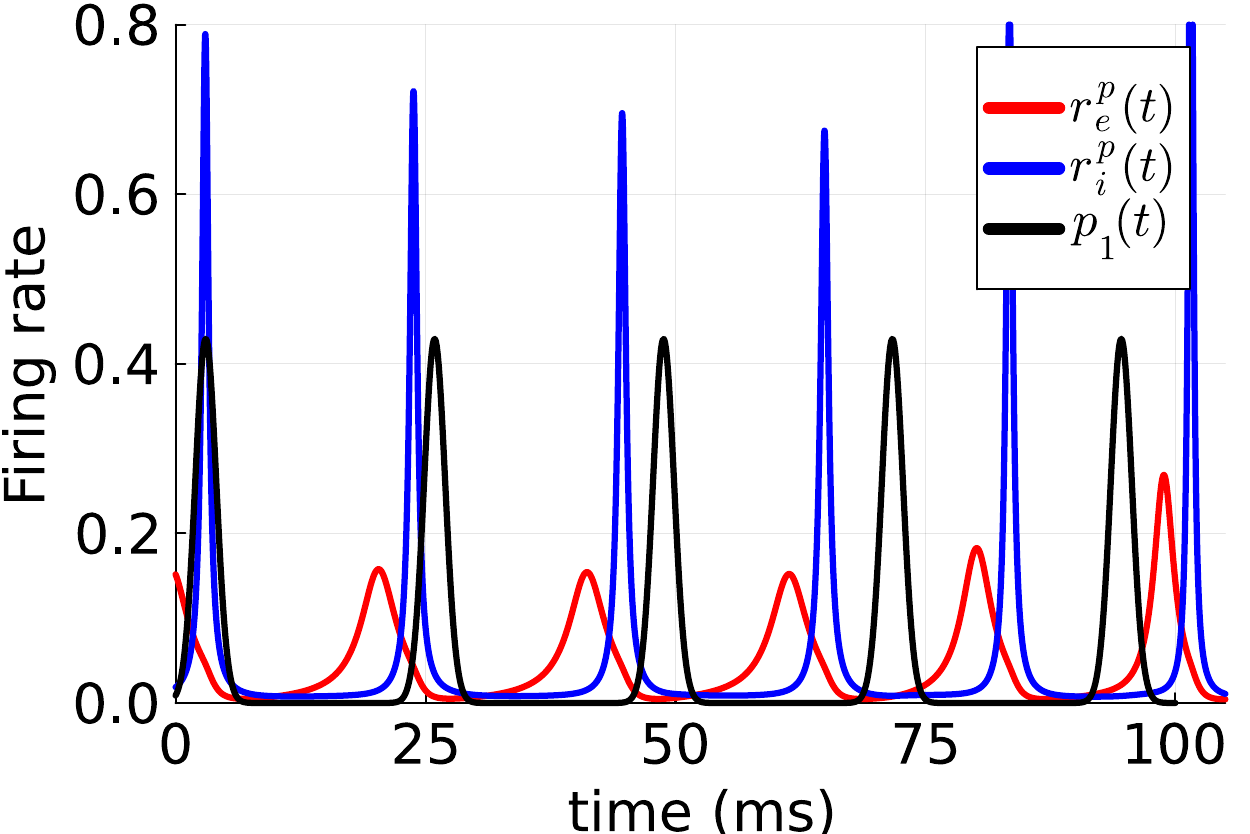} &
\includegraphics[scale=0.33]{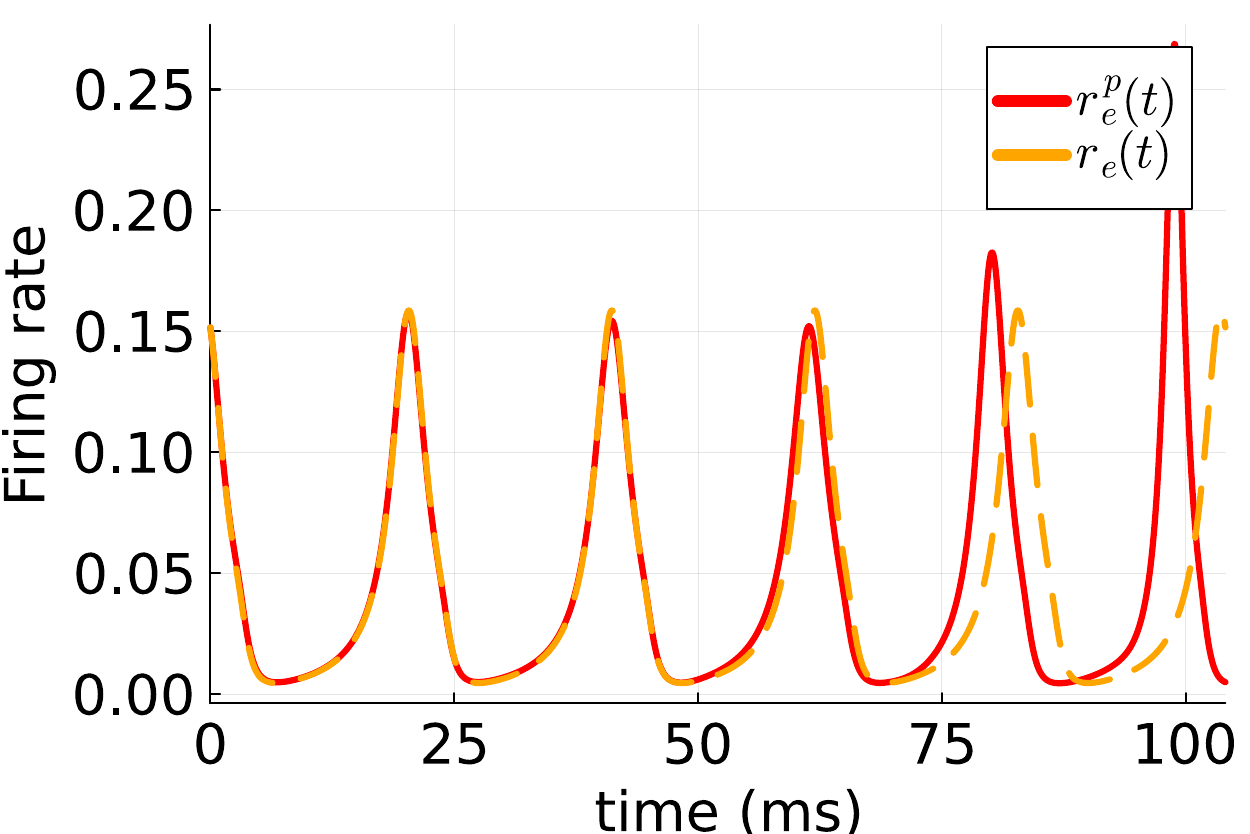}
\\
\end{tabular}
    \caption{Firing rate of the target network in the presence of a perturbation but in the absence of control. (a,c) Temporal evolution of the perturbed (but non-controlled) firing rate of the E-cells $r^p_e$ (red), I-cells $r^p_i$ (blue) and the external input $p_1$ (black curve) over a few cycles for the system \eqref{Epop}-\eqref{Ipop} with $u(t) \equiv 0$ and the perturbation $p_1$ defined in \eqref{eq:vm_input} with $A_1 = 0.05$ and (a) $T_1 = T$ and (c) $T_1=1.1 T$. 
    (b,d) Time series of the original firing rate $r_e$ without perturbation (orange) and the same $r_e^p$ as in panel (a,c), respectively. Notice that the perturbation has barely no effect on the E-firing rate when there is no control.}
    \label{fig:ctc_setting}
\end{figure}

From previous work \cite{ReynerHuguet22}, we know that inputs whose period is equal or greater than the natural period of the firing rate of the postsynaptic population are not capable to entrain the E-I network to communicate effectively; see Figure~\ref{fig:ctc_setting}. Notice that the perturbation does not phase-lock to $r_e$ when $T_1 >T$ (Figure~\ref{fig:ctc_setting}(c)) and even when there is entrainment for $T_1=T$ (Figure~\ref{fig:ctc_setting}(a)) the perturbation does not affect the firing rate of the E-cells (see that there is no difference in the firing rates $r_e$ and $r_e^p$ in Figure~\ref{fig:ctc_setting}(b)). In this section, we will show that, even in this adverse situation, an adequate control can set the target system in the optimal phase for communication with the presynpatic population.

Accordingly, we consider the external input to be a single periodic stimulus $p_1(t)$ of von Mises type \eqref{eq:vm_input}. We test three different periods: $T_1 = T$, $T_1 = 1.1T$, and $T_1 = 1.2T$, where $T$ is the period of the unperturbed cycle in the E-I network.

Given $T_1 = aT$, for $1 \leq a < 2$, our control strategy consists of choosing a value $t_f$ for the optimal-control problem \eqref{OCP2} so that the period of the target population is increased. More precisely, the period is lengthened to allow the input volleys to arrive while the inhibition has not yet been activated and is at its minimum. Mathematically,
\begin{equation}
\label{eq:tf}
t_f = a(T+t_{{r_i}_{\max}}-t_{{r_i}_{\min}}),
\end{equation}
where ${r_i}_{\max}$ and ${r_i}_{\min}$ denote the maximum and the minimum, respectively, of the firing rate of the inhibitory population $r_i$ of the unperturbed system on the limit cycle.

Then, we compute the extremal control for one period by means of solving \eqref{OCP2} with $t_f$ as in \eqref{eq:tf} for $a=1,1.1,1.2$, using the methodology described in Section \ref{sec:Ocontrol} and we apply the control periodically to the original system \eqref{Epop}-\eqref{Ipop}. Figure \ref{fig:PP}(a) shows the controls $u_1$, $u_{11}$ and $u_{12}$, computed for one cycle, repeated over several cycles. Notice that the shape is similar to the opposite of the sum of the PRCs (see Figure \ref{fig:PRCARC}(c)). Figure \ref{fig:PP}(b) shows the effect of the control on system \eqref{Epop}-\eqref{Ipop} without any other time-dependent perturbation ($I_e\equiv 10$ and $I_i\equiv 0$). The period of the controlled population becomes $T_{new} = t_f$ defined in \eqref{eq:tf}. The controlled trajectories (blue, red, green for $a=1, 1.1, 1.2$, respectively) detach from the original limit cycle (grey) and approach the limit cycle of the corresponding controlled system. Since our aim was to apply the control periodically, and this fact displaces the trajectory away from the limit cycle, we have decided not to penalize the distance to the limit cycle in the cost function, and thus, the parameter $\alpha$ is set to 0. See the discussion in section~\ref{sec:discussion} for more details.

The time-dependent control system in the original coordinates is given by system \eqref{Epop}-\eqref{Ipop} with
\[I_e(t)=\bar{I}_e + \tau_e p_1(t),\]
and 
\[I_i(t)=\bar{I}_i + \tau_i p_1(t).\]

\begin{figure}
\begin{tabular}{ll}
(a) & (b) \\
\includegraphics[scale=0.32]{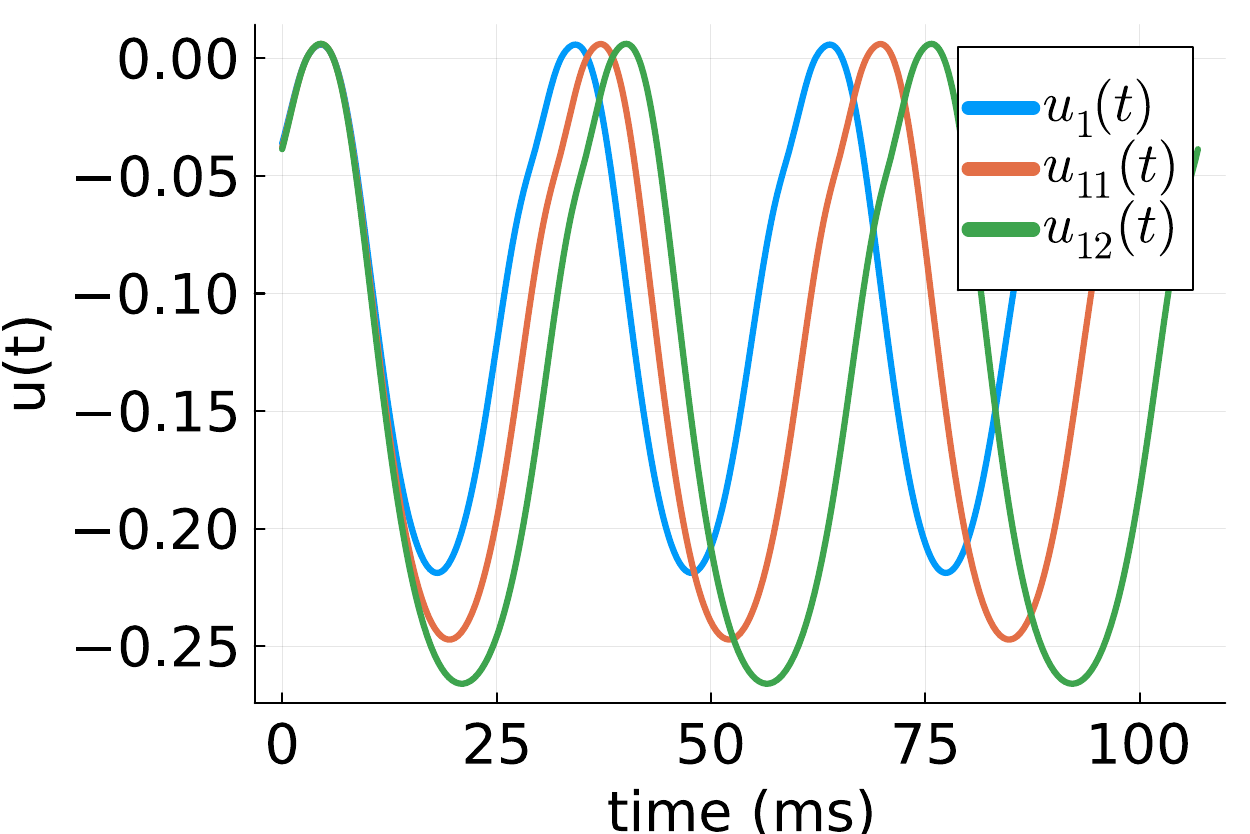} &
\includegraphics[scale=0.32]{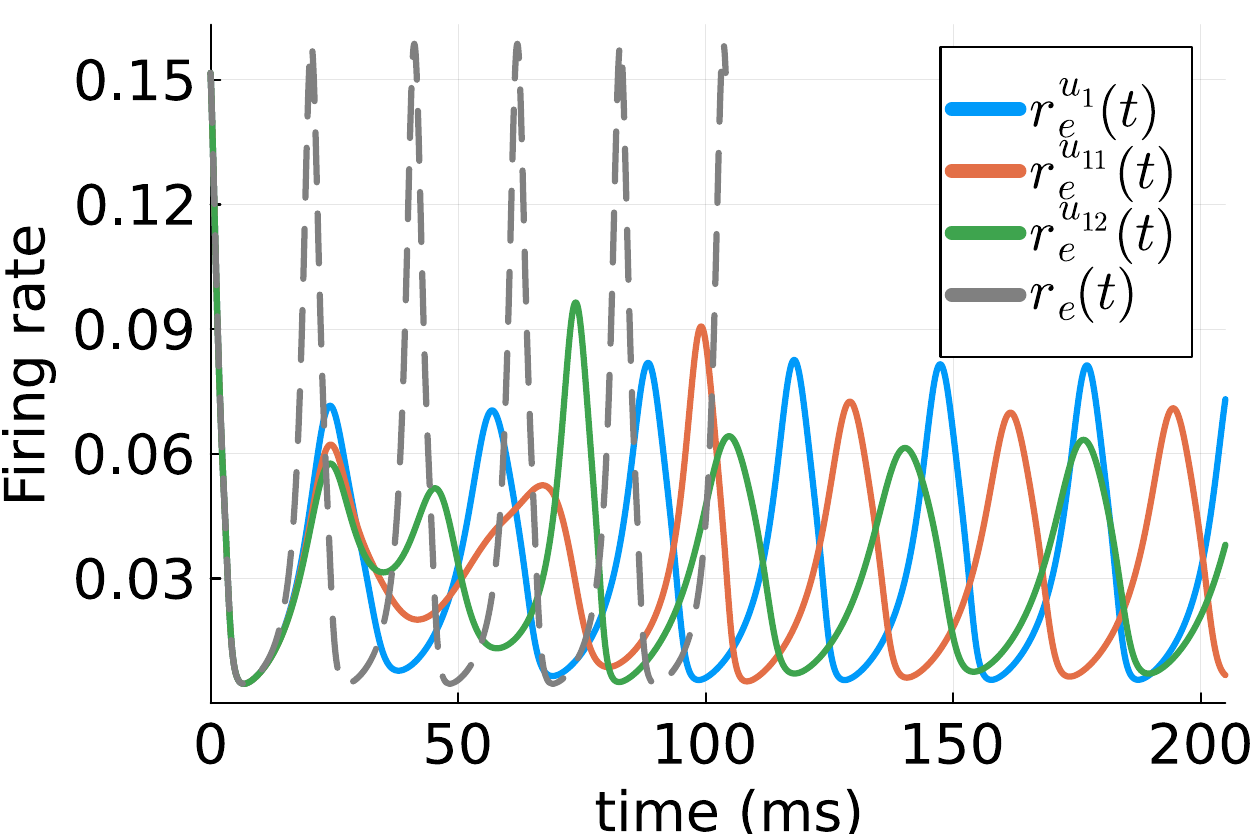} \\
\end{tabular}
\caption{(a) Time evolution of the control $u(t)$ obtained by solving the control problem \eqref{OCphase} with $t_f$ as in \eqref{eq:tf} and $a=1$ ($u_1$), $a=1.1$ ($u_{11}$) and $a=1.2$ ($u_{12}$). 
(b) Time evolution of the E-firing rate $r_e$ when the periodic controls in panel (a) are applied to the E-I system. Notice that after a transient the trajectory sets at a stable oscillatory regime of lengthened period $t_f$. The evolution of the variable $r_e$ on the limit cycle for the original E-I system of period $T=20.811$ is plotted up to time $t=100$ for comparison purposes (grey dashed curve).}
\label{fig:PP}
\end{figure}

We expect that the perturbation $p_1$ will now be able to entrain the controlled system, as opposed to the case in which the control was not present. To provide a measure of the entrainment, we compute the synchronization index \eqref{eq:SI} in each case (controlled versus non-controlled) and for three different periods $T_1$ of the perturbation that are larger than the natural period $T$ of the target network. We recall that for each period $T_1$, the control is different since the target period $T_{new}=t_f$ varies with $a$ (recall that $T_1=aT$).  

\begin{figure}
    \centering
    \includegraphics[scale=0.6]{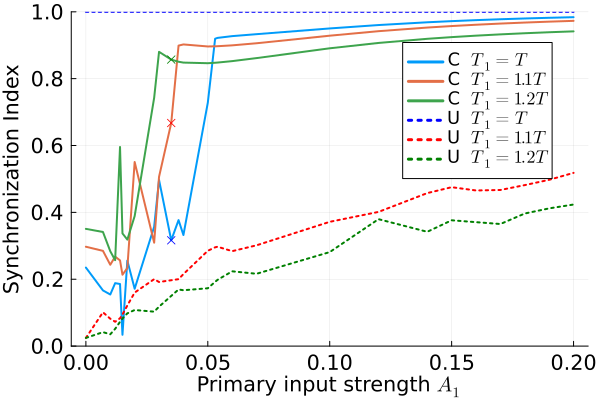}
    \caption{Synchronization index SI \eqref{eq:SI} between the E-population of the target network \eqref{Epop}-\eqref{Ipop} and the external periodic input $p_1(t)$ as a function of the amplitude of the input $A_1$ and different values of the period $T_1$ (color legend). The index is computed over $N=80$ cycles using the stroboscopic map. Dashed curves correspond to the non-controlled system ($u\equiv 0$) while solid curves correspond to the controlled system ($u$ designed according to \eqref{eq:tf}). The crosses indicate the value of the amplitude $A_1=0.035$ plotted in Figure \ref{fig:threeg1inp}.}
    \label{fig:sync1input}
\end{figure}

Figure \ref{fig:sync1input} shows the synchronization index $\rho$ defined in \eqref{eq:SI} between the E-I network (postsynaptic group) with a $T_1$-periodic input $p_1$ (presynaptic group) as a function of the input strength $A_1$ for different values of $T_1$. Notice that the presence of the control allows the perturbation to well entrain the target system for strong enough inputs (synchronization index value approaches $\rho=1$ when $A_1$ increases). Compare the curves with control (solid curves) and without it (dashed) in Figure \ref{fig:sync1input}. 
We can observe that there is a threshold for the amplitude of the input above which the postsynaptic population almost phase-locks with the input (notice the sudden jump in the curves of Figure \ref{fig:sync1input}, with values of SI approaching 1). Somehow counter-intuitively, the larger the period of the primary input, the lower is this threshold, while the synchronization index caps at a lower value for larger values of $T_1$.

One potential interpretation for this finding could be that the control, $u$, exhibits predominantly inhibitory behavior (as evidenced by Figure \ref{fig:PP}(a)). As $T_1$ increases, $u$ becomes even more inhibitory, leading to a lengthening of the cycle and the suppression of inhibitory neurons. This, in turn, amplifies the response to the external input and facilitates synchronization, particularly when the input strength $A_1$ is weak. Upon reaching a certain threshold, the strength of the input $A_1$ can overcome the inhibition from the control input $u$. 
Consequently, inputs with shorter periods $T_1$ become more effective at entraining the network because they can outpace the natural activation of the I-cells in the network \cite{ReynerHuguet22}.
Despite the control input is slowing down the activation of the I-cells to a rate lower than that of the external primary input, the external input can still interfere with the effects of the control input, particularly when the external input is much slower. This interference can have detrimental effects on entrainment, and explain why SI is lower for larger values of $T_1$ (when $A_1$ is large).

To illustrate this explanation, in Figure \ref{fig:threeg1inp}(top) we show the time series of the firing rate of the E and I populations for three representative cases corresponding to the crosses in Figure \ref{fig:sync1input}. Notice that when the input volley arrives prior to the activation of the I-cells, it can trigger a response in the target system. Thus, the time difference between the peak of the perturbation and the peak of the inhibitory firing rate is determinant for communication; in particular, if these peaks match, the input  will simply be inhibited and communication will not be established. In order to quantify this phenomenon, we consider the sequences $\{t_{r_i}^{up}(k)\}_{k}$, where $t_{r_i}^{up}(k)$ is the time of the $k$-th local maximum of $r_i^{up}$, and $\{t_{p_1}(k)\}_{k}$, where $t_{p_1}(k)$ is the time of the $k$-th local maximum of $p_1$. From these two sequences, we define 
\begin{equation}\label{eq:deltatauk}
\Delta\tau_k:=t_{r_i}^{up}(k)-t_{p_1}(j_k),
\end{equation}
where $j_k=\underset{j\in\{1,\dots,N_C\}}{\textrm{argmin}}|t_{r_i}^{up}(k)-t_{p_1}(j)|$ and $N_C$ is the number of cycles of the perturbation. 

In Figure \ref{fig:threeg1inp}(bottom), we show the histograms of $\Delta\tau=\{\Delta\tau_k\}_k$ for the three cases shown in Figure \ref{fig:threeg1inp}(top). For the computations, we used a  simulation of 1500\,ms (in Figure \ref{fig:threeg1inp}(top) only a representative time window is shown). We clearly observe a more uniformly distributed $\Delta\tau$ histogram along a cycle for Figure~\ref{fig:threeg1inp}(d) (corresponding to $T_1=T$), which translates to a lower synchronization index. In contrast, in Figure \ref{fig:threeg1inp}(f) (corresponding to $T_1=1.2T$) we observe a more concentrated histogram between -10\,ms and -5\,ms, which we can deduce that it corresponds to the phases of the cycle with higher excitability. Figure~\ref{fig:threeg1inp}(e), corresponding to the medium synchronization index, shows a transition between the previous two situations. 

\begin{figure}
\begin{tabular}{lll}
(a) & (b) & (c) \\
\includegraphics[width=0.33\textwidth]{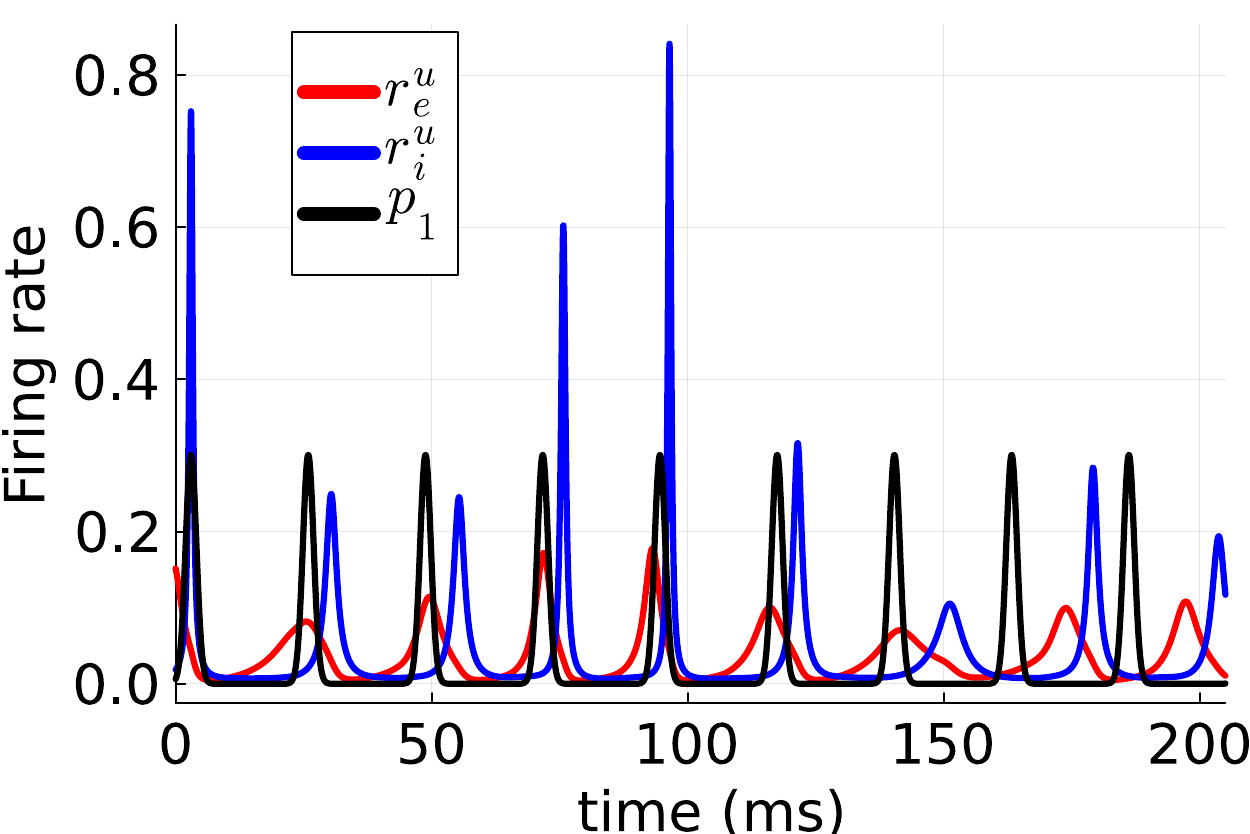} &
\includegraphics[width=0.33\textwidth]{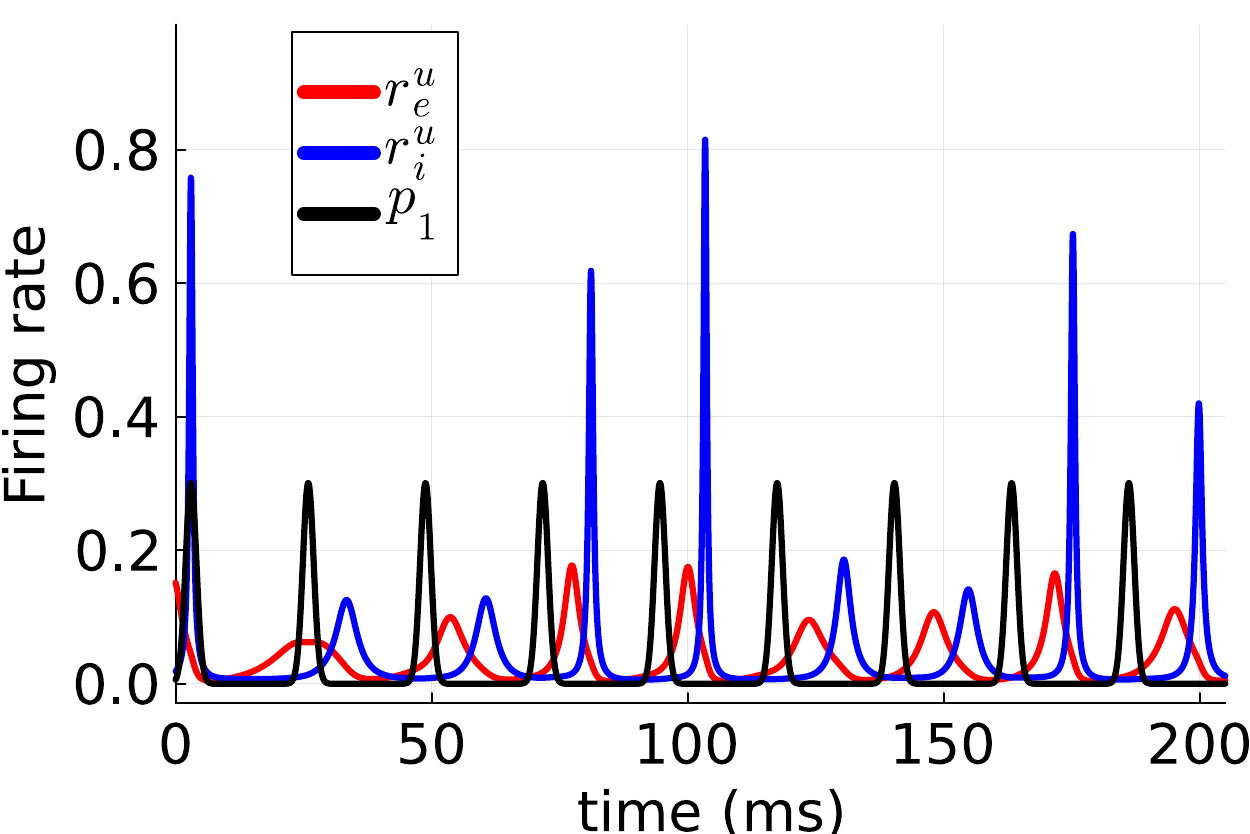} &
\includegraphics[width=0.33\textwidth]{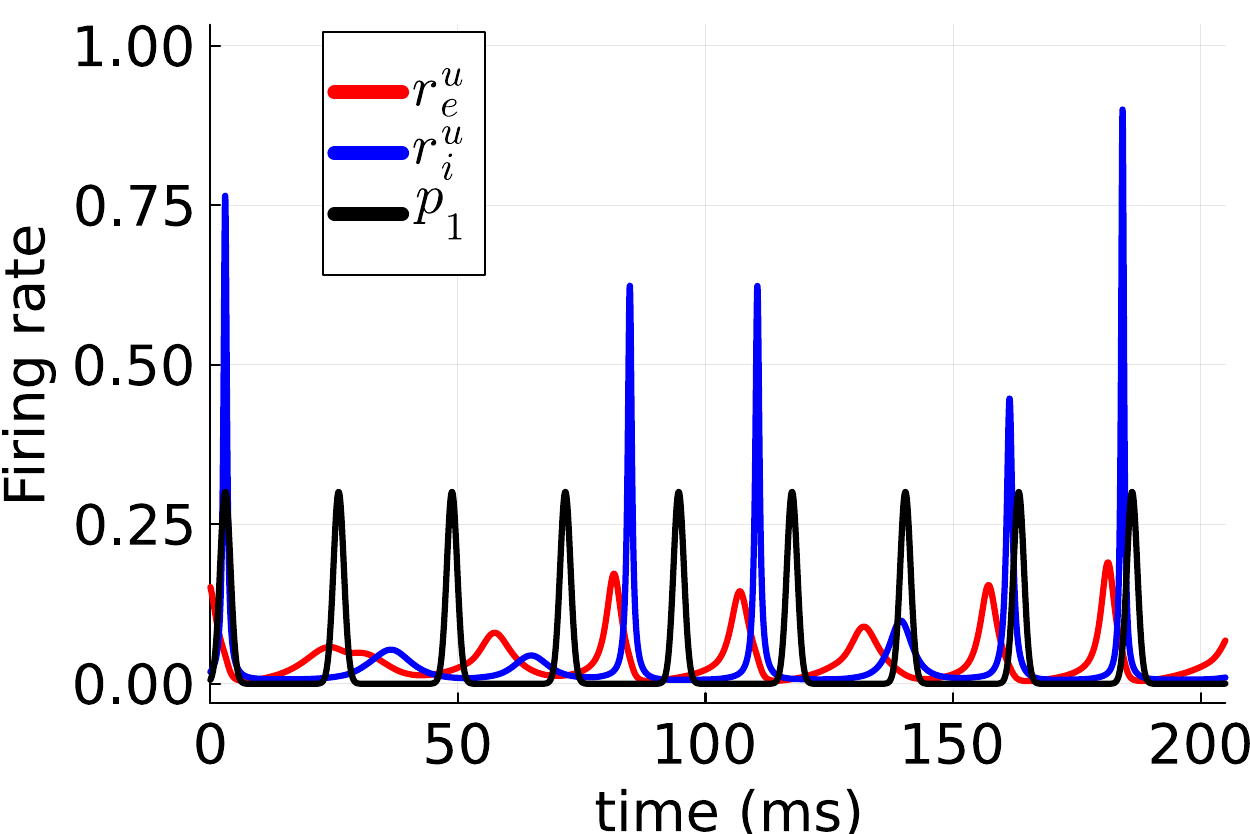} \\
(d) & (e) & (f)\\
\includegraphics[width=0.33\textwidth]{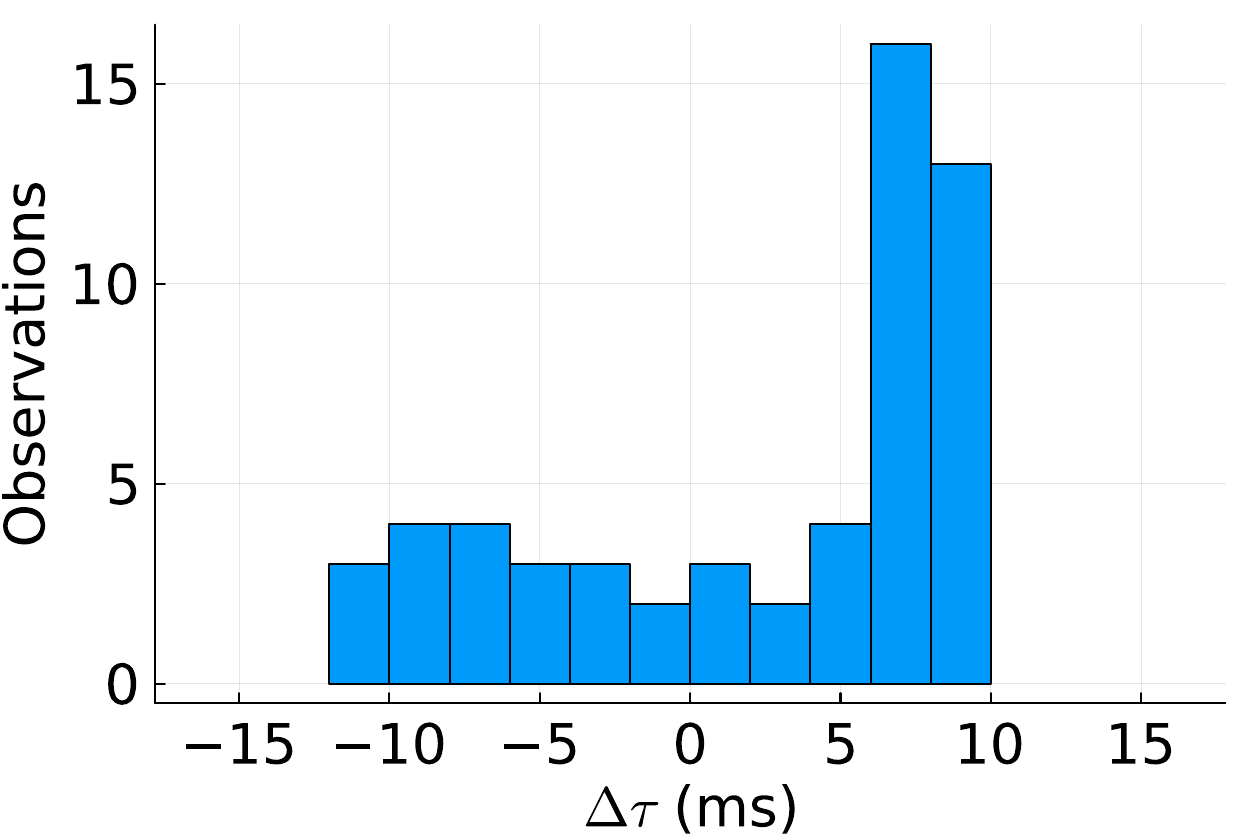} 
&
\includegraphics[width=0.33\textwidth]{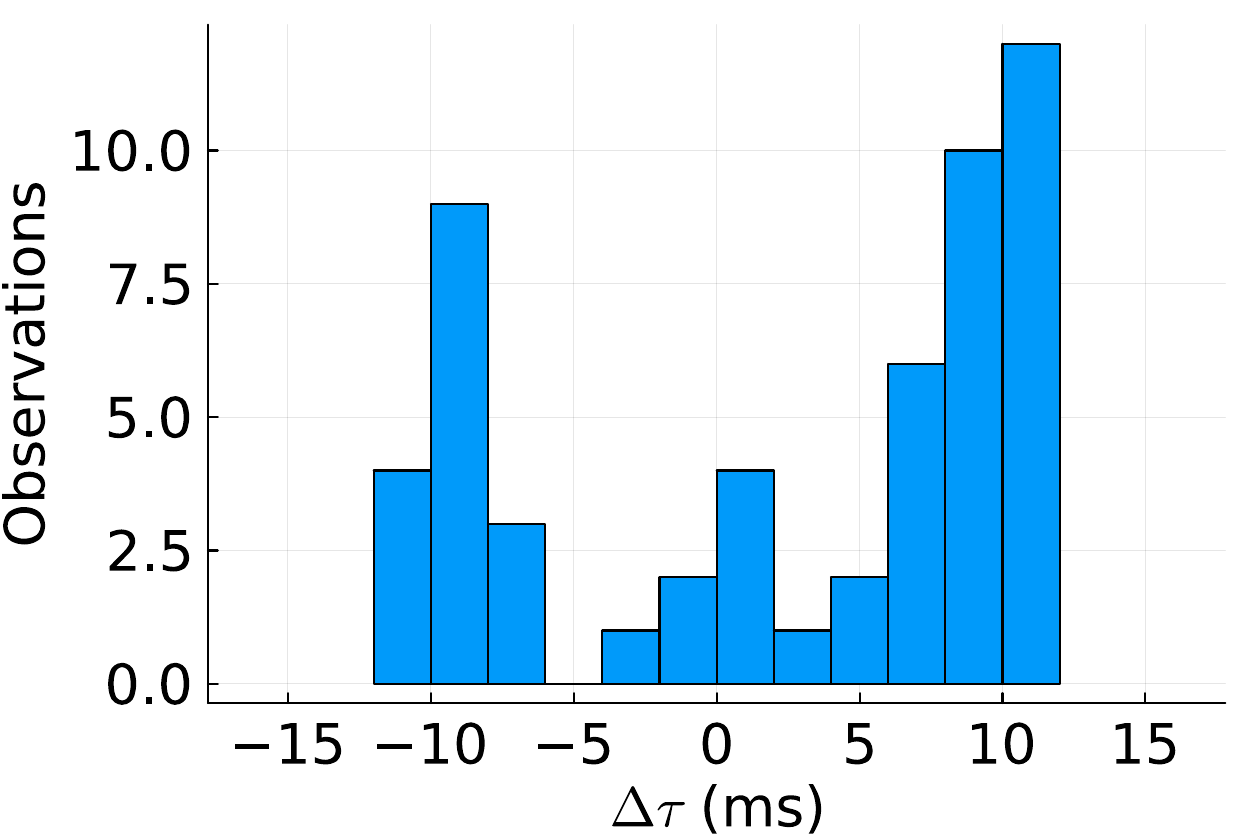} 
&
\includegraphics[width=0.33\textwidth]{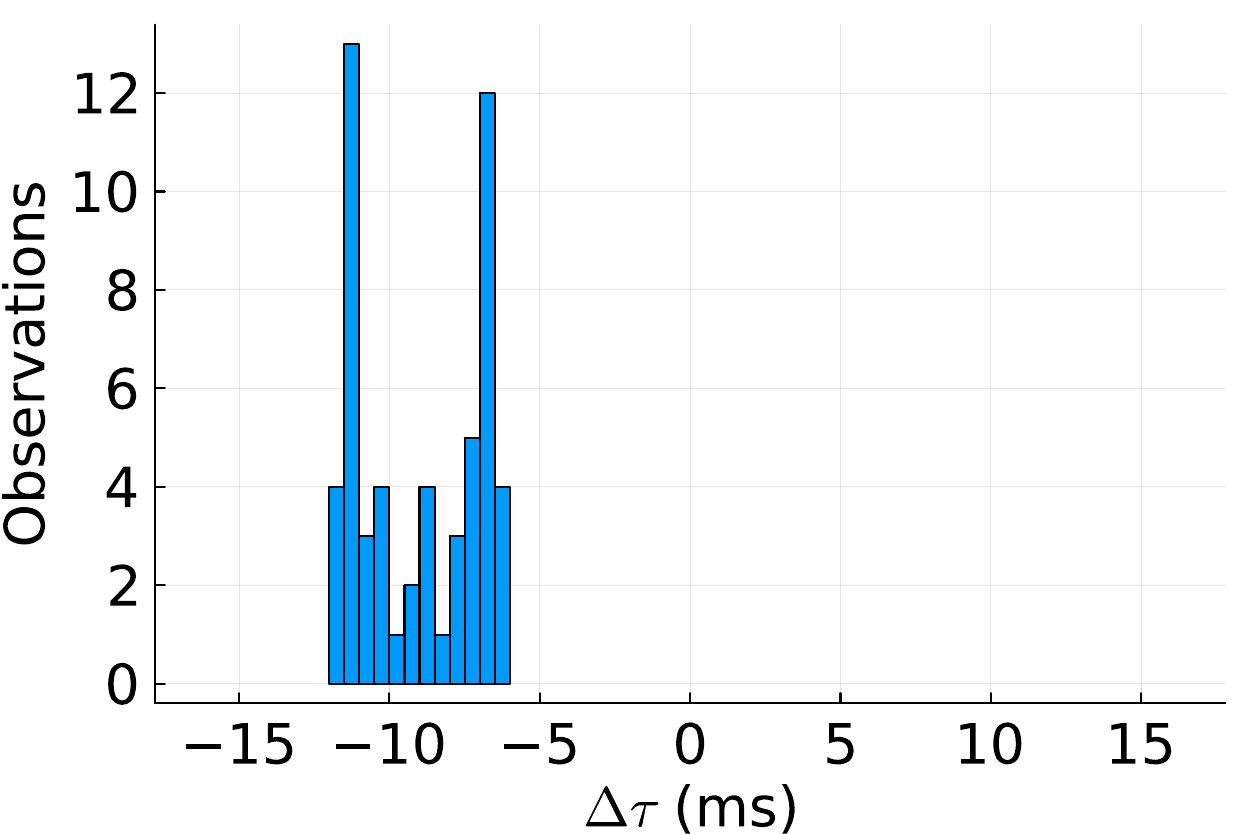}
\\
\end{tabular}
\caption{ (a,b,c) Time series of the firing rates of the E and I cells of the perturbed controlled system \eqref{Epop}-\eqref{Ipop}, $r_e^{up}$ (red), $r_i^{up}$ (blue) and the external input $p_1(t)$ (black) and (d,e,f) distribution of the time intervals $\Delta{\tau_k}$ defined in \eqref{eq:deltatauk}, for three representative cases (low, medium and high synchronization index, respectively) corresponding to the crosses in Figure \ref{fig:sync1input}:  (a,d) $T_1 = T$, $A=0.035$, (b,e) $T_1 = 1.1T$, $A=0.035$ and (c,f) $T_1=1.2T$, $A=0.035$. }
\label{fig:threeg1inp}
\end{figure}

Phase-locking is not enough to conclude that there is communication. For instance, in case the input has the same period than the target population, there is phase-locking  (see dashed blue curve in Figure~\ref{fig:sync1input}), but we know from \cite{ReynerHuguet22} that the input does not communicate with the target population, that is, there are no changes in the firing rate of the target population due to changes in the amplitude of the input. 

To be able to effectively measure the communication between the input and the target network, we compute the factor $\Delta \alpha$, defined in \eqref{eq:Dalpha}, which measures the amplification of the E-firing rate due to the external input. To compute $\Delta \alpha$ in the absence of control, we modify the formula in \eqref{eq:Dalpha} replacing $r_e^{up}$ by $r_e^p$ and $\bar r_e^u$ by $\bar r_e$.

\begin{figure}
    \centering
    \includegraphics[scale=0.5]{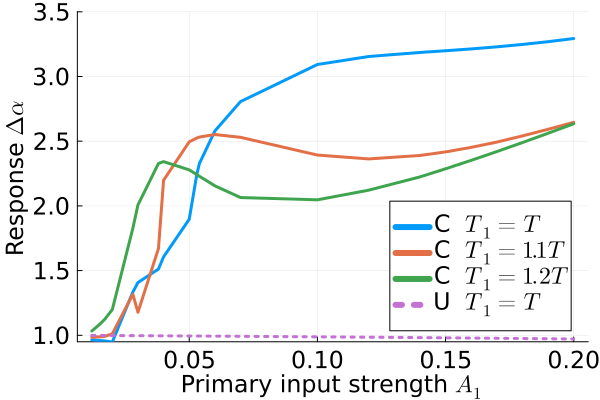}
    \caption{The factor $\Delta\alpha$ defined in \eqref{eq:Dalpha} for the perturbed controlled system \eqref{Epop}-\eqref{Ipop} with an external periodic input $p_1$ as a function of the amplitude of the input $A_1$ for three different values of the input period, $T_1=T$ (blue), $T_1=1.1 \, T$ (orange) and $T_1= 1.2 T$ (green). For the uncontrolled case, $\Delta \alpha$ is only computed for  $T_1=T$, since there is no synchronization for the other periods. The computation of $\Delta \alpha$ is only performed if the SI is above 0.8, which explains why the curves start at different values of $A_1$}
    \label{fig:responses1}
\end{figure}

In Figure \ref{fig:responses1} we illustrate the amplification of the response of the E-cells for the three different input periods $T_1$ in Figure~\ref{fig:sync1input}. Notice that the presence of the control strongly enhances the response of the E-cells and therefore the communication: for the case $T_1=T$, $\Delta \alpha$ remains at 1 for the uncontrolled system even when the input strength is increased indicating that the target network ignores the input (see the dashed purple curve in Figure \ref{fig:responses1}) while $\Delta \alpha$ reaches values around 3 for the controlled system (see blue curve in Figure \ref{fig:responses1}). The response is particularly enhanced in the case $T_1=T$.
As in the case of the synchronization index, we observe better performance the longer the perturbation period (e.g., green curve) for low values of the amplitude $A_1$ which can also be explained from the observations drawn from Figure~\ref{fig:threeg1inp}. 
 
\subsection{Control-induced selective communication}\label{sec:DoubleInput}

In this section, we perturb the E-I network with two inputs $p_1$ and $p_2$, which we refer to as the \emph{primary input} and the \emph{distractor}, respectively. As in Section \ref{sec:SingleInput}, we design the control according to the period of the primary (see equation \eqref{eq:tf}) and we explore whether the postsynaptic population (E-I network) responds to the primary input while ignoring the distractor, thus establishing selective communication \cite{Fries15}. We know from \cite{ReynerHuguet22} that for an input with a period higher than the natural gamma cycle of the postsynaptic network the communication is not effective and it is easily disrupted by a distractor. Here we show that the control can change the situation.

We use the same control obtained in the previous section and, following \eqref{eq:Ik}, we apply the external inputs $I_e$ and $I_i$ to system \eqref{Epop}-\eqref{Ipop} given by 
\begin{equation}\label{eq:input2}
\begin{array}{rl}
I_e(t) &=\bar{I}_e + \tau_e p_1(t) + \tau_e p_2 (t), \\
I_i(t) & =\bar{I}_i + \tau_i p_1(t) + \tau_i p_2 (t), \\
\end{array}
\end{equation}
where $p_1$ and $p_2$ are modelled by periodic von Mises functions \eqref{eq:vm_input} with $\kappa=12$, $T_2=0.8T$ and varying the period of the primary $T_1$ and the amplitudes of both inputs $A_1$ and $A_2$.

We first compute the synchronization index between the postsynaptic population and a primary input $p_1$ in the presence of a distractor, for the three different periods $T_1$ already considered in the previous section (see Figure \ref{fig:syncindexes}, where each panel corresponds to a different $T_1$). Each panel shows the changes in the SI as the strength $A_1$ is varied; different colors correspond to different values of the distractor's strength $A_2$. We observe that for large values of $A_1$ the synchronization index is lower when $A_2$ increases. However, when the strength of the primary is weak (low values of $A_1$), the situation is reversed. This can be explained by the fact that, when the primary is weak, it cannot entrain the network by itself and when the distractor volleys coincide, on some cycles, with the primary input volleys, the distractor helps to elicit a response of the postsynaptic group, thus enhancing synchronization, while the distractor volleys are not affecting much the postsynaptic group when they do not coincide with the primary ones. However, once the strength of the primary input $A_1$ is large enough, the primary input is capable by itself to entrain the network and the distractor only slightly distorts the entrainment by the primary. This distorsion is, of course, more noticeable if the strength of the distractor is larger, showing a lower SI for larger values of $A_2$.

We also observe that for periods $T_1$ of the primary input equal or bigger than $1.1T$, the distractor prevents the target population from getting entrained by the primary for amplitudes $A_2\geq 0.07$ (SI is below 0.8). Compare panels (a) and (c) of Figure~\ref{fig:syncindexes}.

\begin{figure}
\begin{tabular}{lll}
(a) & (b) & (c) \\
\includegraphics[width=0.32\textwidth]{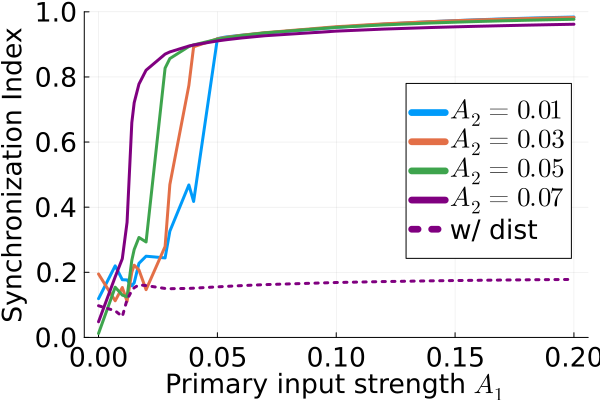} &
\includegraphics[width=0.32\textwidth]{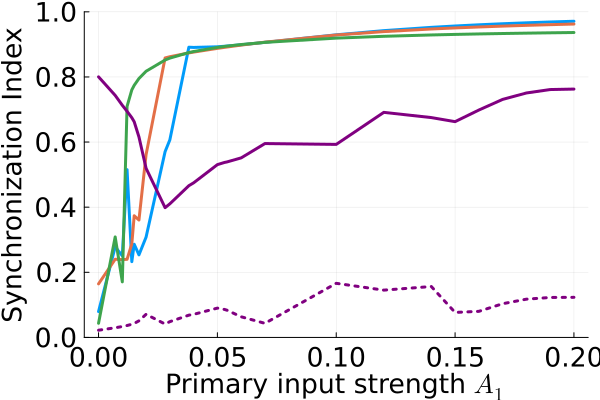} &
\includegraphics[width=0.32\textwidth]{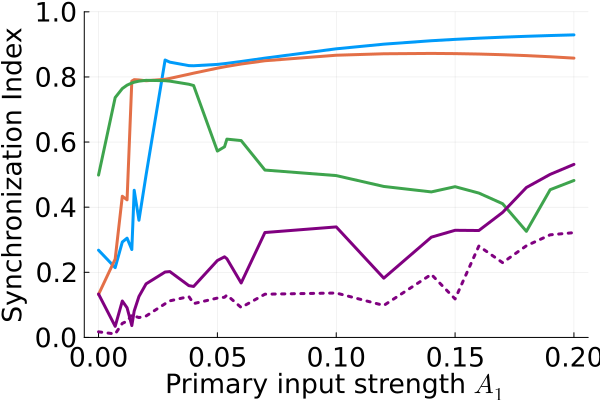} 
\end{tabular}
\caption{Synchronization index $r$ between the E-cells and the primary input $p_1$ of period $T_1$ (solid curves) and the distractor $p_2$ of period $T_2=0.8T$ (dashed curves) for the controlled system \eqref{Epop}-\eqref{Ipop} and for different values of the distractor strength $A_2$ (see legend). The periods of the primary are (a) $T_1 = T$, (b) $T_1 = 1.1T$ and (c) $T_1=1.2 T$.}
\label{fig:syncindexes}
\end{figure}

In addition to compute the SI, we also compute the amplification factor $\Delta \alpha$ defined in \eqref{eq:Dalpha} for cases in Figure \ref{fig:syncindexes} for which the SI is large enough (see Figure \ref{fig:deltaalphas}). We can observe that the network is sensitive to the input as long as we have a high enough SI with the primary. Thus, our control strategy is able to establish selective communication in the mean-field models for populations of neurons when the amplitude of the distractor remains reasonable.

\begin{figure}
\begin{tabular}{lll}
(a) & (b) & (c) \\
\includegraphics[width=0.33\textwidth]{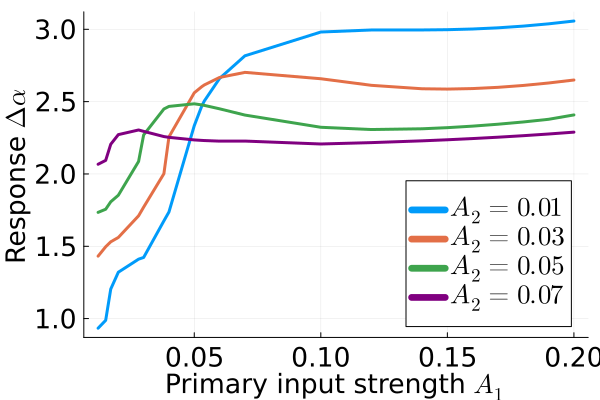} &
\includegraphics[width=0.33\textwidth]{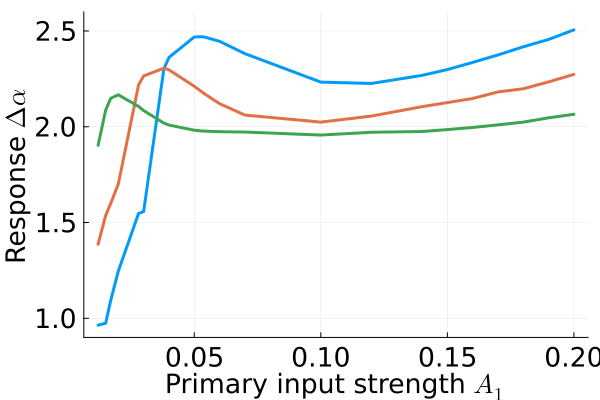} &
 \includegraphics[width=0.33\textwidth]{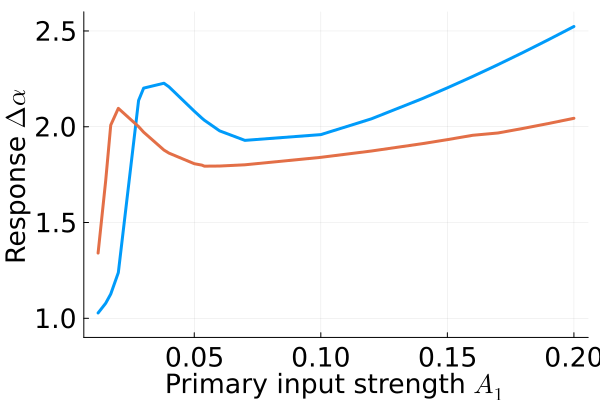} \\
 \end{tabular}
\caption{Amplification of the response of the target system due to changes in the strength $A_1$ of the primary input. Factor $\Delta\alpha$ for the controlled system as a function of the primary input strength $A_1$ in the presence of a distractor $p_2$ with $T_2=0.8T$ for different values of the distractor strength $A_2$ (see legend) and different periods of the primary (a) $T_1 = T$ (b) $T_1 = 1.1T$ (c) $T_1=1.2T$. The factor $\Delta \alpha$ has been computed only in those cases in which the SI index is above 0.6, see Figure \ref{fig:syncindexes}.}
\label{fig:deltaalphas}
\end{figure}

\section{Discussion}\label{sec:discussion}

In this paper we have studied how we can design an optimal-control strategy to control the phase of an oscillator in the context of communication through coherence (CTC) theory. To do so, we have designed a control strategy based on Pontryagin's Maximum Principle, involving a description of the dynamics using phase and amplitude variables, which guarantee a better control of the phase dynamics when the trajectory is displaced away from the limit cycle. 

Specifically, we have designed a mathematical setting to explain how a top-down input, represented by the control, can modify the dynamics of an oscillating postsynaptic group so that its oscillations synchronize with a given presynaptic input. As a result, we provide an explanation of how communication can be established, even in those cases in which the frequency of the presynaptic oscillatory input is not suitable for communication \cite{ReynerHuguet22}. We stress here that we interpret communication as the increase in the firing rate of the E-cells of the target network due to changes in the input strength.

We emphasize that we have presented novel theoretical results regarding the controllability of a system close to a limit cycle (see Proposition \ref{LocalCtrl}, which provides sufficient conditions to guarantee this controllability). Before designing the control for our system of interest (E-I network system \eqref{Epop}-\eqref{Ipop}) we have checked that it satisfies the hypothesis of Proposition \ref{LocalCtrl}. Moreover, to illustrate the relevance of our result we have applied it to other models in neuroscience having oscillatory dynamics, for which it is usual to apply phase-control techniques \cite{Moehlis2013}.

Optimal-control strategies for single neuron models (or a small number of neurons) have been previously investigated in \cite{Brown04,MoehlisT,Moehlis}, where the authors proposed a procedure for determining optimal control based on the phase or the first-order phase-amplitude reduction. In this work, we build on these methods by considering higher-order terms in the amplitude variable, allowing us to treat the problem beyond the weak coupling approximation. Additionally, we adopt a Hamiltonian formulation for the optimal-control problem based on Pontryagin's Maximum Principle, in contrast to the Lagrangian formulation \cite{MoehlisT} or the Hamilton-Jacobi-Bellman approach used in \cite{Moehlis2013}.

In the practical application of the control problem \eqref{OCP2} to the E-I network system, we emphasize the utilization of the linear approximation in $\sigma$ for the terms $Z^v(\theta,\sigma)$ and $I^v(\theta,\sigma)$. This choice ensures more precise results for the phase dynamics.  In fact, one could obtain even higher accuracy, especially when the trajectory deviates significantly from the original oscillator, by using a higher order approximation for the aforementioned terms, following the methodology presented in \cite{Perez_Cervera20}. 

We point out that the control has been optimized for a single cycle and then applied periodically throughout the $N$ cycles of the full simulation. Note that the initial conditions of each cycle are different from the first one and so our approach hinders the control from remaining optimal for the entire duration of the simulation. 
Instead, we could have optimized its action on the complete simulation, that is, solving the \eqref{OCP2} problem for a final time $N\,t_f$, which would provide an optimal result; however, it would depend on the number of cycles thus compromising the clarity of the exposition. Since our purpose was providing a proof of concept of the application of control theory for this problem, here we have preferred to use the suboptimal approach. The development of an optimal control that ensures optimality for the full simulation is left for future work.

We also stress that applying the same control over multiple periods displaces the trajectory away from the original oscillator. For this reason, we have not penalized the distance to the limit cycle in the cost function. If an optimal control were to be designed for the full simulation, as suggested in the previous paragraph, displacement from the original oscillator could be avoided by imposing additional constraints on the control function, such as including the amplitude penalization in the cost function (controlled by the parameter $\alpha$ in equation~\eqref{eq:cost2}). 

We emphasize that the models considered for CTC strike a balance between realism and the ability to draw insights from experimental findings. Additionally, we intentionally designed the study to be applicable to a broad range of brain regions rather than focusing on specific ones. Finally, we want to highlight that our methodology has multiple applications beyond the field of computational neuroscience, which merit exploration in the future.

\section*{Acknowledgements}
Work produced with the support of the grant PID-2021-122954NB-I00 (MO, AG, GH) and PID-2022-137708NB-I00 (AG) funded by MCIN/AEI/ 10.13039/501100011033 and “ERDF: A way of making Europe” and the Maria de Maeztu Award for Centers and Units of Excellence in R\&D (CEX2020-001084-M). 
Authors want to thank Alberto P\'erez Cervera (UCM) for providing support with the numerical code for the Phase-Amplitude reduction. We also acknowledge the use of the UPC Dynamical Systems group’s cluster for research computing \texttt{https://dynamicalsystems.upc.edu/en/computing/}.

\section*{Appendix} \label{sec:Appendix}

In this section we apply the results of Proposition \ref{LocalCtrl} to some classical single cell models in neuroscience, to show local controllability around the limit cycle. We first present the analytical proof of local controllability for the 2D FitzHugh-Nagumo and Morris-Lecar models and later a numerical evidence for local controllability of the limit cycle for the classical Hodgkin-Huxley model.

Let us first consider the controlled FitzHugh–Nagumo \cite{Fitzhugh_61,Nagumo_1962}
\begin{equation}\label{eq:FNc}
    \begin{cases}
    \dot{V}=V-\frac{V^3}{3}-w+u\\
    \dot{w}=\epsilon(V-a-bw)
    \end{cases}
\end{equation} 
and the controlled Morris-Lecar model \cite{ML81},
\begin{equation}\label{eq:MLc}
\begin{cases}
\dot{V}=-g_L\,(V-V_L)-g_{Ca}\, m(V)(V-V_{Ca})-g_K\, w\,(V-V_K)+ I_{app} + u\\
\dot{w}=\frac{w_s(V)-w}{\tau_w(V)},
\end{cases}
\end{equation}
with \[m(V) = 1/2(1+\tanh(\frac{V-V_1}{V_2}),\] 
\[w_s(V)=1/2(1+\tanh(\frac{V-V_3}{V_4})),\] and 
\[\tau_w(V)=\dfrac{1}{\varphi\cosh(\frac{V-V_3}{2V_4})}.\]

\begin{corollary}
The FitzHugh–Nagumo model \eqref{eq:FNc}
and the Morris-Lecar model \eqref{eq:MLc} are controllable in a neighborhood of their periodic orbits (whenever the parameters allow such orbits).
\end{corollary}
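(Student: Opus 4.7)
The plan is to verify the hypotheses of Proposition \ref{LocalCtrl} for each model. Taking the control set $U$ to contain a neighborhood of $0$ (as is implicitly done throughout the paper), condition (i) is immediate. Both systems are planar ($n=2$) and control-affine with the constant vector field $F_1=(1,0)^\top$, so condition (ii) reduces to showing $\det(F_1,[F_0,F_1])\neq 0$ at some point of the periodic orbit $\Gamma$. Since $F_1$ is constant, $[F_0,F_1]=-DF_0\cdot F_1$ is (up to a sign) the first column of $DF_0$, and a short calculation yields
$$\det\bigl(F_1,[F_0,F_1]\bigr) \;=\; -\,\partial_V\dot w$$
evaluated on $\Gamma$. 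The whole question thus collapses to deciding whether $\partial_V\dot w$ vanishes identically on $\Gamma$.

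For the FitzHugh--Nagumo system, $\partial_V\dot w=\epsilon$ is a nonzero constant, so the rank condition holds at every point of $\Gamma$ and Proposition \ref{LocalCtrl} applies at once.

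For the Morris--Lecar system, a direct computation gives
$$\partial_V\dot w \;=\; \frac{w_s'(V)}{\tau_w(V)} \;-\; \frac{(w_s(V)-w)\,\tau_w'(V)}{\tau_w(V)^2}.$$
Because $w_s$ is a strictly increasing $\tanh$, $w_s'(V)>0$ for every $V$, so the first summand never vanishes. The main obstacle is ruling out exact cancellation by the second summand at every point of $\Gamma$; I will handle this with a topological observation. On any periodic orbit the coordinate $w$ cannot be monotone, so $\dot w$ must change sign, which forces $\Gamma$ to cross the $w$-nullcline $\{w=w_s(V)\}$. At such a crossing the second summand disappears and $\partial_V\dot w = w_s'(V)/\tau_w(V)>0$, giving the required linear independence of $F_1$ and $[F_0,F_1]$. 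Proposition \ref{LocalCtrl} then applies, completing the proof.
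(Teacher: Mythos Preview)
Your proof is correct, and for FitzHugh--Nagumo it coincides with the paper's: both simply observe that the second component of $[F_0,F_1]$ equals $\pm\epsilon\neq 0$.

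For Morris--Lecar the two arguments diverge. The paper computes the second component of $[F_0,F_1]$ and argues algebraically that the identity $w_s'(V)\tau_w(V)=(w_s(V)-w)\tau_w'(V)$ cannot hold (it claims this would force $w_s(V)-w=c\,\tau_w(V)$ or else $w=w_s(V)$ with $w_s'(V)=0$, and rules both out from the explicit functional forms). Your route is topological rather than algebraic: you note that on any non-trivial periodic orbit $w$ cannot be monotone, hence $\dot w$ vanishes somewhere, i.e.\ $\Gamma$ meets the $w$-nullcline $\{w=w_s(V)\}$; at such a point the second summand drops out and $\partial_V\dot w=w_s'(V)/\tau_w(V)>0$. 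Your argument is cleaner and more robust---it needs only that $w_s$ is strictly monotone and $\tau_w>0$, not the detailed shapes of these functions---whereas the paper's algebraic step, as written, relies on the specific $\tanh$/$\cosh$ expressions. Both approaches establish the rank condition at a single point of $\Gamma$, which is all Proposition~\ref{LocalCtrl} requires.
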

\begin{proof}
For the FitzHugh-Nagumo model \eqref{eq:FNc} we have $F_0=\begin{pmatrix}V-\frac{V^3}{3}-w\\
\epsilon(V-a-bw)
\end{pmatrix}
$
and $F_1=\begin{pmatrix} 1 \\
0
\end{pmatrix}.$
Taking the first Lie bracket is enough: $[F_0,F_1]=\begin{pmatrix} 1-V^2 \\
\epsilon
\end{pmatrix}.$ Thus, the hypothesis of Proposition \ref{LocalCtrl} is satisfied as long as $\epsilon\neq 0$.

For the Morris-Lecar model \eqref{eq:MLc} we have \[F_0(V,w)=\begin{pmatrix}-g_{Ca} m(V)(V-V_{Ca})- g_K w(V-V_K) - g_L(V-V_L)\\
\dfrac{w_s(V)-w}{\tau_w(V)}
\end{pmatrix},\]
and
\[F_1=\begin{pmatrix}1\\
0
\end{pmatrix}.\]
We compute 
\[[F_0,F_1]=\begin{pmatrix}
g_{Ca} m'_V(V)(V-V_{Ca})+ g_{Ca} m(V)+g_Kw+ g_L\\
\\
\dfrac{w'_s(V)\tau_w(V)-(w_s(V)-w)\tau'_w(V)}{\tau^2_w(V)}
\end{pmatrix}.\] 
Notice that the vectors $F_1$ and $[F_0,F_1]$ are enough to generate linearly the tangent space except when $w'_s(V)\tau_w(V)-(w_s(V)-w)\tau'_w(V)=0$. But this implies $w_s(V)-w=c\tau_w(V)$ for some constant $c$, which cannot be true given their definitions, or on $w=w_s(V)$ and $w'_s(V)=0$, which never occurs given the definition of $w_s(V)$. Thus we have concluded the proof.
\end{proof}

\begin{figure}
\begin{tabular}{ll}
(a) & (b) \\
\includegraphics[scale=0.3]{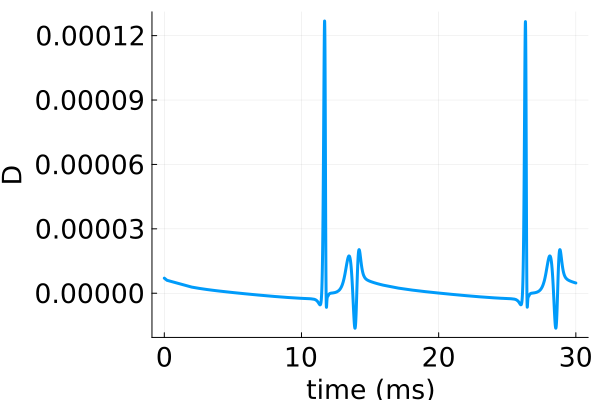} &
\includegraphics[scale=.3]{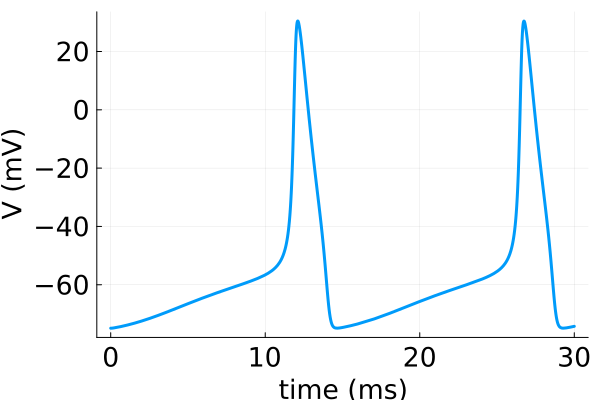} \\
\end{tabular}
\caption{(a) Determinant of the matrix defined in \eqref{matrixcont} and (b) Voltage coordinate along the periodic orbit $\Gamma$ of the Hodgkin-Huxley model \eqref{HHmodel} with the following set of parameters: 
$C = 1$ ($\mu\mbox{F}/\mbox{cm}^2$), 
$g_{Na} = 120,
g_K = 36,
g_L = 0.3$ ($\mbox{mS}/\mbox{cm}^2$),                                                
$V_{Na} = 40,
V_K = -77,
V_L = -54.4$ (mV),
$I = 10$  ($\mu\mbox{A}/\mbox{cm}^2$).
}
\label{fig:dual}
\end{figure}

For higher dimensional models like the Hodgkin-Huxley model \cite{HH}, the algebraic expressions become more complicate. In spite of this, one can numerically check that the Lie brackets with the drift generate the whole tangent space at least in one point of the periodic orbit. 

The following numerical analysis provides local controllability for the classical Hodgkin-Huxley (HH) model, for a given set of the parameters. 
The dynamical system writes as 
\begin{equation}\label{HHmodel}
\begin{cases}
    C\dot{V}= -g_L(V-V_L)-g_K n^4(V-V_K)-g_{Na}m^3 h(V-V_{Na})+I, \\
    \dot{n} = \alpha_n(V)(1-n)-\beta_n(V)n,\\
    \dot{m} = \alpha_m(V)(1-m)-\beta_m(V)m,\\
    \dot{h} = \alpha_h(V)(1-h)-\beta_h(V)h,\\
\end{cases}
\end{equation}
with
$$
\begin{array}{rlrl}
    \alpha_n (V)&= \displaystyle\frac{0.01\,(V+55)}{1-\exp(\frac{V+55}{10})},&
    \beta_n (V)&= 0.125\,\displaystyle\exp\left(-\frac{V+65}{80}\right);\\  & & & \\
    \alpha_m (V)&= \displaystyle\frac{0.1\,(V+40)}{1-\exp(-\frac{V+40}{10})}, &
    \beta_m (V)&= 4\,\displaystyle\exp\left(-\frac{V+65}{18}\right);\\ & & & \\
    \alpha_h (V)&= 0.07\,\displaystyle\exp\left(\frac{-(V+65)}{20}\right),&
    \beta_h (V)&= \displaystyle\frac{1}{1-\exp(-\frac{V+35}{10})}.
\end{array}
$$
Let us consider a choice of parameters such that the HH system has a limit cycle $\Gamma$ parameterized by the function $\gamma(t)$, where $t$ is the time. Let us set, 
\begin{equation}
A(t) = (F_1, [F_0,F_1], [F_0,[F_0,F_1]], [F_0,[F_0,[F_0,F_1]]])(\gamma(t)),
\end{equation}
and $D(t)=\det(A(t))$. We can numerically check that $D \not\equiv 0$ along the periodic orbit. In Figure \ref{fig:dual} we plot the function $D$ along $\Gamma$ for a particular choice of the parameters specified in the caption. Notice that the dependence in the voltage coordinate is strong, as it becomes non zero when the neuron spikes. Compare the positions of the peaks in Figures \ref{fig:dual} (a) and (b).

Thus, for this particular set of parameters, we can conclude that the model is controllable in a neighborhood of $\Gamma$ according to Proposition \ref{LocalCtrl}.

\bibliographystyle{plain}
\bibliography{references}

\begin{thebibliography}{10}

\bibitem{Agrachev04}
A.A. Agrachev and Y.L. Sachkov.
\newblock {\em Control theory from the geometric viewpoint}, volume~87 of {\em
  Encyclopaedia of Mathematical Sciences}.
\newblock Springer-Verlag, Berlin, 2004.
\newblock Control Theory and Optimization, II.

\bibitem{Bosman12}
C.A. Bosman, J.M. Schoffelen, N.~Brunet, R.~Oostenveld, A.M. Bastos,
  T.~Womelsdorf, B.~Rubehn, T.~Stieglitz, P.~De~Weerd, and P.~Fries.
\newblock Attentional stimulus selection through selective synchronization
  between monkey visual areas.
\newblock {\em Neuron}, 875-888:75, 2012.

\bibitem{Buzsaki06}
G.~Buzsaki.
\newblock {\em Rhythms of the Brain}.
\newblock Oxford University Press, 2006.

\bibitem{CFL05}
X.~Cabr{\'e}, E.~Fontich, and R.~de~la Llave.
\newblock {The parameterization method for invariant manifolds. {III}.
  {O}verview and applications}.
\newblock {\em J. Differential Equations}, 218(2):444--515, 2005.

\bibitem{CGH13}
O.~Castej{\'o}n, A.~Guillamon, and G.~Huguet.
\newblock {Phase-amplitude response functions for transient-state stimuli}.
\newblock {\em The Journal of Mathematical Neuroscience (JMN)}, 3(1):1--26,
  2013.

\bibitem{CLMJ15}
R.~Castelli, J.-P. Lessard, and J.D. Mireles~James.
\newblock Parameterization of invariant manifolds for periodic orbits i:
  Efficient numerics via the floquet normal form.
\newblock {\em SIAM Journal on Applied Dynamical Systems}, 14(1):132--167,
  2015.

\bibitem{Chang}
A.~Chang.
\newblock An algebraic characterization of controllability.
\newblock {\em IEEE Transactions on Automatic Control}, 10(1):112--113, 1965.

\bibitem{Coron07}
J.-M. Coron.
\newblock {\em Control and nonlinearity}, volume 136 of {\em Mathematical
  Surveys and Monographs}.
\newblock American Mathematical Society, Providence, RI, 2007.

\bibitem{Montbrio2}
F.~Devalle, A.~Roxin, and E.~Montbrió.
\newblock Firing rate equations require a spike synchrony mechanism to
  correctly describe fast oscillations in inhibitory networks.
\newblock {\em PLOS Computational Biology}, 13(12):1--21, 12 2017.

\bibitem{Dumont19}
G.~Dumont and B.~Gutkin.
\newblock Macroscopic phase resetting-curves determine oscillatory coherence
  and signal transfer in inter-coupled neural circuits.
\newblock {\em PLOS Computational Biology}, 15(5):1--34, 05 2019.

\bibitem{Engel01}
A.K. Engel, P.~Fries, and W.~Singer.
\newblock Dynamic predictions: Oscillations and synchrony in top–down
  processing.
\newblock {\em Nature Reviews Neuroscience}, 2(10):704--716, 2001.

\bibitem{ErmentroutTerman10}
B.~Ermentrout and D.H. Terman.
\newblock {\em {Mathematical Foundations of Neuroscience}}.
\newblock Springer, 2010.

\bibitem{EK91}
G.B. Ermentrout and N.~Kopell.
\newblock Multiple pulse interactions and averaging in systems of coupled
  neural oscillators.
\newblock {\em Journal of Mathematical Biology}, 29:195--217, 1991.

\bibitem{Fitzhugh_61}
R.~FitzHugh.
\newblock Impulses and physiological states in theoretical models of nerve
  membrane.
\newblock {\em Biophysical Journal}, 1(6):445--466, 1961.

\bibitem{Fries05}
P.~Fries.
\newblock A mechanism for cognitive dynamics: neuronal communication through
  neuronal coherence.
\newblock {\em Trends in Cognitive Sciences}, 9(10):474--480, 2005.

\bibitem{Fries15}
P.~Fries.
\newblock Rhythms for cognition: Communication through coherence.
\newblock {\em Neuron}, 88(1):220--235, 2015.

\bibitem{Fries01}
P.~Fries, J.H. Reynolds, A.E. Rorie, and R.~Desimone.
\newblock Modulation of oscillatory neuronal synchronization by selective
  visual attention.
\newblock {\em Science}, 291(5508):1560--1563, 2001.

\bibitem{Fries08}
P.~Fries, T.~Womelsdorf, R.~Oostenveld, and R.~Desimone.
\newblock The effects of visual stimulation and selective visual attention on
  rhythmic neuronal synchronization in macaque area v4.
\newblock {\em J Neurosci.}, 28(18):4823–4835, 2008.

\bibitem{Friston2010}
K.~Friston.
\newblock The free-energy principle: a unified brain theory?
\newblock {\em Nature Reviews Neuroscience}, 11(2):127--138, January 2010.

\bibitem{Friston2006}
K.~Friston, J.~Kilner, and L.~Harrison.
\newblock A free energy principle for the brain.
\newblock {\em Journal of Physiology-Paris}, 100(1-3):70--87, July 2006.

\bibitem{Gu2017}
S.~Gu, R.F. Betzel, M.G. Mattar, M.~Cieslak, P.R. Delio, S.T. Grafton,
  F.~Pasqualetti, and Bassett. D.S.
\newblock Optimal trajectories of brain state transitions.
\newblock {\em {NeuroImage}}, 148:305--317, March 2017.

\bibitem{Guc}
J.~Guckenheimer.
\newblock Isochrons and phaseless sets.
\newblock {\em J. Math. Biol.}, 1(3):259--273, 1974/75.

\bibitem{GH09}
A.~Guillamon and G.~Huguet.
\newblock A computational and geometric approach to phase resetting curves and
  surfaces.
\newblock {\em SIAM J. Appl. Dyn. Syst.}, 8(3):1005--1042, 2009.

\bibitem{Carat}
J.K. Hale.
\newblock {\em Ordinary Differential Equations (2nd ed.)}.
\newblock Robert E. Krieger Publishing Company, 1980.

\bibitem{HPS}
M.W. Hirsch, C.C. Pugh, and M.~Shub.
\newblock {Invariant manifolds}.
\newblock {\em Bulletin of the American Mathematical Society}, 76(5):1015 --
  1019, 1970.

\bibitem{HH}
A.L. Hodgkin and A.F. Huxley.
\newblock Currents carried by sodium and potassium ions through the membrane of
  the giant axon of {\it loligo}.
\newblock {\em The Journal of Physiology}, 116(4):449--472, April 1952.

\bibitem{Hoptopdown}
J.B. Hopfinger, M.H. Buonocore, and G.R. Mangun.
\newblock The neural mechanisms of top-down attentional control.
\newblock {\em Nature Neuroscience}, 3:284--291, 2000.

\bibitem{Moehlis}
B.~Monga and J.~Moehlis.
\newblock Optimal phase control of biological oscillators using augmented phase
  reduction.
\newblock {\em Biological Cybernetics}, 113(1):161--178, Apr 2019.

\bibitem{MoehlisT}
B.M. Monga, D.~Wilson, T.D. Matchen, and J.~Moehlis.
\newblock Phase reduction and phase-based optimal control for biological
  systems: a tutorial.
\newblock {\em Biological Cybernetics}, 113:11 -- 46, 2018.

\bibitem{Montbrio1}
E.~Montbri\'o, D.~Paz\'o, and A.~Roxin.
\newblock Macroscopic description for networks of spiking neurons.
\newblock {\em Phys. Rev. X}, 5:021028, Jun 2015.

\bibitem{ML81}
C.E. Morris and H.~Lecar.
\newblock Voltage oscillations in the barnacle giant muscle fiber.
\newblock {\em Biophysical journal}, 35(1):193--213, 1981.

\bibitem{Moehlis2013}
A.~Nabi, M.~Mirzadeh, F.~Gibou, and J.~Moehlis.
\newblock Minimum energy desynchronizing control for coupled neurons.
\newblock {\em Journal of Computational Neuroscience}, 34:259 -- 271, 2012.

\bibitem{Nagumo_1962}
J.-I. Nagumo, S.~Arimoto, and S.~Yoshizawa.
\newblock An active pulse transmission line simulating nerve axon.
\newblock {\em Proceedings of the IRE}, 50:2061--2070, 1962.

\bibitem{Perez_Cervera20}
A.~P{\'e}rez-Cervera, T.~M.-Seara, and G.~Huguet.
\newblock Global phase-amplitude description of oscillatory dynamics via the
  parameterization method.
\newblock {\em Chaos}, 30 8:083117, 2020.

\bibitem{PCSH_cnsns20}
A.~P{\'e}rez-Cervera, T.~M.-Seara, and G.~Huguet.
\newblock Phase-locked states in oscillating neural networks and their role in
  neural communication.
\newblock {\em Communications in Nonlinear Science and Numerical Simulation},
  80:104992, 2020.

\bibitem{Pikovsky01}
A.~Pikovsky, M.~Rosenblum, and J.~Kurths.
\newblock Synchronization - a universal concept in nonlinear sciences.
\newblock In {\em Cambridge Nonlinear Science Series}, 2001.

\bibitem{DiffEqjl}
C.~Rackauckas and Q.~Nie.
\newblock Differentialequations.jl – a performant and feature-rich ecosystem
  for solving differential equations in julia.
\newblock {\em Journal of Open Research Software}, 5:15, 2017.

\bibitem{ReynerHuguet22}
D.~Reyner-Parra and G.~Huguet.
\newblock Phase-locking patterns underlying effective communication in exact
  firing rate models of neural networks.
\newblock {\em PLOS Computational Biology}, 18(5):1--41, 05 2022.

\bibitem{SarafY19}
S.~Saraf and L.-S. Young.
\newblock Malleability of gamma rhythms enhances population-level correlations.
\newblock {\em Journal of Computational Neuroscience}, 49:189 -- 205, 2021.

\bibitem{Schoffelen11}
J.M. Schoffelen, J.~Poort, R.~Oostenveld, and P.~Fries.
\newblock Selective movement preparation is subserved by selective increases in
  corticomuscular gamma-band coherence.
\newblock {\em J Neurosci.}, 31(18):6750–6758, 2011.

\bibitem{SPB11}
N.W. Schultheiss, A.A. Prinz, and R.J. Butera.
\newblock Phase response curves in neuroscience.
\newblock In {\em Springer Series in Computational Neuroscience}, 2012.

\bibitem{Brown04}
E.T. Shea-Brown, J.~Moehlis, and P.~Holmes.
\newblock On the phase reduction and response dynamics of neural oscillator
  populations.
\newblock {\em Neural Computation}, 16:673--715, 2004.

\end{thebibliography}
\end{document}